\def\fullversion
\newcommand{\cmark}{\ding{51}}%
  \newcommand{\fname}[1]{\textit{#1}}
  \newcommand{\textfunc}[1]{\mf{#1}}
  \newcommand{\nodecircle}[1]{{\textcircled{\footnotesize{#1}}}}
  \newcommand{\bmath}[1]{\bm{#1}}
  \newcommand{\true}{\emph{true}\xspace}
  \newcommand{\zero}{\textit{zero}\xspace}
  \newcommand{\one}{\textit{one}\xspace}
  \newcommand{\parent}{\mathit{parent}\xspace}
  \newcommand{\degree}{\mathit{degree}\xspace}
  \newcommand{\flag}{\mathit{flag}\xspace}
  \newcommand{\frontier}{frontier\xspace}
  \newcommand{\dg}{dependence graph\xspace}
  \newcommand{\DG}{DG\xspace}
  \newcommand{\mathpivot}{\mathit{pivot}\xspace}
  \newcommand{\element}{element\xspace}
  \newcommand{\object}{object\xspace}
  \newcommand{\phaseparallel}{phase-parallel}
  \newcommand{\compatible}{compatible\xspace}
  \newcommand{\incompatible}{incompatible\xspace}
  \newcommand{\mis}{\mathit{MFS}}
  \newcommand{\mfs}{\mis}
  \newcommand{\rank}{\mathit{rank}}
  \newcommand{\feasible}{feasible\xspace}
  \newcommand{\ready}{ready\xspace}
  \newcommand{\finished}{finished\xspace}
  \newcommand{\unfinished}{unfinished\xspace}
  \newcommand{\obj}{\mathit{obj}}
  \newcommand{\setfam}{\mathcal{F}}
  \newcommand{\ind}{\mathcal{I}}
  \newcommand{\curset}{T\xspace}
  \newcommand{\dpvalue}{DP value\xspace}
  \newcommand{\DPvalue}{DP value\xspace}
  \newcommand{\MFS}{MFS\xspace}
  \newcommand{\block}{block\xspace}
  \newcommand{\pred}{\mathcal{P}}
  \newcommand{\rankbar}{\overline{\rank}}
  \newcommand{\relaxedrank}{relaxed rank}
  \newcommand{\mathdp}{\mathit{dp}\xspace}
  \newcommand{\mathrange}{\mathit{range}\xspace}
  \newcommand{\trange}{T_\mathrange}
  \newcommand{\tpivot}{T_\mathpivot}
  \newcommand{\todo}{\mathit{todo}}
  \newcommand{\pabst}{PA-BST\xspace}
  \newcommand{\timeend}{e}
  \newcommand{\timestart}{s}
  \newcommand{\ay}{A_{p_x}}
  \newcommand{\ax}{A_x}
  \newcommand{\leading}{pivot\xspace}
  \newcommand{\tastree}{TAS tree\xspace}
  \newcommand{\neighbor}{\mathcal{N}\xspace}
  \newcommand{\dmax}{d_{\mathit{max}}\xspace}
  \newcommand{\splitn}{\mf{Split}\xspace}
  \newcommand{\multifind}{\textfunc{Multi-Find}\xspace}
  \newcommand{\multiinsert}{\textfunc{Multi-Insert}\xspace}
  \newcommand{\multidelete}{\textfunc{Multi-Delete}\xspace}
  \newcommand{\multiupdate}{\textfunc{Multi-Update}\xspace}
  \newcommand{\forkins}{\texttt{fork}}
  \newcommand{\thread}{thread}
  \newcommand{\depth}{span\xspace}
  \newcommand{\TAS}{$\mf{TAS}$\xspace}
  \newcommand{\tas}{{\texttt{test\_and\_set}}\xspace}
  \newcommand{\Z}{\mathbb{Z}}
  \newcommand{\ifconference}{{{\ifx\fullversion\undefined}}}
  \newcommand{\iffullversion}{{{\unless\ifx\fullversion\undefined}}}
\def\dfnt@space@setup{%
  \dfnt@preskip=\parskip
    \dfnt@postskip=0pt}
\newtheoremstyle{exampstyle}
  {.05in} 
  {.05in} 
  {} 
  {.5em} 
  {\sc \bfseries} 
  {.} 
  {.5em} 
  {} 
\theoremstyle{exampstyle} 
\theoremstyle{exampstyle} 
\crefname{section}{Sec.}{Sec.}
\crefname{theorem}{Thm.}{Thm.}
\crefname{lemma}{Lem.}{Lem.}
\crefname{corollary}{Col.}{Col.}
\crefname{table}{Tab.}{Tab.}
\begin{document}
\fancyhead{}
\title[Many Sequential Iterative Algorithms Can Be Parallel and (Nearly) Work-efficient]{Many Sequential Iterative Algorithms Can Be \\Parallel and (Nearly) Work-efficient}
  \author{Zheqi Shen}
  \affiliation{\institution{UC Riverside}\country{}}
  \email{zheqi.shen@email.ucr.edu}
  \author{Zijin Wan}
  \affiliation{\institution{UC Riverside}\country{}}
  \email{zwan018@ucr.edu}
  \author{Yan Gu}
  \affiliation{\institution{UC Riverside}\country{}}
  \email{ygu@cs.ucr.edu}
  \author{Yihan Sun}
  \affiliation{\institution{UC Riverside}\country{}}
  \email{yihans@cs.ucr.edu}

\hide{
\begin{CCSXML}
<ccs2012>
   <concept>
       <concept_id>10003752.10003809.10010170.10010171</concept_id>
       <concept_desc>Theory of computation~Shared memory algorithms</concept_desc>
       <concept_significance>500</concept_significance>
       </concept>
   <concept>
 </ccs2012>
\end{CCSXML}
}


\begin{abstract}
  To design efficient parallel algorithms, some recent papers showed that many sequential iterative algorithms can be directly parallelized,
  by identifying the \emph{dependences} between the input objects.
  This approach yields many simple and practical parallel algorithms, but there are still challenges in achieving work-efficiency and high-parallelism.
  Work-efficiency means that the number of operations is asymptotically the same as the best sequential solution.
  This can be hard for certain problems where the number of dependences between objects is asymptotically more than optimal sequential work bound,
  and we cannot even afford to generate them.
  To achieve high-parallelism, we want to avoid waiting for ``false dependences'' and process as many objects as possible in parallel.
  The goal is to achieve $\tilde{O}(D)$ span for a problem with the deepest dependence length $D$.
  We refer to this property as \emph{round-efficiency}.
  In this paper, we show work-efficient and round-efficient algorithms for a variety of classic problems and propose general approaches to do so.

  To efficiently parallelize many sequential iterative algorithms, we propose the \emph{\phaseparallel{} framework}.
  The framework assigns a \emph{rank} to each object and processes the objects based on the order of their ranks.
  All objects with the same rank can be processed in parallel.
  To enable work-efficiency and high parallelism, we use two types of general techniques.
  Type 1 algorithms aim to use range queries to extract all objects with the same rank, such that we avoid evaluating all the dependences.
  We discuss activity selection, unlimited knapsack, Dijkstra's algorithm, and Huffman tree construction using Type 1 framework.
  Type 2 algorithms aim to \emph{wake up} an object when the last object it depends on is finished.
  We discuss activity selection, longest increasing subsequence (LIS), greedy maximal independent set (MIS), and many other algorithms using Type 2 framework.

  All of our algorithms are (nearly) work-efficient and round-efficient.
  Many of them improve previous best bounds, and some of them (e.g., LIS) are the first to achieve work-efficiency with round-efficiency.
  We also implement many of them.
  On inputs with reasonable dependence depth, our algorithms are highly parallelized and significantly outperform their sequential counterparts.
\end{abstract}

  \hide{In parallel algorithm design, the parallel dependence relationships between objects in the problem,
  are often illustrated by a \emph{dependence graphs},
  where objects are represented as vertices and an edge from $u$ to $v$ means $v$ cannot be executed before $u$ finishes. }



\maketitle


\makeatletter
\newcommand{\removelatexerror}{\let\@latex@error\@gobble}
\makeatother

\section{Introduction}
\label{sec:intro}
There are two goals in designing efficient parallel algorithms: to reduce work,
and to improve parallelism.
\emph{Work-efficiency}, meaning that the \emph{work} (total number of operations) is asymptotically the same as the best sequential solution,
is crucial for practical parallel algorithms.
This is because nowadays and for the foreseeable future, the number of processors in a machine (up to thousands) is roughly polylogarithmic to input sizes.
Hence, a parallel algorithm is less practical if it blows up the work of the best sequential algorithm by a polynomial factor.
In this paper, we show (nearly) work-efficient parallel algorithms for a list of classic problems.

Our work is motivated by a list of recent papers that directly parallelize some sequential iterative algorithms
(i.e., sequential algorithms that iteratively process input objects)~\cite{blelloch2012internally,BFS12,fischer2018tight,hasenplaugh2014ordering,pan2015parallel,shun2015sequential,blelloch2016parallelism,blelloch2018geometry,blelloch2020randomized,blelloch2020optimal}.
Their work-efficiency analysis usually directly follows the original sequential algorithm.
To achieve high parallelism from a sequential iterative algorithm,
the key is to identify the \emph{dependences} among ``objects'' (e.g., iterations, instructions, or input objects),
and process them in the proper order~\cite{jones1993parallel,gruevski2018laika,blelloch2012internally,BFS12,fischer2018tight,hasenplaugh2014ordering,pan2015parallel,shun2015sequential,blelloch2016parallelism,blelloch2018geometry,blelloch2020randomized,blelloch2020optimal}.
In particular, we want to avoid waiting for ``false dependences'' and to process as many objects as possible in parallel.
Such relationships can be modeled as a directed acyclic graph (DAG), referred to as the \defn{\dg{}} (\defn{\DG{}}).
Each object corresponds to a vertex in the \DG{},
and a directed edge from vertex $u$ to $v$ means $v$ can be executed only after $u$ is finished, and we say $v$ \defn{relies on} (or depends on) $u$.
\hide{
These papers focused on showing that the \DG{s} are shallow, usually logarithmic to the input size, and proposed the general approaches such as the iterative \DG{} (IDG)~\cite{shun2015sequential} and the configuration \DG{} (CDG)~\cite{blelloch2016parallelism}.
However, these tools can only show the \DG{s} are shallow, but they do not directly yield work-efficient parallel algorithms.
In most cases, we have to look into each problem individually and design such an algorithm.
}

There are two existing general frameworks to design parallel algorithms using \DG{s}, but they both have limitations.
The first framework is \emph{deterministic reservations}~\cite{blelloch2012internally} (also used in~\cite{shun2015sequential}).
These algorithms run in rounds, and in each round, check the unfinished objects, execute the ``ready'' objects in parallel, and postpone the rest to later rounds.
This framework provides good parallelism---for a \DG{} of depth $D$, we only need $O(D)$ rounds.
In this paper, we define a computation as \defn{round-efficient} if it executes a \DG{} with depth $D$ in $\tilde{O}(D)$ span (longest dependence of instructions in the algorithm, formally defined in \cref{sec:prelim})\footnote{We note that round-efficiency does not guarantee optimal span, since round-efficiency is with respect to a given \DG{}. One can re-design a completely different algorithm that has a shallower \DG{} and a better span.}.
Despite the round-efficiency, deterministic reservations do not guarantee work-efficiency---the work in the worst case is $O(Dm)$, where $m$ is the number of edges of the \DG{} (using a topological sort sequentially only takes $O(m)$ work).
The second approach was proposed by Blelloch et al.~\cite{blelloch2020optimal} to prove work-efficiency of \DG{}-based algorithms,
but it only applies to when each vertex in the \DG{} has a constant in-degree.
As a result, most of the existing algorithms do not fit in these two frameworks,
and each~\cite{blelloch2016parallelism,blelloch2018geometry,blelloch2020randomized,blelloch2020optimal} uses specific design and analysis to get the work and span bounds (if any).
Moreover, previous approaches are \emph{edge-centric}---all edges in the \DG{s} are examined to find ready vertices/objects.
We observed that in many cases, even generating all edges in the \DG{} is work-inefficient.
Consider the classic longest increasing subsequence (LIS) problem that can be solved sequentially in $O(n\log n)$ work.
In the worst case, the \DG{} can contain $O(n^2)$ edges (see an example in \cref{fig:intro}).
It remains open whether we can design work-efficient parallel algorithms with non-trivial parallelism for many classic problems such as LIS, and even general approaches to achieve so.

\myparagraph{Contributions in this paper.}
\hide{We propose \emph{a framework and general ideas to design (almost) work-efficient parallel algorithms with round-efficiency,
and algorithms in this framework for a variety of problems, most of them textbook greedy and dynamic programming (DP) algorithms. }
We call them \emph{sequential iterative algorithms} as their sequential solutions process all objects iteratively.}
We propose \emph{a general framework and algorithms in this framework to parallelize sequential iterative algorithms,
many of them textbook greedy and dynamic programming (DP) algorithms, that are (nearly) work-efficient and round-efficient.  }
Our approaches are \emph{vertex-centric}, which avoid examining all edges in the \dg{}.
\hide{To do this, we first formalize the dependence of such problems using an algebraic structure (formally defined in \cref{sec:structure}).}
We define the \defn{rank} for each vertex in the \DG{} (an input object or a subproblem),
and we prove that rank fully captures the earliest
``phase'' that an object can be processed in a parallel algorithm.
We believe that defining rank simplifies parallel algorithm design for many problems.
Based on rank, we propose the \defn{\phaseparallel{} algorithm framework},
which processes all objects based on the ordering of ranks, and
round $i$ processes all objects with rank $i$ in parallel.
For example, the rank of an object in the LIS
problem is the LIS length ending at this object.
The \phaseparallel{} algorithm will then process all objects with LIS size $i$ in round $i$, which finish in $k$ rounds for an input sequence with LIS length $k$.
We present the list of problems discussed in the paper, their ranks, and the cost bounds of our solutions in \cref{tab:overall}.

To achieve work-efficiency and round-efficiency, we propose two types of general ideas, referred to as Type 1 and Type 2
algorithms.

\hide{In many cases, the rank functions can be computed by \defn{range queries} (1D or in higher dimensions), which can be efficiently answered by a line of recent work~\cite{blelloch2016justjoin,sun2018pam,sun2019parallel}.}

\begin{figure}
  \centering
  \includegraphics[width=\columnwidth]{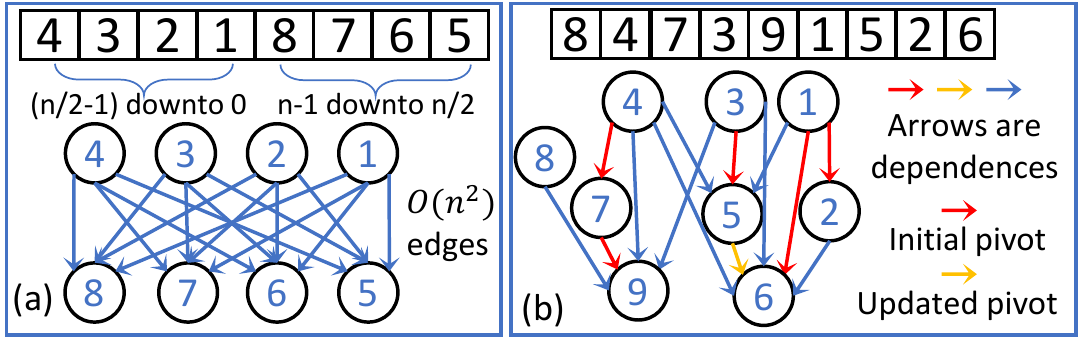}
  \caption{\small \textbf{Examples of Longest Increasing Subsequence (LIS).}
  (a) An example where we have $O(n^2)$ dependences between objects for input size $n$.
  (b) An example of how our algorithm processes LIS.
  Each object $x$ chooses a pivot among its predecessors (red arrows). We check the readiness
  of $x$ only when its pivot finishes. If $x$ is still not ready, we update the pivot to another unfinished object.
  This avoids checking all edges in the \DG{}.
  }\label{fig:intro}
\end{figure}

Type 1 algorithms aim to identify objects to be processed in round $i$ efficiently.
To do this, we use \emph{range queries} based on parallel augmented trees~\cite{sun2018pam} (see \cref{sec:prelim}),
which take polylogarithmic work, instead of work proportional to the number of relevant edges in the \DG{}.
This idea applies to many greedy or DP algorithms, 
and our algorithms use range queries to find the maximal set of parallelizable objects.
Type 1 algorithms include activity selection, unlimited knapsack, Dijkstra's algorithm, and Huffman tree.
We note that some of these ideas may have been known. We do not claim them as the main contribution,
but use them to exemplify our framework.

Our Type 2 algorithms aim at \emph{waking up} the ready \object{s} at the right time.
Instead of checking the readiness of all \object{s} in each round (as is done in previous work), we hope to touch an \object{} \emph{only when it is (almost surely) ready}.
One of our approaches is to assign a \emph{pivot} $p_x$ to each \object{} $x$, which is an \object{} $x$ relies on.
Only when $p_x$ finishes, we check if $x$ is ready. If $x$ is ready, we process it in the next round.
If not, we update $x$'s pivot to another unfinished \object{} it relies on.
\cref{fig:intro}(b) shows an example of LIS.
Although there are many dependences (all edges) in the \DG{}, each \object{} only picks one pivot (the red edges).
When the pivot is ready, the object itself is likely to be ready, but if not (e.g., \nodecircle{6} first picks \nodecircle{1}),
it selects a new pivot (the yellow edge from \nodecircle{5}). Therefore, only a small fraction of edges in the \DG{} (the red and yellow ones) are evaluated, which saves work. 
Another approach is based on a new structure to identify a ready object when its last predecessor in \DG{} finishes,
which we apply to the greedy maximal independent set (MIS) algorithm and other similar algorithms.
Our approach is based on a new data structure called \emph{\tastree{}}, which makes use of the atomic \emph{test-and-set} operation (formally defined in \cref{sec:prelim}).
For MIS, our algorithm is work-efficient, and improves existing span bound~\cite{BFS12} from $O(\log^3 n)$ to $O(\log^2 n)$.


We believe that our framework applies to a broad set of problems.
We picked the examples by reviewing the problems in Cormen, Leiserson, Rivest, and Stein~\cite{clrs}.
All algorithms in this paper are (nearly) work-efficient (only the LIS algorithm has an $O(\log^2 n)$ factor of overhead), and round-efficient.

\hide{
Many of our new algorithms have improved bounds over the best previous results.
For example, our LIS algorithm (\cref{algo:lis}) has $\tilde{O}(n)$ work 
and $\tilde{O}(k)$ span for LIS length $k$.
We note that the best previous solution is either the sequential algorithm where the work and span are both $O(n\log n)$~\cite{Knuth69}, or parallel algorithm with polylogarithmic span but incurring significant overhead in work ($\Omega(n^2)$ work)~\cite{krusche2009parallel}. Our new algorithm greatly improves the existing theory results.
Our framework also applies to many existing algorithms in~\cite{blelloch2012internally,BFS12,fischer2018tight,hasenplaugh2014ordering,pan2015parallel,shun2015sequential,blelloch2016parallelism,blelloch2018geometry,blelloch2020randomized,blelloch2020optimal},
improves the bounds for many of them (e.g., the greedy MIS algorithm),
and provides a much simpler way to understand these algorithms.
}

Our framework applies to many existing algorithms in~\cite{blelloch2012internally,BFS12,fischer2018tight,hasenplaugh2014ordering,pan2015parallel,shun2015sequential,blelloch2016parallelism,blelloch2018geometry,blelloch2020randomized,blelloch2020optimal},
improves the bounds for many of them (e.g., the greedy MIS algorithm),
and provides a much simpler way to understand these algorithms.
As another example, our LIS algorithm (\cref{algo:lis}) has $\tilde{O}(n)$ work
and $\tilde{O}(k)$ span for LIS length $k$.
Parallel LIS is widely studied~\cite{galil1994parallel,krusche2009parallel,krusche2010new,alam2013divide,seme2006cgm,thierry2001work,nakashima2002parallel,nakashima2006cost}.
Our algorithm is the first to achieve near-work-efficiency and round-efficiency,
which is advantageous especially for small output size.
We review the literature of parallel LIS in \cref{sec:lis}.
Our algorithm is also simpler and more practical.
We are unaware of any implementations of these previous algorithms with
competitive performance to the standard sequential LIS algorithm with $O(n\log n)$ work.

We implement many of these algorithms, and test them as a proof-of-concept to show how work- and round-efficiency affect practical performance.
Although there is parallel overhead and our worst-case span is $\tilde{O}(n)$, 
our work-efficient algorithms achieve significant speedup over the sequential algorithm in a reasonably large input parameter space.
Our contributions include:


\begin{itemize}[wide, labelwidth=!, labelindent=0pt]
  \item The \phaseparallel{} framework to parallelize sequential iterative algorithms based on the concept of \emph{rank} defined in this paper.
  \item Two general techniques for \phaseparallel{} algorithms to achieve work-efficiency and round-efficiency, based on range queries and a wake-up strategy to identify ready objects, respectively.
  \item The first \textbf{nearly work-efficient} ($\tilde{O}(n)$ work) LIS algorithm with \textbf{round-efficiency} ($\tilde{O}(k)$ for LIS length $k$) and its implementation.
  \item A \textbf{work-efficient} greedy maximal independent set (MIS) algorithm with \textbf{$\boldsymbol{O(\log n\log \dmax)}$ span} \whp{} in the binary-forking model (defined in \cref{sec:prelim}), where $n$ is the number of vertices and $\dmax$ is the maximum degree in the graph. This improves the previous best span bound of $O(\log^3 n)$.
  \item Two algorithms for the activity selection problem. This is the first parallel algorithm for this problem we know of.
  Although not complicated, they reveal the connections of the two types of algorithms in this paper.
      They are work-efficient and round-efficient. For the unweighted version, we also provide an algorithm with $O(\log n)$ span \whp.
  \item Many other simple and interesting algorithms using our framework, including Huffman tree, SSSP, unlimited knapsack, etc.
  \item Implementations and experimental studies of these algorithms.
\end{itemize}

\hide{
Our framework generalizes to a variety of classic problems, including Huffman tree (HT), single source shortest path (SSSP), activity selection (AS), longest increasing subsequence (LIS), greedy maximal independent set (MIS), list ranking, etc. Note that some of these algorithms are essentially similar to existing solutions, but properly fit in our new framework, which we believe provides new insights to better understand these algorithms. The others are newly proposed algorithms which improve
 existing work and/or span bounds.
Our key idea lies in two aspects corresponds to the two challenges above.
First, to reveal the parallel dependence in the \dg{}, we formalize these problems by assigning a \defn{rank} to each object in the computation based on the application.
In most of the algorithms, the set of objects with the same rank is the maximal set of objects that can be processed in parallel.
As such, we propose \defn{\phaseparallel{} algorithms} that processes objects based on their ranks in order (i.e., first process all rank 1 objects, then rank 2, etc.).
As an example, for LIS, the rank of each object is defined as the LIS length up to this object. Therefore, all objects with LIS length $i$ are ready to be processed in round $i$ in parallel.
We show that this enables algorithms with low span, which is always $\tilde{O}(D)$ for a \dg{} of depth $D$.
Secondly, to identify the appropriate time to process each object, many algorithms take advantage of \defn{range queries} to detect the readiness efficiently.
The range queries needed is to answer some aggregate value over all objects in a key range in a dynamic map (key-values). The the keys can also be in higher
dimensions (e.g., a 2D range query).
This can be supported efficiently by the parallel augmented binary search trees proposed by a line of recent work~\cite{blelloch2016justjoin,sun2018pam,sun2019parallel}.
This avoids exhausted check of readiness and enables work-efficiency.
As a result, the work bound can be sublinear to the number of dependencies in the problem.
We describe our algorithms in two types.

The first type of algorithms extract the ready set of objects by using range queries.
In particular, all objects with a certain rank can be extracted by a range query from a dynamic map, which can be
maintained efficiently by a parallel binary search tree.
This type of algorithms appear mostly in greedy algorithms where sequential algorithms tend to process a greedy choice one at a time,
and our algorithm will identify the maximal set of objects that can be parallelized by a range query based on the greedy strategy.
This technique applies to three applications including Huffman tree, single source shortest path, and activity selection.

The second type of algorithms rely on properly \defn{wake up} an object at the appropriate time.
Instead of checking the readiness of each object in each round, we wish the algorithms
would try to check an object \emph{only when it is (almost surely) ready}.
This again, can benefit from range queries (e.g., our new LIS algorithms use a 2D range query to quickly check
the readiness).
This type of algorithms include the greedy maximal independent set (MIS), activity selection,
list ranking, longest increasing subsequence and some other interesting problems.

Our contribution include:

\begin{enumerate}
  \item We propose a new, general framework for designing and analyzing parallel algorithms from sequential iterative algorithms based on the concept \emph{rank} defined in this paper.
  \item We propose a high-level idea of taking advantage of range queries using augmented binary search trees to enable work-efficiency of parallel algorithms.
  \item Our new algorithm for Longest Increasing Subsequence (LIS) is the first algorithm that is \textbf{almost work efficient} ($\tilde{O}(n)$ work) with \textbf{non-trivial span} ($\tilde{O}(D)$ for LIS length $D$). We note that the best previous solution is either the sequential algorithm where the work and span are both $O(n\log n)$, or parallel algorithm with polylogarithmic span but incurring significant overhead ($\Omega(n^2)$ work). Our new algorithm greatly improves the existing results in theory. We also implemented the algorithm and show that the algorithm achieves reasonable speedup that matches our theory. This is \textbf{the only parallel LIS implementation we know of that is faster than the work-efficient sequential version}.
  \item Our new algorithm for greedy Maximal Independent Set (MIS) is \textbf{work-efficient} and \textbf{$\boldsymbol{O(\log n\log \dmax)}$ span} in the binary-forking model (see definition in \cref{sec:prelim}), where $n$ is the number of vertices and $\dmax$ is the maximum degree in the graph. This improves previous best algorithm of $O(\log^3 n)$ span.
  \item We proposed two frameworks of parallel algorithms, based on the techniques of type 1 and 2 in this paper, respectively, for the activity selection problem. This is the first parallel algorithm for this problem we know of. Although not very complicated, this algorithm reveals the connections and intrinsic properties of the two types of algorithms in this paper. We also implemented the algorithms and experimentally evaluate them.
  \item We also apply our ideas to a variety of simple and interesting problems, including Huffman tree, single source shortest paths, list ranking, and Whac-a-mole problem. The parallel solutions to many of these problems are known, but we show that they elegantly fit in our framework and can be easily analyzed in our framework.
\end{enumerate}

One of the most important measurement of a parallel algorithm is its \emph{work-efficiency}, i.e., whether its work (the total number of operations performed) is the (almost) same as the best sequential algorithm.
Work-efficiency is also important for practical performance. Using a standard work-stealing scheduler (which is used in most modern parallel languages), the running time of
a parallel algorithm is $W/P+O(D)$ with high probability, where $W$ is the work, $D$ is the span and $P$ is the number of processors\footnote{Work, span, and with high probability are defined in \cref{sec:prelim}.}. On modern multicore machines where $P\ll W$, the running time is mostly dominated by the work. In this paper, we propose a framework for designing work-efficient parallel algorithms, and apply them to several classic problems to give work-efficient parallel solutions for them.

A line of recent work showed that, in many cases, we do not need to design brand-new parallel algorithms, but instead,
many sequential algorithms are already ``parallel'', in that the logical dependence of the operations in the algorithm does not form a serial chain.
Such examples include maximal independent set and maximal matching~\cite{BFS12,fischer2018tight}; graph coloring~\cite{hasenplaugh2014ordering}; correlation clustering~\cite{pan2015parallel}; random permutation, list/tree contraction~\cite{shun2015sequential}; comparison sorting, linear programming, smallest enclosing disk, and closest
pairs~\cite{blelloch2016parallelism}; Delaunay triangulation~\cite{blelloch2016parallelism,blelloch2018geometry} and convex hull~\cite{blelloch2020randomized}; least-object lists and strongly connected components~\cite{blelloch2016parallelism}.
}

\hide{
An effective tool in designing and analyzing such algorithms is \emph{\dg{}} (\DG{}), which models the objects and their dependence relationship in a directed acyclic graph (DAG).
In the \dg{}, each object in the algorithm, such as an iteration or an input object, is represented as a vertex.
We say $y$ \emph{depend on} $x$, if there is a directed arc from vertices $x$ to $y$, which indicates that $y$ can be processed only after $x$ is finished.
By observing the graph, when all predecessors of a vertex $x$ have been finished, $x$ is \defn{ready} to be processed, and multiple ready vertices can be executed in parallel.
As such, designing an \emph{efficient} parallel algorithm boils down to two challenges: 1) identifying
the \dg{} of objects in the algorithm, ideally a shallow one,
and 2) designing an scheme to process each objects at an appropriate time (e.g., as soon as it is ready) 
while bounding the overhead of parallelization, such as checking the readiness.
A shallow \dg{} indicates low span (i.e., parallel time), and bounding the overhead indicates work (i.e., total number of operations) efficiency.

Two important observations can be made from \dg{}. First of all, a set of vertices that do not depend each other can be processed in parallel, and thus in each round, we can process a maximal set of .
Secondly, if we process  the parallel span (longest dependence chain of operations in the algorithm) is lower bounded by the depth of the \dg{}

Inspired by this simple observation, a variety of previous work proposed parallel solutions either to an individual problem, or formalized reasonably general framework working for a range of problems. One of the successful framework among them is the \defn{deterministic reservation}~\cite{blelloch2012internally} and follow-up work using similar ideas \cite{blelloch2016parallelism,blelloch2020randomized,blelloch2020optimal,blelloch2018geometry,shun2015sequential,and,more}.
In a high-level, the algorithm runs in \emph{rounds}. In each round, the algorithm checks all (or a prefix of) unfinished objects and identify the ready ones,
and process them all in parallel.
We call all objects processed in the current round the \defn{frontier}.
Some algorithms also remove the processed objects from the unfinished object set at the end of each round, usually done by a parallel \fname{pack}.
Many algorithms are designed directly or indirectly based on this idea, many of them work efficient with low span.
Interestingly, the framework itself does not guarantee work-efficiency, but ``span-efficiency'' holds somehow trivially as it only needs $O(D)$ rounds, where $D$ is the depth of \dg{}, and this usually gives $\tilde{O}(D)$ span.
It is more challenging to enable work-efficiency, and the work bounds (if any) for each algorithm is shown specifically for each individual algorithm.
A naive implementation may require $O(nD)$ work, where $D$ is the depth of the \dg{}, and $n$ is the number of objects in the \dg{}.
Some algorithms achieve $O(m)$, where $m$ is the number of edges in the \dg{}, using some designs specific to each problem. We review these algorithms in more details in \cref{sec:related}.
This, however, does not provide satisfactory solution to some problems.
Consider one of the examples using in this paper, which is the longest increasing subsequence (LIS, see its formal definition in \cref{sec:prelim}).
The dependence can be defined by observing that an object can be processed only when all objects before it with a smaller value have been processed.
This implies that the possible number of dependencies between objects can be $O(n^2)$ (consider a strictly increasing sequence, where each object depend on all previous objects),
but the best known sequential solution costs $O(n\log n)$ work. Similar situation applies to many other problems in this paper.
To achieve work-efficiency, we cannot even afford exploring all dependences in the \dg{}.
The challenge lies in properly identifying the ready objects in the \dg{} efficiently, even without evaluating all edges in the \dg{}.
}


\hide{
The key in parallel algorithm design is to identify the dependences
between objects, i.e., which objects are not parallelizable.
There are two goals in designing efficient parallel algorithms: work-efficiency (the total number of operations is asymptotically the same as the best sequential solution),
and high-parallelism (to avoid waiting for ``false dependences'' and to execute as many operations as possible in parallel).
\emph{Work-efficiency} is crucial for practical parallel algorithms.
This is because nowadays and for foreseeable future, the number of processors $P$ in a machine (up to thousands) is roughly polylogarithmic to input sizes.
Hence, a parallel algorithm is less likely to be practical if it blows up the work of the best sequential algorithm by polynomial factor.
In this paper, we will show work-efficient or nearly work-efficient parallel algorithms for a list of classic problems.
Many of them are the first parallel solutions that are faster than the textbook sequential algorithms, to the best of our knowledge.

The second key aspect is to identify the \emph{dependence relationships} among objects, and process them in the right order.
Such relationships can be modeled as a DAG, referred to as a \emph{\dg{}} (\DG{}).
Vertices in a \DG{} can be iterations, instructions, or input objects, and a directed edge from vertex $u$ to $v$ means $v$ can be executed only after $u$ is finished.
Based on the idea of \DG{}, many parallel algorithms have been designed~\cite{blelloch2012internally,BFS12,fischer2018tight,hasenplaugh2014ordering,pan2015parallel,shun2015sequential,blelloch2016parallelism,blelloch2018geometry,blelloch2020randomized,blelloch2020optimal}, by analyzing the parallel dependence in classic sequential algorithms.
We summarize them in \cref{sec:related}.
There also exist general frameworks to design such parallel algorithms.
One successful framework is the \emph{deterministic reservations}~\cite{blelloch2012internally} and follow-up work~\cite{}.
The algorithms run in rounds, and in each round check all (or a prefix of) unfinished objects, execute those that are ``ready'', and postpone the rest to the next round.
Interestingly, instead of work-efficiency, ``span-efficiency'' holds somehow trivially in that they only need $O(D)$ rounds for a \DG{} of depth $D$.
Given a parallel \DG{} with depth $D$, we say a computation is \defn{round-optimal}\footnote{We note that round-optimal does not guarantee optimality in span. This is because
round optimality is with respect to a given \DG{}. However, one can re-design a completely different algorithm that have a shallower \DG{} and get better span.} when it has span $\tilde{O}(D)$.
However, they do not always guarantee work-efficiency---the work in the worst case is $O(Dm)$ where $D$ and $m$ are the depth and the number of edges of the \DG{}, respectively, and one has to prove work bounds (if any) for each problem separately.
Another approach by Blelloch et al.~\cite{blelloch2020optimal} showed a general way to prove work-efficiency, but only applies to when each vertex in the \DG{} has a constant degree.
In addition, all previous approaches are \emph{edge-centric}---all edges in the \DG{s} are examined to decide the readiness of the vertices/objects.
We observed that in many cases, we cannot even afford generating the edges in the DG work-efficiently.
Consider the classic longest increasing subsequence (LIS) problem can be solved sequentially in $O(n\log n)$ work. In the worst case the \DG{} can contain $O(n^2)$ edges (see an example in Figure \ref{fig:intro}).
It remains open on designing work-efficient parallel algorithms with non-trivial parallelism for many problems, or even general approaches to achieve so.
} 
\begin{table*}
\centering
\small
\renewcommand{\arraystretch}{1.1}
\begin{tabular}{lllccc}
   & \bf Feasible Condition for objects $x$ and $y$ &  \multicolumn{1}{c}{$\bmath{\rank(x)}$} & \multicolumn{1}{c}{\bf Type} & \multicolumn{1}{c}{\bf Work} & \multicolumn{1}{c}{\bf Span}\\
   \hline
  \bf Activity Selection & \multirow{2}{*}{$x$ and $y$ do not overlap} & The maximum number of non-  &\multirow{2}{*}{1\&2} &\multirow{2}{*}{$O(n\log n)$} &\multirow{2}{*}{$O(\rank(S)\log n)$}\\
  (general)&&overlapping activities ending at $x$&\\
  \hline
  \multirow{2}{*}{\bf Unlimited Knapsack} & Solution to weight $y$ can contain  & $x/w^{*}$, where $w^{*}$ is the minimum  &\multirow{2}{*}{1}&\multirow{2}{*}{$O(Wn)$} & \multirow{2}{*}{$O(\rank(S)\log (w^*n))$}\\
  &solution to the subproblem of weight $x$&weight\\
  \hline
  \bf Huffman Tree & $y$'s Huffman code is a prefix of $x$ & Subtree height of $x$  & 1 &$O(n\log n)$&$O(\rank(S)\log n)$\\

  \hline
  \multirow{2}{*}{\bf Dijkstra's} & \multirow{3}{*}{$x$ is on the shortest path to $y$} &Hop distance from $x$ to the source  & \multirow{3}{*}{1}&\multirow{3}{*}{$O(m\log n)$} & \multirow{3}{*}{$O\left(\frac{\max_{v\in V} d(v)}{\min_{e\in E} w(e)} \log n\right)$}\\
  \multirow{2}{*}{\bf Algorithm}&&on shortest path tree&&&\\
  &&$\rankbar(x)=d(x)/\min_{e\in E} w(e)$&\\
  \hline
  \bf LIS&$y>x$ & The length of LIS ending at $x$& 2&$O(n\log^3 n)^{\dagger}$&$O(\rank(S)\log^2 n)^{\dagger}$\\
  \hline
  \bf Activity Selection & \multirow{2}{*}{$x$ and $y$ do not overlap} & The maximum number of non-  &\multirow{2}{*}{2}&\multirow{2}{*}{$O(n\log n)$} &\multirow{2}{*}{$O(\log n)^{\dagger}$}\\
  (unweighted) &&overlapping activities ending at $x$&\\
  \hline
  \multirow{2}{*}{\bf MIS} & $\exists$ a path from $x$ to $y$ s.t.\ the priorities on & The the longest chain size ending at     &\multirow{2}{*}{2} &\multirow{2}{*}{$O(n+m)$} &\multirow{2}{*}{$O(\log^2 n)^{\dagger}$}\\
  & the path are monotonically increasing &vertex $x$ with increasing priorities &&&\\
  \hline
\end{tabular}
\vspace{-.1in}
\caption{\label{tab:overall} \small
\textbf{The problems, their definitions of rank, the work and span of our solutions for the given problems.}
In feasible conditions, we assume $y$ is later than $x$.
$n=|S|$ is the input size. For graphs, $n$ is the number of vertices and $m$ is the number of edges.
$d(v)$ in SSSP is the shortest distance of $v$ from the source and $w(e)$ is the weight of edge $e$. $W$ in the knapsack problem is the weight limit. $\rankbar$ means a relaxed rank (see \cref{sec:sssp}). $\dagger$: with high probability.
\vspace{-1em}
}
\end{table*} 
\section{Preliminaries}
\label{sec:prelim}

\myparagraph{Notations.} For a sequence $s$, $s_i$ or $s[i]$ denotes the $i$-th
element, and $s_{i \ldots j}$ or $s[i\ldots j]$ denotes the $i$-th to the $j$-th elements 
in $s$. 
We use the term $O(f(n))$ \defn{with high probability} (\whp) in $n$ to indicate the bound
$O(k f(n))$ holds with probability at least $1-1/n^k$
for any $k \ge 1$. With clear context we drop ``in $n$''.

\myparagraph{Longest Increasing Subsequence (LIS).} Given a sequence $s_{1\ldots n}$, $s_{1\ldots m}'$ is a subsequence of $s$ if $s'_i=s_{k_i}$, where $k_1< k_2<\dots k_m$.
Given a sequence $s$, the \defn{longest increasing subsequence} (LIS) problem finds the longest subsequence $s^{*}$ of $s$ where $\forall i, s^{*}_i<s^{*}_{i+1}$.
\hide{
Sequentially, this problem can be solved using a dynamic programming (DP) algorithm. Let $\mathdp[i]$ (called the \defn{DP value} of element $i$) be the longest increasing subsequence of $s_{1\ldots i}$ ending with $s_i$.
Then
$$\mathdp[i]=\max(1,\max_{j<i,a[j]<a[i]}\mathdp[j]+1)$$
The problem and DP recurrence generalize to many applications where $\mathdp[i]=\max(1,\max_{j<i,a[j]<a[i]}\mathdp[j]+w(i))$, where $w(i)$ is a function of $i$. 
}

\myparagraph{Dependence Graph.}
A sequential iterative algorithm processes each input object in a given order.
The dependence graph (\DG{}) represents the processing dependence
of input objects.
Each vertex in the \DG{} denotes an input object.
An edge from $x$ to $y$ means that object $y$ can be processed only when $x$ has been finished, and we say $y$ \defn{relies on} $x$. 
We say an object is \defn{\finished} if it has been processed, and \defn{\unfinished} otherwise.
We say an object is \defn{\ready} if all its predecessors in the \DG{} 
have been finished. 
For two objects $x$ and $y$, where $y$ relies on $x$ and $x$ is \unfinished{}, we say $x$ \defn{\block{s}} $y$.

\myparagraph{Parallel Computational Model.}
We use the work-\depth{} model on the binary-forking model (with \tas{}) to analyze parallel
algorithms~\cite{clrs,blelloch2020optimal} as is used in many recent papers~\cite{agrawal2014batching,Acar02,blelloch2010low,BCGRCK08,BG04,Blelloch1998,blelloch1999pipelining,BlellochFiGi11,BST12,Cole17,CRSB13,BBFGGMS16,dinh2016extending,chowdhury2017provably,blelloch2018geometry,dhulipala2020semi,BBFGGMS18,blelloch2020randomized,gu2021parallel}.
We assume a set of \thread{}s that share a memory.  Each \thread{} supports standard RAM instructions and a \forkins{}
instruction that forks two new child \thread{}s.
When a \thread{} performs a \forkins{}, the two child \thread{}s both start by running the next instruction, and the original \thread{} is suspended until both children terminate.
A computation starts with a single {root} \thread{} and finishes when that root \thread{} finishes.
We use atomic operation \tas{} (\TAS{}), which checks whether a memory location is of zero, sets it to one if so, and return its old value.
We say a \TAS{} is \defn{successful} if it changes zero to one and \defn{unsuccessful} otherwise.
One can use \TAS{} to implement a \emph{join} operation in the standard fork-join model~\cite{blelloch2020optimal}.
An algorithm's \defn{work} is the total number of instructions, and
the \defn{span} (depth) is the length of the longest sequence of dependent instructions in the computation.
Note that a parallel for-loop incurs $O(\log n)$ span because of binary-forking.
We can execute the computation efficiently using a randomized work-stealing scheduler both in theory and in practice~\cite{clrs,blelloch2020optimal,blumofe1999scheduling}.

\vspace{.3em}\noindent {\large\emph{\textbf{Parallel Data Structures}}}

We now present useful theorems of data structures for range-sum queries used in this paper.  More details are given in \cref{app:datastructure}.
The data structures maintain a map of entries (\emph{key-values}) sorted by the keys, where keys can be either one or two dimensions.
A range sum query is defined by a key range (an interval in 1D or a rectangle in 2D) and \emph{augmentation},
which defines how the ``sum'', called the \emph{augmented value}, should be computed.
One example is to report the sum (or min/max) of values in the given key range.
Formally, suppose the map maintains key-value pairs of type $K\times V$, we define the augmented value of type $A$ by an augmented structure consisting of two functions and the identity of $A$.

\begin{itemize}[leftmargin=*]
  \item Base function $g:K\times V\mapsto A$, which maps an entry (key-value) to an augmented value.
  \item (Associative) Combine function $f:A\times A\mapsto A$, which combines (adds) two augmented values into a new augmented value.
  \item The identity $I_A\in A$ of $f$ on $A$. $(A,f,I_A)$ is a monoid.
\end{itemize}
In the value-sum example above, the base function $g=(k,v)\mapsto v$, the combine function $f=(a_1,a_2)\mapsto a_1+a_2$, and $I_A=0$.
We first present a useful theorem about range sum queries.

\begin{theorem}
  For $k\in \{1,2\}$, there exist data structures that can answer $k$-D range sum query in $O(\log^k n)$ time,
  can be constructed by (or be flattened into) a sorted sequence of entries in $O(n\log^{k-1} n)$ work and $O(\log^k n)$ span,
  and allow for batch update (e.g., insertion, deletion and value updates) in $O(m\log^k n)$ work and $O(\log^k n)$ span,
  where $n$ is the number of input entries, and $m\le n$ is the batch size.
  We assume constant cost for the base and combine functions.
\end{theorem}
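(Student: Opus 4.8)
The plan is to realize both data structures as \emph{join-based balanced binary search trees} augmented with the monoid $(A,f,I_A)$, following the framework of parallel augmented maps~\cite{blelloch2016justjoin,sun2018pam,sun2019parallel}; the two cases $k\in\{1,2\}$ correspond to a single augmented tree and a two-level range tree, respectively. In every tree I maintain at each node the augmented value of the entries in its subtree, obtained by combining the node's own base value $g(\cdot)$ with the cached augmented values of its two children via $f$. Since $f$ is associative with identity $I_A$ and costs $O(1)$, this invariant is restorable in $O(1)$ time whenever a node's children change, in particular during the constant number of rotations a \textsc{Join} performs; hence augmentation adds only constant overhead per tree operation, and all bounds reduce to the bounds of the underlying (non-augmented) join-based operations.

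Consider first $k=1$. A range-sum query over a key interval $[\ell,r]$ is answered by the standard canonical decomposition: walking down from the root, the interval is written as a disjoint union of $O(\log n)$ maximal subtrees whose cached augmented values are combined with $f$, giving $O(\log n)$ time. To build from a sorted sequence I recurse on the median, building the two halves in parallel and computing each node's augmented value bottom-up, which is $O(n)$ work and $O(\log n)$ span; flattening is the symmetric in-order traversal with the same bounds. A batch of $m$ updates (sorted first if necessary) is applied by the join-based \union{}/\difference{} routines, whose work is $O\!\left(m\log\frac{n}{m}+m\right)=O(m\log n)$ for $m\le n$ and whose span is $O(\log n)$; because augmented values are recomputed only along the $O(\log n)$ spines touched by the joins, the bounds carry over unchanged.

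For $k=2$ I use a range tree: a primary tree keyed on the first coordinate in which each node additionally stores a secondary 1D augmented tree holding the entries of its subtree keyed on the second coordinate. A query restricts the first coordinate to $O(\log n)$ canonical primary subtrees and issues a 1D range-sum query on each secondary tree, combining the $O(\log n)$ partial sums, for $O(\log^2 n)$ time. To build, I sort once by the second coordinate and, while splitting the primary tree at successive first-coordinate medians, stably partition this list so that each primary node receives its subtree's entries already sorted by the second coordinate, from which its secondary tree is built in linear time; summed over the $O(\log n)$ primary levels this is $O(n\log n)$ work, and each level contributes $O(\log n)$ span, for $O(\log^2 n)$ span overall. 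A batch update inserts or deletes each affected entry in the $O(\log n)$ secondary trees lying on its primary root-to-leaf path: the $O(m\log n)$ secondary operations cost $O(\log n)$ each, for $O(m\log^2 n)$ work, and, dispatched in parallel across the $O(\log n)$ primary levels, run in $O(\log^2 n)$ span.

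The main obstacle is the batch update in the $k=2$ case: a rotation in the primary tree invalidates the secondary trees of the rotated nodes, so naive rebalancing could rebuild large secondary structures and destroy the work bound. I would resolve this exactly as in the augmented-map literature, taking the primary tree to be \emph{weight-balanced} so that the total size of the subtrees rebuilt during a batch of $m$ updates is charged against the accumulated weight changes, keeping the secondary-rebuild cost within $O(m\log^2 n)$ work and $O(\log^2 n)$ span. The remaining steps---verifying that associativity alone justifies the canonical decomposition, and that each primitive preserves the cached-augmented-value invariant---are routine.
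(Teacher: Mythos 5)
Your proposal is correct and takes essentially the same route as the paper: the paper supports this theorem by pointing to the parallel augmented (join-based) balanced BST framework of PAM for the 1D case and PAM-based two-level range trees for the 2D case, which are exactly the structures you build, with matching query, construction, and batch-update bounds. The only difference is one of presentation---you reconstruct the arguments from scratch (including the weight-balanced amortization needed for 2D batch updates), whereas the paper defers these details to the cited works and its appendix on data structures.
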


This can be achieved using parallel augmented balanced binary search trees (\pabst{})~\cite{sun2018pam}
with algorithms in \cite{sun2018pam,blelloch2020optimal,blelloch2016just,dhulipala2022pac}.
For 2D range sum queries, we use a 2D range tree using \pabst{s}~\cite{sun2018pam,sun2019parallel}.

We also use multi-map, where multiple entries can have the same key, and a search on a key will return all values with this key\footnote{Note that this query is different from the 1D or 2D range query defined above, so this uses a different data structure from range trees stated above.}.
By using \pabst{}, we have the following theorem~\cite{blelloch2016just}.
\begin{theorem}
  For $n$ key-values, there exists a data structure that can search or update (insert or delete) a batch of entries in $O(m\log n)$ work and $O(\log m \log n)$ span, where $m\le n$ is the total number of elements found or updated in the batch.
\end{theorem}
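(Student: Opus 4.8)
The plan is to realize the multi-map as a single \pabst{} whose entries are the key-value pairs, ordered primarily by key. I would allow equal keys to coexist by breaking ties with a fixed secondary order on the values (or on a unique per-entry tag), so that all entries sharing a given key occupy one contiguous block of the in-order traversal. Since \pabst{s} in the just-join framework of~\cite{blelloch2016just} support the \emph{Join} primitive, the bulk operations I need, namely \union, \difference, and \splitn, are all available, and I would inherit their work and span bounds directly, reducing the theorem to the familiar ``batch operation via a bulk set operation'' pattern.

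For a batch update of $m$ entries (insertions or deletions), I would first sort the batch under the same total order in $O(m\log m)\subseteq O(m\log n)$ work and $O(\log m)$ span (valid since $m\le n$ gives $\log m\le\log n$), then \build a balanced tree over the sorted batch in $O(m)$ work and $O(\log m)$ span. Insertion is then a \union of this batch tree with the main tree of size $n$, and deletion is a \difference; both cost $O(m\log(n/m+1))\subseteq O(m\log n)$ work and $O(\log m\log n)$ span, where the $\log m$ factor is the recursion depth on the smaller batch side and the $\log n$ factor bounds the per-level tree work and span.

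For a batch search returning $m$ values in total, I would sort the query keys and run a simultaneous divide-and-conquer over the tree: pick the median query, \splitn the tree at that key to isolate the contiguous block of matching entries, recurse in parallel on the two resulting subtrees with the corresponding halves of the query set, and concatenate the collected value lists. The recursion has depth $O(\log m)$ on the query side, and each level performs a \splitn of span $O(\log n)$, giving $O(\log m\log n)$ span overall; the work telescopes to $O(m\log n)$ because the split costs charged across the queries sum to $O(m\log n)$ while the total size of the extracted blocks is exactly $m$.

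The main obstacle I anticipate is the multi-map semantics rather than the balancing. I must arrange the tie-breaking order and the split and union operations so that an entire equal-key block is treated atomically, ensuring a search returns \emph{all} values with a key and an update touches exactly the intended entries, while extracting a block of size $r$ in only $O(r)$ additional work rather than $O(r\log n)$. Verifying that these block extractions aggregate to $O(m\log n)$ work, and that collecting the outputs in parallel does not inflate the span past $O(\log m\log n)$, is where the careful accounting sits; the per-primitive bounds for \emph{Join}, \union, \difference, and \splitn then follow from~\cite{blelloch2016just}.
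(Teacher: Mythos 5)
Your proposal is correct, but it takes a different structural route from the paper. The paper (in its data-structure appendix) realizes the multi-map as a \emph{nested} BST: an outer \pabst{} keyed on the keys, where each outer node stores an \emph{inner} BST of all values sharing that key, and the batch operations (multi-find, multi-insert, multi-delete) are obtained by extending the divide-and-conquer algorithms of~\cite{blelloch2016just,sun2018pam} to this two-level structure, giving $O(m\log(n/m))$ work and $O(\log n\log m)$ span. You instead flatten everything into a single \pabst{} under a lexicographic (key, value-or-tag) order so that equal-key entries form a contiguous block, reduce updates to \union{}/\difference{} against a tree built from the sorted batch, and implement batch search by a query-side divide-and-conquer with \splitn{}. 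Both routes rest on the same just-join primitives and meet the stated bounds (yours with the slightly looser but sufficient $O(m\log n)$ work). What the flat scheme buys you is that no new algorithms need to be designed or re-analyzed---union, difference, and split are used off the shelf---at the price of the block-handling care you correctly identify (tie-breaking, extracting an equal-key block of size $r$ in $O(r)$ extra work via a subtree flatten, which indeed costs $O(r)$ work and $O(\log r)$ span). What the nested scheme buys is that multi-map semantics are native: all values of a key live in one inner tree, so ``return all values with this key'' is a single outer search plus a flatten, and equal-key atomicity needs no argument, though the two-level batch algorithms must be re-verified. Note that both designs share the same hidden requirement, which the paper states explicitly: the value type must admit a comparison (or, in your variant, a unique tag) for the construction to be well-defined. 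One minor point: your claim of $O(\log m)$ span for sorting the batch holds only randomized (or with a $\log\log$ factor deterministically) in the binary-forking model, but even a deterministic $O(\log^2 m)$ bound stays within the theorem's $O(\log m\log n)$ budget since $m\le n$, so this does not affect correctness.
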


\hide{
\myparagraph{Parallel Augmented BST (\pabst{}).}
The base data structure we use for range queries is a parallel augmented balanced binary search trees (\pabst{})~\cite{sun2018pam} proposed by Sun et al.
A \pabst{} stores an ordered map of entries of \emph{key-values}, sorted by the keys.
Each tree node maintains an \emph{augmented value} to support fast range sum queries,
which is some aggregate information (an abstract ``sum'') about all entries in its subtree.
One example is to store the sum (or min/max) of values in the subtree, such that
reporting the sum of values (or min/max) in a certain key range can be performed in $O(\log n)$ time by combining $O(\log n)$ relevant
tree nodes and subtree augmented values.
Formally, suppose the map maintains key-value pairs of type $K\times V$, we define the augmented value of type $A$ by two functions and an identity of $A$.

\begin{itemize}[leftmargin=*]
  \item Base function $g:K\times V\mapsto A$, which maps an entry (key-value) to an augmented value.
  \item (Associative) Combine function $f:A\times A\mapsto A$, which combines (adds) two augmented values into a new augmented value.
  \item The identity $I_A\in A$ of $f$ on $A$. $(A,f,I_A)$ is a monoid.
\end{itemize}

In the value-sum example above, the based function $g=(k,v)\mapsto v$, the combine function $f=(a_1,a_2)\mapsto a_1+a_2$, and $I_A=0$.

A \pabst{} can report the augmented value of any key range in $O(\log n)$ work.
\pabst{} supports parallel batch operations~\cite{blelloch2016justjoin,sun2018pam}. One can construct a \pabst{} of size $n$ in $O(n \log n)$ work and $O(\log^2 n)$ span, and flatten a \pabst{} into a sorted array in $O(n)$ work and $O(\log n)$ span.
One can also perform batch operations including find (\multifind), insert (\multiinsert), deletion (\multidelete), and update (\multiupdate) in $O(m\log (n/m))$ work and $O(\log n\log m)$ span on a tree of size $n$ and sorted batch of size $m\le n$~\cite{blelloch2016justjoin,sun2018pam}.
Taking union on two \pabst{} of size $n$ and $m\le n$ costs $O(m\log (n/m))$ work and $O(\log n\log m)$ span.

\myparagraph{Parallel Nested BSTs.} In some applications a two-level index is needed for entries with two \emph{keys}. The entries need be ordered first by a \emph{primary key}, and 
then the \emph{secondary key} when ties occur.
We can first build a \pabst{} keyed on the primary keys. All entries with the same primary key will be organized as another BST (keyed on the secondary key) associated with the corresponding primary key in the outer tree as its value. 
The space needed to store $n$ \element{s} in a nested BST $O(n)$.
We can extend the parallel algorithms on BSTs to the nested BSTs and get the same bound as a regular \pabst{} mentioned above.

\myparagraph{Parallel Augmented 2D Range Trees.} Some algorithms in this paper needs a \defn{2D (augmented) range query}. For a set of points on a 2D planar, the 2D augmented range query asks for the ``augmented value'' of all points in a given rectangle (e.g., maximum value).
An augmented range tree is defined on the comparison function for the two dimensions ($<_x$ and $<_y$), respectively, and the \emph{base} and \emph{combine} functions similar to \pabst{}.
This query can be answered by a \defn{2D Range Tree} with an appropriate augmentation.  
A 2D range tree is a two-level BST where the \emph{outer tree} is an index of the x-coordinates of the points.
Each tree node maintains an \emph{inner tree} storing the same set of points in its subtree, but keyed on the y-coordinates.
Each inner tree can be further augmented based on the application.
Then the augmented value of any rectangle can be reported in $O(\log^2 n)$ time, for a dataset of $n$ points.
For example, if the inner tree is augmented with the maximum value of each subtree,
then the range tree can answer any rectangle-max query (report the maximum value for all points in a given rectangle) in $O(\log^2 n)$ time.


Note that the 2D range tree is \emph{not} a 2-level nested tree as defined above. The space usage of a range tree on $n$ points is $O(n\log n)$.
}

\hide{\subsection*{Parallel Data Structures}
This section introduces the data structures used in our applications, mostly for \emph{range queries}.

\myparagraph{Parallel Augmented BST (\pabst{}).}
The base data structure we use for range queries is a parallel augmented balanced binary search trees (\pabst{})~\cite{sun2018pam} proposed by Sun et al.
A \pabst{} stores an ordered map of entries of \emph{key-values}, sorted by the keys.
Each tree node maintains an \emph{augmented value} to support fast range sum queries,
which is some aggregate information (an abstract ``sum'') about all entries in its subtree.
One example is to store the sum (or min/max) of values in the subtree, such that
reporting the sum of values (or min/max) in a certain key range can be performed in $O(\log n)$ time by combining $O(\log n)$ relevant
tree nodes and subtree augmented values.
Formally, suppose the map maintains key-value pairs of type $K\times V$, we define the augmented value of type $A$ by two functions and an identity of $A$.

\begin{itemize}[leftmargin=*]
  \item Base function $g:K\times V\mapsto A$, which maps an entry (key-value) to an augmented value.
  \item (Associative) Combine function $f:A\times A\mapsto A$, which combines (adds) two augmented values into a new augmented value.
  \item The identity $I_A\in A$ of $f$ on $A$. $(A,f,I_A)$ is a monoid.
\end{itemize}

In the value-sum example above, the based function $g=(k,v)\mapsto v$, the combine function $f=(a_1,a_2)\mapsto a_1+a_2$, and $I_A=0$.

A \pabst{} can report the augmented value of any key range in $O(\log n)$ work.
\pabst{} supports parallel batch operations~\cite{blelloch2016justjoin,sun2018pam}. One can construct a \pabst{} of size $n$ in $O(n \log n)$ work and $O(\log^2 n)$ span, and flatten a \pabst{} into a sorted array in $O(n)$ work and $O(\log n)$ span.
One can also perform batch operations including find (\multifind), insert (\multiinsert), deletion (\multidelete), and update (\multiupdate) in $O(m\log (n/m))$ work and $O(\log n\log m)$ span on a tree of size $n$ and sorted batch of size $m\le n$~\cite{blelloch2016justjoin,sun2018pam}.
Taking union on two \pabst{} of size $n$ and $m\le n$ costs $O(m\log (n/m))$ work and $O(\log n\log m)$ span.

\myparagraph{Parallel Nested BSTs.} In some applications a two-level index is needed for entries with two \emph{keys}. The entries need be ordered first by a \emph{primary key}, and 
then the \emph{secondary key} when ties occur.
We can first build a \pabst{} keyed on the primary keys. All entries with the same primary key will be organized as another BST (keyed on the secondary key) associated with the corresponding primary key in the outer tree as its value. 
The space needed to store $n$ \element{s} in a nested BST $O(n)$.
We can extend the parallel algorithms on BSTs to the nested BSTs and get the same bound as a regular \pabst{} mentioned above.

\myparagraph{Parallel Augmented 2D Range Trees.} Some algorithms in this paper needs a \defn{2D (augmented) range query}. For a set of points on a 2D planar, the 2D augmented range query asks for the ``augmented value'' of all points in a given rectangle (e.g., maximum value).
An augmented range tree is defined on the comparison function for the two dimensions ($<_x$ and $<_y$), respectively, and the \emph{base} and \emph{combine} functions similar to \pabst{}.
This query can be answered by a \defn{2D Range Tree} with an appropriate augmentation.  
A 2D range tree is a two-level BST where the \emph{outer tree} is an index of the x-coordinates of the points.
Each tree node maintains an \emph{inner tree} storing the same set of points in its subtree, but keyed on the y-coordinates.
Each inner tree can be further augmented based on the application.
Then the augmented value of any rectangle can be reported in $O(\log^2 n)$ time, for a dataset of $n$ points.
For example, if the inner tree is augmented with the maximum value of each subtree,
then the range tree can answer any rectangle-max query (report the maximum value for all points in a given rectangle) in $O(\log^2 n)$ time.


Note that the 2D range tree is \emph{not} a 2-level nested tree as defined above. The space usage of a range tree on $n$ points is $O(n\log n)$.
} 

\section{\titlecap{\phaseparallel{}} Algorithms}\label{sec:structure}

In this section, we introduce our key concept: \defn{\phaseparallel{} algorithms},
and show a general approach to design \phaseparallel{} algorithms
to maximize parallelism based on the \defn{rank} function.
Since our idea is sophisticated, we first show the pseudocode in \cref{alg:framework} and describe the high-level idea.

To seek parallelism in many sequential iterative algorithms, we define the $\rank(\cdot)$
of each object to capture the dependences among them, which indicates the earliest phase an object can be ready.
With a properly defined rank function,
Algorithm \ref{alg:framework}
processes all objects (in parallel) of rank $i$ in round $i$.
We call the set of objects processed in round $i$ ($\curset_i$ in Algorithm \ref{alg:framework}) the \defn{\frontier{}} of round $i$.
Table \ref{tab:overall} shows how rank is defined for the problems in this paper.
For example, in the LIS problem, an object's rank is the size of the LIS ending at this object.
Therefore, \cref{alg:framework} finds and processes all objects with LIS size 1 in parallel, then those with LIS size 2, etc.
Throughout the section, we use the LIS problem as an example to help understand the abstract concepts.
Next, we formalize the \phaseparallel{} algorithms.
We note that all of our algorithms are reasonably simple.
The goal of the formalization is to
extend our idea to general independence systems,
which generalizes to more DP and greedy algorithms.

\begin{algorithm}
\caption{The \phaseparallel{} algorithm\label{alg:framework}}\small
\KwIn{$S$, and $\rank(x)$ that implies $\setfam$}
\DontPrintSemicolon
  $i\gets 1$\\
  \While{$S\ne \emptyset$\label{step:frameworkwhile}} {
    Find the set $\curset_i$ that contains all objects with rank $i$\label{line:find-frontier}\\
    Process all objects in $\curset_i$ in parallel\\
    $S\gets S\setminus\curset_i$\\
    Update the status of objects in $S$ if necessary\\
    $i\gets i+1$
  }
\end{algorithm}

An independence system is a pair $(S, \setfam)$, where $S$ is a finite set and $\setfam$ is a collection of subsets of $S$ (called the \defn{independent sets} or \defn{feasible sets})  with the following properties:

\begin{enumerate}
  \item The empty set is feasible, i.e., $\emptyset \in \setfam$.
  \item (Hereditary property) A subset of a feasible set is feasible, i.e., for each $Y\subseteq X$, we have $X\in \setfam \implies Y\in \setfam$.
\end{enumerate}
A feasible set for the LIS problem is any increasing subsequence.

Given an independence system $(S,\setfam)$,
a \defn{sequential order} of it is a permutation of all objects in $S$, usually specified by the input.
For an object $x\in S$, let $\bmath{\ind_S(x)}$ be the index of $x$ w.r.t. its sequential order.
We say an object $x$ is \defn{earlier} than $y$ if $\ind_S(x)<\ind_S(y)$, and \defn{later} otherwise.
Let $\bmath{x^{\downarrow S}}=\{y\in S: \ind_S(y)\le \ind_S(x)\}$ be the downward closure of $x$,
i.e., all objects no later than $x$.
With clear context, we drop the superscripts and use $\bmath{\ind(x)}$ and $\bmath{x^{\downarrow}}$. We use $S_i$ as the object in $S$ with index $i$.
In LIS, the index $\ind(x)$ of an object $x$ is its position in the input sequence $S$, and $x^{\downarrow}$ is the prefix of $S$ up to $x$.

We say two objects $x$ and $y$ are \defn{\incompatible{}} if $\nexists E\in \setfam$, s.t. $x\in E$ and $y\in E$, and \defn{\compatible} otherwise.
We say an object $x$ is compatible with a set $E\subseteq S$ if $E\cup \{x\} \in \setfam$.
Mapping this to LIS, two objects $x$ and $y$ (later than $x$) are \compatible{} iff $x<y$.

Given an object $x$, we use $\setfam(x)=\{E\in \setfam: E\subseteq x^{\downarrow}, x\in E\}$ to denote all feasible sets \emph{with the last object as $x$},
and the \defn{Maximum Feasible Set (MFS)}\footnote{This is also known as the maximum independent set (MIS). In this paper, to avoid confusion with the greedy MIS algorithm in \cref{sec:mis}, we use the term MFS.} $\mis(x)=\arg\max_{E\in \setfam(x)}|E|$ as the largest set among $\setfam(x)$.
For many DP problems, MFS is usually related to the \emph{DP value} of the object.
For example, in LIS, the $\mis(x)$ is the LIS ending at $x\in S$.
For a set $S$, we also define the MFS as the largest feasible subset of $S$.
We define the \defn{rank} of a set or an object as $\rank(\cdot)=|\mis(\cdot)|$.
In LIS, $\setfam(x)$ refers to all increasing subsequences ending at object $x$.
The \MFS{} of an input sequence is the LIS of the sequence.
$\rank(x)$ is the size of LIS ending with $x$.

Given an independence system $(S,\setfam)$, a \defn{sequential iterative algorithm} $\mathcal{A}$ on $S$ processes each object $S_i$ in $S$ iteratively based
on the sequential order, with the goal to optimize some value of all (or some) feasible sets.
Since a processed object usually corresponds to a subproblem on $S_i^{\downarrow}$,
they are sometimes called the \defn{states} in dynamic programming problems.
For example, in LIS, processing object $S_i$ is to compute the LIS up to (and including) object $S_i$.

To parallelize a sequential iterative algorithm,
note that an object does not need to wait for \emph{all} earlier objects to finish,
but only a subset of them.
Let $\pred(x)$ be all objects that $x$ rely on, i.e., all predecessors of $x$ in the \DG{}.
For LIS, $\pred(x)=\{y: \ind(y)<\ind(x), y<x\}$.
When all objects in $\pred(x)$ finish, $x$ is \ready{}.
Also, if two objects do not rely on each other in the \DG{}, they can be processed in parallel.
These two simple observations have been used in existing parallel algorithms and frameworks (e.g.,~\cite{BFS12,hasenplaugh2014ordering,shun2015sequential,blelloch2016parallelism,blelloch2018geometry,blelloch2018geometry,blelloch2020randomized}).
In this paper, we formalize the problem for a class of algorithms based on an independence system and point out that identifying the ready objects can be captured by the \emph{ranks} of the objects. We define \emph{\phaseparallel{}} as follows.


\begin{definition}\label{def:pp-algo}
Given an independence system $(S,\setfam)$,
a sequential iterative algorithm on $S$ is \defn{{\phaseparallel}} if it has the following property:
an object $x\in S$ relies on $y\in S$ in the parallel \dg{} if and only if:
\begin{enumerate}[label=(\arabic*).]
  \item (Ordering) $\ind(y)<\ind(x)$.
  \item (Compatibility)
  $\forall E\subseteq \setfam(y)$, we have $E\cup \{x\} \in \setfam$, i.e., any feasible set containing $y$ and only objects up to $y$ are also compatible with $x$.
\end{enumerate}
\end{definition}
This means computing the state (processing an object) $x$ only relies on previous states in $x^{\downarrow}$ compatible with $x$.
This indicates \emph{optimal substructure} property~\cite{clrs}, where the best solution at $x$ can be obtained by optimal solutions before $x$.


To achieve maximum parallelism, our goal is to find the largest possible set of objects to process in parallel.
We first show that all objects with the same rank (MFS size) can be processed in parallel.

\begin{theorem}\label{thm:phase-parallel}
  Given a \phaseparallel{} algorithm $\mathcal{A}$ on the independence system $(S,\setfam)$, if $\rank(x)=\rank(y)$, then $x$ and $y$ cannot rely on each other in the parallel \dg{}.
\end{theorem}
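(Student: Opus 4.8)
The plan is to exploit acyclicity to reduce the claim to a single direction and then derive a contradiction from the definition of rank as $\rank(\cdot)=|\mis(\cdot)|$. Since the parallel \dg{} is a DAG, it cannot contain the $2$-cycle "$x$ relies on $y$ and $y$ relies on $x$" simultaneously, so it suffices to rule out each direction separately. Assume without loss of generality that $\ind(x)<\ind(y)$. By condition (1) (Ordering) of \cref{def:pp-algo}, $x$ cannot rely on $y$: reliance of $x$ on $y$ would require $\ind(y)<\ind(x)$, contradicting our assumption. Hence the entire statement reduces to showing that $y$ does \emph{not} rely on $x$.

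I would argue this last point by contradiction: suppose $y$ relies on $x$. Then condition (2) (Compatibility), with the roles instantiated for $y$ relying on $x$, states that for every feasible set $E\in\setfam(x)$ we have $E\cup\{y\}\in\setfam$. The key move is to apply this to the single distinguished set $E=\mis(x)$, the maximum feasible set ending at $x$, which satisfies $|\mis(x)|=\rank(x)$. (Here I note that $\setfam(x)$ is nonempty, so $\mis(x)$ is well-defined, using that $\{x\}\in\setfam$ by the hereditary property applied to any feasible set containing $x$; in LIS this is just the singleton increasing subsequence.) Compatibility then gives $\mis(x)\cup\{y\}\in\setfam$.

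The heart of the argument is to verify that $\mis(x)\cup\{y\}$ is in fact a member of $\setfam(y)$ that strictly enlarges $\mis(x)$. Feasibility is exactly what compatibility just provided. For containment in $y^{\downarrow}$, every element of $\mis(x)$ lies in $x^{\downarrow}\subseteq y^{\downarrow}$ since $\ind(x)<\ind(y)$, and $y\in y^{\downarrow}$ trivially. Moreover $y$ is the object of largest index in the union, since $\ind(y)>\ind(x)\ge\ind(z)$ for every $z\in\mis(x)$, so $y$ is the "last" object, confirming $\mis(x)\cup\{y\}\in\setfam(y)$; and because $y\notin x^{\downarrow}$ we have $y\notin\mis(x)$, so the union has size exactly $\rank(x)+1$. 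Consequently $\rank(y)=|\mis(y)|\ge|\mis(x)\cup\{y\}|=\rank(x)+1>\rank(x)$, contradicting the hypothesis $\rank(x)=\rank(y)$. Therefore $y$ cannot rely on $x$, which finishes the proof.

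The argument is mostly bookkeeping around the definitions, and I expect no serious obstacle; the one place demanding care is confirming that $\mis(x)\cup\{y\}$ genuinely satisfies \emph{all} the defining conditions of $\setfam(y)$ (feasible, contained in $y^{\downarrow}$, and with $y$ as the last/maximal-index object) and that adjoining $y$ strictly increases the cardinality. These hinge precisely on the ordering assumption $\ind(x)<\ind(y)$, which is what ties the compatibility condition to the strict rank increase.
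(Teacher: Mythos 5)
Your proof is correct and takes essentially the same route as the paper's: assume one object relies on the other, apply the compatibility condition to the MFS of the earlier object to obtain a feasible set ending at the later object of size $\rank(x)+1$, and contradict the assumption of equal ranks. Your write-up is somewhat more careful than the paper's two-line argument (you dispose of one direction via the Ordering condition rather than ``by symmetry,'' and you explicitly verify $\mis(x)\cup\{y\}\in\setfam(y)$), but these are refinements of the same idea, not a different approach.
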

\begin{proof}
  Assume to the contrary that $y$ relies on $x$ (the other case is symmetric).
  Consider the MFS of $y$. By definition $\mis(y)\cup \{x\}$ is also feasible, which means $\rank(x)\ge \rank(y)+1$.
\end{proof}

\hide{
\begin{corollary}\yan{check}
If two objects are \incompatible{}, they can be processed in parallel.
\end{corollary}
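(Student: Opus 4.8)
The statement to prove is the corollary: if two objects $x$ and $y$ are \incompatible{}, then they can be processed in parallel, i.e., neither relies on the other in the parallel \dg{}. The plan is to contradict the dependence condition of \cref{def:pp-algo} directly, rather than argue about ranks. The key observation is that a dependence of $x$ on $y$ forces the existence of a feasible set containing \emph{both} objects, which is exactly what \incompatible{s} forbid.

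First I would reduce to a single directional case. Since $x\neq y$, their indices differ, so assume without loss of generality that $\ind(y)<\ind(x)$. By the Ordering clause of \cref{def:pp-algo}, a dependence can only point ``backward'' in index, so $y$ cannot rely on $x$; it remains only to rule out that $x$ relies on $y$. Suppose for contradiction that $x$ relies on $y$. Then the Compatibility clause applies: every feasible set $E\in\setfam(y)$ (that is, every feasible set with last object $y$, so $y\in E$ and $E\subseteq y^{\downarrow}$) satisfies $E\cup\{x\}\in\setfam$.

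Next I would produce a concrete such $E$. Because $\rank(y)=|\mis(y)|\ge 1$, the collection $\setfam(y)$ is nonempty (in particular $\{y\}\in\setfam$ whenever singletons are feasible, which the well-definedness of $\mis(y)$ guarantees). Picking any $E\in\setfam(y)$, we have $y\in E$ and, by the Compatibility clause, $E\cup\{x\}\in\setfam$. But then $E\cup\{x\}$ is a feasible set containing both $x$ and $y$, contradicting the assumption that $x$ and $y$ are \incompatible{}. Hence $x$ does not rely on $y$ either, and neither object relies on the other, so the two can be processed in parallel.

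The only delicate point, and the one I would flag explicitly, is establishing that $\setfam(y)\neq\emptyset$: the argument needs at least one feasible set ending at $y$ to exhibit the forbidden joint-membership set. This is immediate from the setup, since $\mis(y)$ is defined as an $\arg\max$ over $\setfam(y)$ and $\rank(y)\ge 1$, but it is worth stating so the contradiction with \incompatible{ility} is airtight. Everything else is a one-line application of the two clauses of \cref{def:pp-algo} together with the definition of \incompatible{}, so I expect no computational obstacle beyond this nonemptiness check.
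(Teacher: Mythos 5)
Your proof is correct, and in fact the paper never proves this statement: the corollary appears only in a block the authors hid from the compiled version, carrying an internal note to check it, so your definitional argument fills a genuine gap rather than duplicating an existing proof. The closest proved statement is \cref{thm:phase-parallel}, whose proof takes the specific witness $\mis(y)\in\setfam(y)$, adjoins $x$, and derives the rank inequality $\rank(x)\ge\rank(y)+1$; you instead take an arbitrary $E\in\setfam(y)$ and contradict incompatibility directly, which is the cleaner route here since no rank comparison is needed. Two remarks. First, you are right to flag $\setfam(y)\neq\emptyset$ as the one delicate point: if $\setfam(y)=\emptyset$, the Compatibility clause of \cref{def:pp-algo} is vacuous and the ``if and only if'' would force an edge from every earlier $y$ to $x$ regardless of compatibility, so the framework implicitly assumes every singleton is feasible (equivalently $\rank(y)\ge 1$), as holds in all the paper's instantiations --- and note that the paper's own proof of \cref{thm:phase-parallel} silently relies on the same fact when it forms $\mis(y)\cup\{x\}$. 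Second, ``can be processed in parallel'' is strongest if read as ``no directed path in the \dg{},'' not merely ``no direct edge,'' and your argument extends to that reading with one more line: along a putative path $y=z_0\to z_1\to\cdots\to z_k=x$, given $E_i\in\setfam(z_i)$ with $y\in E_i$, the edge $z_i\to z_{i+1}$ together with $E_i\subseteq z_i^{\downarrow}\subseteq z_{i+1}^{\downarrow}$ yields $E_{i+1}=E_i\cup\{z_{i+1}\}\in\setfam(z_{i+1})$ by the Compatibility clause, and then $E_k$ is a feasible set containing both $x$ and $y$, again contradicting the assumption that they are \incompatible{}. With that extension your proof establishes the corollary in its strongest sense, while matching the paper's own convention (stated just above \cref{def:pp-algo}) that objects not relying on each other may be processed in parallel.
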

}

\cref{thm:phase-parallel} leads to the following conclusion, based on which we propose \cref{alg:framework}. 

\begin{corollary}
In a \dg{}, if $x$ relies on $y$, $\rank(x)>\rank(y)$. All objects with the same rank can be processed in parallel. 
\end{corollary}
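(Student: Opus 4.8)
The plan is to prove the two assertions separately: the first, that reliance strictly increases rank, by a direct optimal-substructure argument, and the second, that equal ranks are parallelizable, as an immediate consequence of \cref{thm:phase-parallel}.

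For the claim that $x$ relying on $y$ implies $\rank(x) > \rank(y)$, I would start from \cref{def:pp-algo}, which supplies two facts about the reliance: (i) the ordering $\ind(y) < \ind(x)$, and (ii) the compatibility that every feasible set ending at $y$ stays feasible after adjoining $x$. Applying (ii) to the witness set $\mis(y) \in \setfam(y)$ gives $\mis(y) \cup \{x\} \in \setfam$. I then argue this augmented set in fact lies in $\setfam(x)$: from (i) we get $y^{\downarrow} \subseteq x^{\downarrow}$, so $\mis(y) \subseteq x^{\downarrow}$, and together with $x \in x^{\downarrow}$ this yields $\mis(y)\cup\{x\} \subseteq x^{\downarrow}$; since also $x \in \mis(y)\cup\{x\}$, the set satisfies exactly the two requirements defining $\setfam(x)$. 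Hence it is a candidate for $\mis(x)$, so $\rank(x) = |\mis(x)| \ge |\mis(y)| + 1 = \rank(y) + 1 > \rank(y)$. This is essentially the counting step already used inside the proof of \cref{thm:phase-parallel}, now read in the forward direction.

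For the second assertion, I would simply invoke \cref{thm:phase-parallel}: if $\rank(x) = \rank(y)$, neither of $x, y$ relies on the other, so there is no \dg{} edge between them in either direction, i.e.\ neither \block{s} the other. Since two objects may be executed together whenever neither blocks the other, every rank class is an antichain in the \dg{} and can be placed in a single round (the \frontier{}), which is precisely what round $i$ of \cref{alg:framework} does.

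I expect the only delicate point to be verifying $\mis(y)\cup\{x\} \in \setfam(x)$ rather than merely $\in \setfam$. One must use the ordering condition (i) to confirm the downward-closure containment $\mis(y) \subseteq y^{\downarrow} \subseteq x^{\downarrow}$, so that the augmented set genuinely ``ends at $x$'' in the sense of the definition of $\setfam(x)$ and thus competes for $\mis(x)$. Everything else is bookkeeping on set sizes, plus noting that the second assertion is just the contrapositive reading of \cref{thm:phase-parallel}.
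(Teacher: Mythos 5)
Your proposal is correct and follows essentially the same route as the paper: the first assertion is exactly the counting argument ($\mis(y)\cup\{x\}$ is feasible, ends at $x$, and lies in $x^{\downarrow}$, hence $\rank(x)\ge\rank(y)+1$) used inside the paper's proof of \cref{thm:phase-parallel}, and the second assertion is, as in the paper, a direct invocation of that theorem. Your extra care in checking that $\mis(y)\cup\{x\}\in\setfam(x)$ (via $y^{\downarrow}\subseteq x^{\downarrow}$) rather than merely $\in\setfam$ is a welcome tightening of a step the paper leaves implicit, but it is not a different approach.
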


\begin{theorem}
\label{thm:rankdepth}
  The rank of an object in a \phaseparallel{} algorithm is its depth in the \DG{}.
\end{theorem}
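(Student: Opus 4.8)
The plan is to first pin down the notion of depth and then prove two matching inequalities. For a vertex $x$ of the \DG{}, let $\mathrm{depth}(x)$ denote the number of vertices on a longest directed path ending at $x$ (so a source has $\mathrm{depth}(x)=1$); this is the per-vertex refinement of the ``depth $D$'' of the whole graph, so that the maximum $\mathrm{depth}$ equals the number of rounds of \cref{alg:framework}. I would then show $\mathrm{depth}(x)\le\rank(x)$ and $\rank(x)\le\mathrm{depth}(x)$ separately, and conclude equality.

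For $\mathrm{depth}(x)\le\rank(x)$ I would reuse the quantitative form of \cref{thm:phase-parallel} that appears in its proof: whenever $x$ relies on $y$, the set $\mis(y)\cup\{x\}$ is feasible and ends at $x$, so $\rank(x)\ge\rank(y)+1$ (this is exactly the corollary following \cref{thm:phase-parallel}). Hence along any directed path $v_1\to v_2\to\cdots\to v_d=x$ the ranks strictly increase, and since every rank is a positive integer we get $\rank(x)\ge\rank(v_1)+(d-1)\ge d$. Applying this to a longest path ending at $x$ gives $\mathrm{depth}(x)\le\rank(x)$.

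For the reverse inequality I would exhibit an explicit path of length $\rank(x)$ terminating at $x$. Write $\mis(x)=\{z_1,\dots,z_r\}$ with $r=\rank(x)$ and $\ind(z_1)<\cdots<\ind(z_r)=\ind(x)$, so $z_r=x$. The claim to prove is that $z_1\to z_2\to\cdots\to z_r$ is a directed path, i.e.\ $z_i$ relies on $z_{i-1}$ for every $i$. The ordering condition of \cref{def:pp-algo} is immediate from the index ordering, and hereditariness gives that each prefix $\{z_1,\dots,z_i\}$ is a feasible set ending at $z_i$. Granting the claim, this path witnesses $\mathrm{depth}(x)\ge r=\rank(x)$, and combined with the first inequality the theorem follows.

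The hard part will be the compatibility condition of \cref{def:pp-algo} for the consecutive pair $(z_{i-1},z_i)$: I must show that \emph{every} $E\in\setfam(z_{i-1})$ satisfies $E\cup\{z_i\}\in\setfam$, not merely that the particular prefix $\{z_1,\dots,z_{i-1}\}$ extends. I would attempt to reduce this to the optimal-substructure reading of \cref{def:pp-algo}, arguing that the prefix is an \MFS{} of $z_{i-1}$ and that, in a \phaseparallel{} independence system, an object compatible with an optimal solution at $z_{i-1}$ is compatible with all feasible sets ending there. I expect this to be the delicate step, since the bare independence-system axioms do not force it; it should be discharged using the full strength of the compatibility clause in \cref{def:pp-algo} (or an exchange-type property of the systems arising from the sequential iterative algorithms considered), and is worth cross-checking against the concrete instances (LIS, activity selection, \MFS{}) where it holds transparently.
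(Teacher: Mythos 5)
Your two-inequality plan is essentially the paper's own argument unfolded, not a different route: the paper's proof is a short sketch asserting that the rank of an object ``must be 1 plus the maximum MFS size of its predecessors'' and concluding by induction. The ``$\ge$'' half of that assertion is your first inequality (ranks strictly increase along \DG{} edges, which you correctly derive from the proof of \cref{thm:phase-parallel}), and the ``$\le$'' half is exactly your second inequality, packaged as an induction instead of an explicit path. Your first inequality is solid. The problem is the second one, and you put your finger on precisely the step that the paper's sketch does not justify.

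That step is a genuine gap, and your suspicion that ``the bare independence-system axioms do not force it'' is correct; in fact the statement can fail outright under \cref{def:pp-algo} alone. The reliance relation requires that \emph{every} $E\in\setfam(z)$ satisfy $E\cup\{x\}\in\setfam$, so exhibiting one feasible chain $\{z_1,\dots,z_r\}$ ending at $x$ does not make $z_{r-1}$ a predecessor of $x$, nor guarantee that $x$ has \emph{any} predecessor of rank $r-1$. Concretely, take $S=\{a,b,c,d\}$ in this order with feasible sets $\emptyset$, all singletons, $\{a,b\}$, $\{a,c\}$, $\{b,c\}$, $\{a,d\}$, $\{c,d\}$, $\{a,c,d\}$; the hereditary property holds. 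Then $\rank(d)=3$ via $\{a,c,d\}$, but $d$ does not rely on $c$ (since $\{b,c\}\cup\{d\}\notin\setfam$) and does not rely on $b$ (since $\{b,d\}\notin\setfam$), so $d$'s only predecessor is $a$ and its depth in the \DG{} is $2<3=\rank(d)$. Note that here $\{a,c\}$ \emph{is} an $\mis$ of $c$, so even your hoped-for reduction (``compatible with an optimal solution at $z_{i-1}$ implies compatible with all feasible sets ending there'') fails in general. Consequently neither your path construction nor the paper's induction can be completed from \cref{def:pp-algo} alone: \cref{thm:rankdepth} implicitly uses an additional exchange-type property of the concrete applications (in LIS, compatibility is the pairwise condition $j<i$, $a_j<a_i$, which makes compatibility with one feasible set ending at $z$ equivalent to compatibility with all of them). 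To finish, you must either add such a property as a hypothesis or verify it per application, exactly as you anticipated in your last paragraph — your proposal is as complete as the paper's proof, but no more.
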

\begin{proof}
  (Sketch) From the compatibility property of the \phaseparallel{} algorithm, we know the rank (MFS size) of an object must be 1 plus the maximum MFS size of its predecessors. By induction, we can prove the given theorem.
\end{proof}

Theorem \ref{thm:rankdepth} verifies the strategy of Algorithm \ref{alg:framework}, which means we are just processing the objects based on their depth in the \DG{}.

In this paper, we propose novel and efficient ways to process \phaseparallel{} algorithms.
The challenges lie in achieving work-efficiency with non-trivial parallelism.
As mentioned, although the high-level idea of processing all ready objects (thus achieving round-efficiency)
in a round has been used in existing work,
most of them need to check all edges in the \DG{}.
In many cases, the number of edges can be asymptotically more than efficient work.
We propose two general ideas to reduce work in \phaseparallel{} algorithms.
Type 1 algorithms (in \cref{sec:type1}) find the frontier in each round efficiently using a \emph{range query} in polylogarithmic cost.
Type 2 algorithms (in \cref{sec:type2}) try to \emph{wake up} all ready objects by the finished ones at the right time.
Both cases avoid checking all edges in the \DG{}.

\hide{
Algorithm outline:

\medskip

\begin{enumerate}[label=\arabic*.]
\hrule
  \item $i\gets 1$;
  \item \label{step:frameworkwhile} While $S\ne \emptyset$:
  \begin{enumerate}[itemindent=.25cm,label=\ref{step:frameworkwhile}\arabic*.]
    \item Find all objects with rank $i$, noted as $\curset_i$;
    \item Process all objects in $\curset_i$;
    \item $S\gets S-\curset_i$;
    \item Update the status of objects in $S$ if necessary;
    \item $i\gets i+1$;
  \end{enumerate}
\hrule
\end{enumerate}

\medskip

Type 1: When each frontier can be identified by a range query.

Type 2: after an object has been processed, it can wake up possible candidates of the next frontier.

}

\hide{
\begin{definition}
Given an set system $(S,\setfam)$,
a sequential iterative algorithm on $S$ is \defn{\titlecap{\phaseparallel}} with the following property:
\begin{enumerate}[label=(\arabic*).]
  \item \label{prop:empty}$\emptyset\in \setfam$.
  \item \label{prop:hereditary}(Hereditary property) $\forall Y\subseteq X$, we have $X\in \setfam \implies Y\in \setfam$. \ref{prop:empty} and \ref{prop:hereditary} imply that $(S,\setfam)$ is an \defn{independence system}.
  \item An object $x\in S$ relies on (dependent on) $y\in S$ in the parallel \dg{} iff.
  $\forall E\subseteq y^{\downarrow}$ where $y\in E, E\in \setfam$, we have $E$
\end{enumerate}
\end{definition}

 with the objective
of $\min_{E\in \setfam}\obj(E)$ (or $\max$) (e.g., for our LIS or activity selection algorithm), or finding certain feasible sets in $\setfam$ (e.g., SSSP or Huffman Code).

}

\section{Type 1 Algorithms using Efficient Frontier Identifying}\label{sec:type1}

Type 1 algorithms exhibit the special property
where each object maintains a value, and all objects with the same rank have their values in a contiguous range.
In this case, we can use \pabst{} to maintain these values and use a range search to efficiently find the frontier. 
\ifconference{
We show two Type 1 algorithms in the paper (activity selection, and Dijkstra's algorithm) and more in the Appendix (unlimited knapsack in Appendix \ref{app:knapsack}, and Huffman tree in Appendix \ref{app:huffman}).
}
\else{
We show four Type 1 algorithms in the paper (activity selection, unlimited knapsack, Dijkstra's algorithm, and Huffman tree).
}
\fi
Many of them are straightforward.
We do not claim all of them as the main contributions, but use them as simple examples to understand our framework.

\subsection{Activity Selection}

\emph{Activity selection} is a textbook example of greedy or dynamic programming algorithms~\cite{clrs}.
Given a set of activities $S=\{A_i\}$ defined by their start time $s_i$, end time $e_i$ and weight $w_i$, the \defn{activity selection} problem is to find a \feasible{} subset of non-overlapping activities to maximize the total weight.
When all activities have a unit weight, a simple \emph{earliest-end} greedy strategy can solve the problem~\cite{clrs},
i.e., repeatedly selecting the earliest ending activity and removing all \incompatible{} (overlapping) activities.
The general version (arbitrary weight) can be solved by the dynamic programming recurrence:

\vspace{-.05in}
\begin{equation}\label{eqn:activitydp}
  \mathdp[i]=\max_{\timeend_j\le \timestart_i}\mathdp[j]+w_i
\end{equation}
\vspace{-.03in}

The sequential order (and the \emph{index}) is defined by the end time.
We assume all activities are pre-sorted by their end time.
A \feasible{} set is a set of non-overlapping activities.
$\mathdp[i]$, or the \defn{\DPvalue{}} of activity $i$,
means the highest possible weight
by using the first $i$ activities, which must include $A_i$.
Naively computing \cref{eqn:activitydp} needs $O(n^2)$ work.
Since the condition in \cref{eqn:activitydp} is a range of end time,
sequentially, the work can be reduced to $O(n\log n)$ using augmented range queries.
In parallel, an activity $A_i$ depends on all activities ending before $A_i$ starts (see an illustration in \cref{fig:activity}), which leads to $O(n^2)$ dependences in the worst case.
Note that the rank of $A_i$ by definition is the maximum size of the \feasible{} set containing $A_i$ and only the first $i$ activities.
\ifconference{
We first present the following lemma and prove it in Appendix~\ref{app:activityproof}
}
\else{
We first present the following lemma.
}
\fi

\begin{lemma}
\label{lem:activity}
All activities overlapping the earliest-end activity $A_1$ has rank 1.
After removing all activities with rank no more than $k$, suppose the earliest-end activity is $A_j$, then all remaining activities that overlaps with $A_j$ has rank $k+1$.
\end{lemma}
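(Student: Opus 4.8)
The plan is to characterize the rank of each activity directly in terms of the greedy earliest-end selection, and then prove the lemma by induction on $k$. The key structural fact I would establish first is: the rank of an activity $A_i$ equals the maximum number of pairwise non-overlapping activities one can chain ending at $A_i$ using only the first $i$ activities, which is exactly $|\mis(A_i)|$ by definition. Since the sequential order is by end time, I would observe that any feasible (non-overlapping) set, listed in increasing end-time order, forms a chain where each activity starts after the previous one ends. Thus $\rank(A_i)$ counts the longest such chain terminating at $A_i$.

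For the base case, I would argue that every activity overlapping the globally earliest-ending activity $A_1$ has rank exactly $1$. Rank at least $1$ is immediate since $\{A_i\}$ itself is feasible. For rank at most $1$, suppose $A_i$ overlaps $A_1$ and had rank $\ge 2$; then there is a feasible pair $\{A_j, A_i\}$ with $\ind(A_j) < \ind(A_i)$, i.e., $\timeend_j \le \timestart_i$. Because $A_1$ has the earliest end time, $\timeend_1 \le \timeend_j \le \timestart_i$, so $A_1$ ends before $A_i$ starts, contradicting that $A_i$ overlaps $A_1$. Hence every activity overlapping $A_1$ has rank exactly $1$.

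For the inductive step, I would assume that after removing all activities of rank $\le k$, the earliest-end activity among the survivors is $A_j$, and show all survivors overlapping $A_j$ have rank $k+1$. The clean way is to prove by induction that the set of activities removed through stage $k$ is precisely $\{A_i : \rank(A_i) \le k\}$, and that the earliest-ending survivor $A_j$ at stage $k+1$ satisfies $\rank(A_j) = k+1$: any chain ending at $A_j$ must, by the optimal-substructure/compatibility property of \cref{def:pp-algo}, be extendable from a maximum chain ending at some compatible earlier activity, and that predecessor has been removed (rank $\le k$), giving $\rank(A_j) \ge k+1$; conversely $A_j$ cannot have rank $>k+1$ because a chain of length $k+2$ ending at $A_j$ would force a predecessor of rank $\ge k+1$ that also ends no later than $\timestart_j$, contradicting that $A_j$ is the earliest-ending survivor. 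Then for any survivor $A_i$ overlapping $A_j$, I repeat the base-case argument relative to the survivor set: $A_i$ cannot be preceded in a chain by any survivor (the earliest survivor end time is $\timeend_j > \timestart_i$ since they overlap), so its only chain-predecessors are removed activities of rank exactly $k$, yielding $\rank(A_i) = k+1$.

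The main obstacle I anticipate is rigorously pinning down the equivalence ``removed set after $k$ stages $=$ activities of rank $\le k$,'' since the lemma is stated iteratively while rank is defined statically via $\mis$. I would handle this by strengthening the induction hypothesis to assert both that the removed set equals $\{A_i : \rank(A_i)\le k\}$ and that each stage's earliest-end survivor realizes the next rank value; this lets the overlap argument for rank $k+1$ piggyback on the already-established base-case reasoning applied to the restricted survivor instance. The compatibility property from \cref{def:pp-algo} and \cref{thm:rankdepth} (rank equals \DG{} depth) are the tools that justify the ``predecessor of rank exactly $k$'' claim, so I would invoke those rather than re-deriving the substructure from scratch.
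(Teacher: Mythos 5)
Your proof is correct, and its two essential ingredients are exactly the paper's: (i) the base case by contradiction---an earlier compatible predecessor of an activity overlapping $A_1$ would have to end before that activity starts, hence $A_1$ itself would end before it starts, contradicting the overlap; and (ii) the stage-$(k{+}1)$ upper bound by contradiction---a chain of length $>k+1$ ending at a surviving $A_i$ forces a predecessor of rank $>k$, which therefore survives and ends before $\timestart_i<\timeend_j$, contradicting that $A_j$ is the earliest-ending survivor. The difference is packaging: the paper proves the lemma directly, taking the hypothesis ``all activities of rank at most $k$ have been removed'' as given, so no induction on $k$ is needed; you instead run an induction with the strengthened invariant that the removed set after $k$ stages equals $\{A_i : \rank(A_i)\le k\}$. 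That invariant is what justifies applying the lemma iteratively inside the algorithm, so your proof establishes slightly more than the lemma asks, at the cost of extra machinery---and note it silently uses the converse direction (every rank-$(k{+}1)$ activity overlaps $A_j$), which the lemma does not state and which you should prove explicitly (it follows because a rank-$(k{+}1)$ survivor not overlapping $A_j$ would either end before $\timeend_j$ or have $A_j$ as a compatible predecessor, forcing rank $\ge k+2$). Two small imprecisions to fix: a survivor's chain-predecessors have rank \emph{at most} $k$, not ``exactly $k$''; and the lower bound $\rank(A_j)\ge k+1$ comes from $A_j$ being a survivor under your invariant, not from the removed-predecessor observation to which you attach it (that observation gives the upper bound).
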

\iffullversion{
\begin{proof}
 We first prove all activities overlapping the earliest-end activity $A_1$ must have rank 1.
 Recall the rank of an activity $x$ is the maximum number of compatible activities we can select from $x$ and earlier activities, which must include $x$.
 For an activity overlapping $A_1$, if its rank is larger than 1, there must exist another activity $y$ that is earlier than $x$ and compatible with $x$.
 This means $y$ ends before $x$ starts, which contradicts the assumption that the earliest-end activity overlaps with $x$.

 We next prove that after removing all activities of rank no more than $k$, if the current earliest-end activity is $A_j$, all activities overlapping $A_j$ should have rank $k+1$.
 Let $S'$ be the set of activities removed, which have rank no more than $k$.
 Assume to the contrary that one of such activity $A_{i}$ has rank $r>k+1$. Then there must be an activity $A_{i'}$ finishing earlier than $A_i$, and have rank $r'>k$.
 Therefore, $A_{i'}\notin S'$ since we only remove activities with rank no more than $k$.
 However, this contradicts the assumption that $A_j$ is the earliest-end activity in $S\backslash S'$, and $A_j$ already overlaps $A_i$.
\end{proof} 
}
\fi

\begin{algorithm}
\SetKwFor{parForEach}{parallel\_for\_each}{do}{endfor}
\caption{Type-1 activity selection algorithm \label{alg:activity1}}\small
\KwIn{All activities' start time $s_i$, end time $e_i$ and weight $w_i$}
\DontPrintSemicolon
  Build PA-BSTs $T_{\mathit{time}}$ on key-values $(s_i,e_i)$, augmented on the minimum end time, and
  $T_{\mathit{DP}}$ on key-values $(e_i,dp[i])$, augmented on the maximum DP value\\
  \While{$T_{\mathit{time}}\ne \emptyset$\label{step:frameworkwhile}} {
    Find the earliest-end activity $x$ by the augmented value of $T_{\mathit{time}}$\\
    $\langle T,T'\rangle \gets \mathit{split}(T_{\mathit{time}},e_x)$\tcp{All activities starting before $e_x$ form the current frontier T}
    \parForEach{activity $i\in T$}{
      $dp[i]=w_i+T_{\mathit{DP}}.\mathit{range}(-\infty, s_i)$
    }
    Update all DP values of activities in $T$ in $T_{\mathit{DP}}$ in parallel\\
    $T_{\mathit{time}}\gets T'$\tcp{remove finished objects}
  }
\end{algorithm}

Based on the lemma and the \phaseparallel{} framework, we can design an algorithm (Algorithm \ref{alg:activity1}).
To enable work-efficiency, we use a range query to find the largest parallelizable \frontier{}. The algorithm uses two \pabst{s} $T_{\mathit{time}}$ and $T_{\mathit{DP}}$.
$T_{\mathit{time}}$ maintains all unprocessed activities sorted by their start time and augmented on the minimum end time, which is used to identify the \frontier{s}.
$T_{\mathit{DP}}$ maintains all activities sorted by their end time and augmented on the largest DP value, which is used to determine $\max_{e_j\le s_i}dp[j]$ in the DP recurrence.
In each round, we find the earliest-end activity $x$ by reading the augmented value of $T_{\mathit{time}}$. Then we split $T_{\mathit{time}}$ based on $e_x$. Those starting no later than $e_x$ will be split out as the frontier and will be processed in parallel.
Since $T_{\mathit{time}}$ is indexed on start time, \splitn{} takes $O(\log n)$ work.
When processing activity $i$, we use $T_{\mathit{DP}}$ to extract the highest \dpvalue{} among all activities with end time in range $(-\infty,s_i]$,
and use it to update the \dpvalue{} of $i$ in $T_{\mathit{DP}}$.
The work for processing $m$ objects in the frontier is $O(m\log n)$ for the augmented range query, and $O(m\log n)$ for updating the \dpvalue{s} in $T_{\mathit{DP}}$. This leads to the following theorem.

\begin{theorem}\label{them:activity_1}
  Type 1 activity selection algorithm takes $O(n\log n)$ work and $O(\rank(S)\log n)$ span, where $S$ is the input set and $n=|S|$.
\end{theorem}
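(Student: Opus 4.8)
The plan is to bound the work and span by accounting for the total cost across all rounds of Algorithm~\ref{alg:activity1}, using \cref{lem:activity} to justify that each round correctly identifies one full frontier (all activities of a single rank) and that the algorithm terminates in exactly $\rank(S)$ rounds. First I would argue correctness: by \cref{lem:activity}, the activities split out in round $k$ (those starting no later than the end time $e_x$ of the current earliest-end activity $x$) are precisely the activities of rank $k$. Since \cref{thm:phase-parallel} guarantees that objects of the same rank do not depend on one another, processing the frontier in parallel is valid, and the \dpvalue{} computed for each activity $i$ as $w_i + T_{\mathit{DP}}.\mathit{range}(-\infty,s_i)$ is correct because all activities it depends on (those ending no later than $s_i$) have strictly smaller rank and hence were already finalized in $T_{\mathit{DP}}$ in earlier rounds.

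For the work bound, I would partition the total cost by rounds and then re-sum by activity. In the round whose frontier $T$ has size $m$, the costs are: $O(\log n)$ to read the augmented minimum-end value and find $x$; $O(\log n)$ for the \splitn{} on $T_{\mathit{time}}$ (keyed on start time); $O(m\log n)$ for the $m$ parallel range queries against $T_{\mathit{DP}}$; and $O(m\log n)$ for the batch update of the $m$ new \dpvalue{s} into $T_{\mathit{DP}}$, invoking Theorem~2 for the batch operation. Since the frontiers are disjoint and their sizes sum to $n$, the activity-proportional terms sum to $O(n\log n)$. The per-round overhead terms ($O(\log n)$ for finding $x$ and splitting) sum to $O(\rank(S)\log n)$ over the $\rank(S)$ rounds, which is dominated by $O(n\log n)$ since $\rank(S)\le n$. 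This yields total work $O(n\log n)$.

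For the span bound, I would observe that each round contributes $O(\log n)$ span: finding $x$ and the \splitn{} are $O(\log n)$; the parallel-for over the frontier incurs $O(\log n)$ span for the binary-forking loop plus $O(\log n)$ for each independent range query, giving $O(\log n)$; and the batch update into $T_{\mathit{DP}}$ is $O(\log m \log n) = O(\log^2 n)$ by Theorem~2. Since the rounds run strictly sequentially (round $k+1$ reads \dpvalue{s} finalized in round $k$), the spans add across the $\rank(S)$ rounds. Taking the stated per-round span as $O(\log n)$ (treating the batch-update factor as folded into the polylogarithmic overhead), the total span is $O(\rank(S)\log n)$.

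The main obstacle I anticipate is the termination-and-completeness argument rather than the cost arithmetic: I must confirm that every activity is eventually placed in some frontier and that no activity is split out before all of its \DG{} predecessors have been finalized. The former follows because each nonempty $T_{\mathit{time}}$ always has an earliest-end activity $x$ whose own starting time is $\le e_x$, so $x$ itself is always included in the split and $T_{\mathit{time}}$ strictly shrinks each round. The latter is exactly the content of \cref{lem:activity}: an activity overlapping the current earliest-end activity cannot have an unfinished compatible predecessor, so its dependencies in $T_{\mathit{DP}}$ are complete. Careful bookkeeping here ensures the $\rank(S)$-round bound is tight and the DP values computed are correct, after which the work and span totals follow by the summation above.
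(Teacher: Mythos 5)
Your proof follows essentially the same route as the paper's (the paper's own argument is the informal cost accounting in the paragraph preceding \cref{them:activity_1}): use \cref{lem:activity} to show each round's \splitn{} extracts exactly one rank class, charge $O(\log n)$ per round for locating the earliest-end activity and for the \splitn{}, charge $O(m\log n)$ per round for the $m$ range queries and \dpvalue{} updates, and sum over the disjoint frontiers to get $O(n\log n)$ work and over the $\rank(S)$ rounds to get the span. Your extra care about termination and about both directions of the frontier characterization (every remaining rank-$(k+1)$ activity overlaps the current earliest-end one, not only the converse stated in \cref{lem:activity}) is correct and is in fact what makes the $\rank(S)$-round count exact; the paper leaves this implicit.

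The one genuine flaw is in your span accounting for the batch update of \dpvalue{s} into $T_{\mathit{DP}}$. You invoke the multi-map theorem of \cref{sec:prelim}, whose span is $O(\log m\log n)=O(\log^2 n)$ per round, and then declare that this factor can be ``folded into the polylogarithmic overhead.'' That does not prove the theorem: as written, your argument only yields $O(\rank(S)\log^2 n)$ span, which is strictly weaker than the claimed $O(\rank(S)\log n)$. The fix is to cite the right structure: $T_{\mathit{DP}}$ is a 1D augmented \pabst{} (a $k=1$ range-sum structure), so its batch value-updates are covered by the first data-structure theorem in \cref{sec:prelim}, which gives $O(m\log n)$ work and $O(\log n)$ span per batch. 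With that substitution, every round has $O(\log n)$ span and the stated bound follows exactly as in the rest of your summation.
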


\iffullversion{
\subsection{Unlimited Knapsack}
Given a weight limit $W$ and a set of items, where item $i$ has integer weight $w_i$ and value $v_i$, the unlimited knapsack problem is to select some items (each can be selected multiple times) with total weight within $W$, and to maximize the total value of them. Sequentially, this can be solved by processing each possible weight $j= 0,1,\cdots W$, and computing the optimal solution for the corresponding subproblem. Let $\mathdp[j]$, called the \dpvalue{}, be the maximum value we can achieve using weight limit $j$. The DP recurrence is
\vspace{-.03in}
\begin{align}
  \mathdp[j]=\max\{0, \max_{w_i\le j} \mathdp[j-w_i]+v_i\}
\end{align}

\vspace{-.02in}
Two states $x$ and $y>x$ are \compatible{} if forming a solution of weight $y$ can contain the solution of the subproblem of weight $x$.
We can easily verify that the problem is \phaseparallel{}. A state $j$ relies on all states $j-w_i$, for all $i$ with $w_i\le j$. The rank of a state $x$
is the maximum number of items one can select with weight limit $x$.
Let $w^{*}=\min w_i$. The rank of a state $x$ is $\rank(x)=\lfloor x/w^*\rfloor$ because to get the maximum number of items, we should always choose the lightest item.
Applying the \phaseparallel{} framework, the frontier of round $i$ should be all states in range $[(i-1)w^*,iw^*)$.

\begin{theorem}
  There exists a parallel algorithm that can solve the unlimited knapsack problem in $O(nW)$ work and $O(\rank(W)\log n)$ span, where $n$ is the number of items, $W$ is the weight limit, $\rank(W)=W/w^*$ and $w^*$ is the minimum weight of the items.
\end{theorem}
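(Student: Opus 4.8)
The plan is to instantiate the \phaseparallel{} framework (Algorithm~\ref{alg:framework}) directly on the unlimited knapsack DP, so that round $i$ computes the \dpvalue{s} of exactly the states in the frontier $T_i=[(i-1)w^*, iw^*)$. Unlike the other Type~1 algorithms, here the frontier is an explicitly known contiguous range of weights, so no range query or \pabst{} is needed to \emph{identify} it; the only work in each round is evaluating the DP recurrence. I would first restate that the recurrence is \phaseparallel{} with $\rank(j)=\lfloor j/w^*\rfloor$ (already argued in the text above), so that the correctness of processing states round-by-round reduces to two claims about the frontier structure.

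The technical heart is a single inequality establishing both that states in a frontier are mutually independent and that every dependency points to an earlier frontier. Fix a state $j\in T_i$, i.e.\ $(i-1)w^*\le j<iw^*$, and any item of weight $w\ge w^*$. Then $j-w\le j-w^*<iw^*-w^*=(i-1)w^*$, so the state $j-w$ that $j$ relies on lies strictly before the start of $T_i$. This immediately yields: (a) for two states $j,j'\in T_i$ neither relies on the other (consistent with \cref{thm:phase-parallel}), so the whole frontier can be processed in parallel; and (b) when we reach round $i$, every \dpvalue{} needed to evaluate $\mathdp[j]$ has already been computed in a previous round, so the values read are correct. Thus Algorithm~\ref{alg:framework} computes all $\mathdp[j]$ correctly, and the number of rounds is the largest rank, $\rank(W)=W/w^*$.

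For the cost bounds, each state $j$ is evaluated as $\mathdp[j]=\max\{0,\max_{w_i\le j}\mathdp[j-w_i]+v_i\}$ by a parallel reduction over the (at most $n$) items, costing $O(n)$ work and $O(\log n)$ span per state. Summing the work over all $W+1$ states gives $O(nW)$ total work, which matches the sequential DP and hence is work-efficient. The span of one round is the $O(\log n)$ reduction; the parallel-for over the frontier states contributes another $O(\log n)$ span, which does not change the per-round bound of $O(\log n)$. Multiplying by the $\rank(W)$ rounds yields $O(\rank(W)\log n)$ span.

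I expect the one genuinely load-bearing step to be the frontier-independence inequality above: it is where the choice of frontier width $w^*$ and the definition $\rank(j)=\lfloor j/w^*\rfloor$ must line up, and it crucially uses $w_i\ge w^*$ for every item. Everything after that---correctness, the $O(nW)$ work sum, and the per-round $O(\log n)$ max-reduction---is routine counting and a standard parallel reduction.
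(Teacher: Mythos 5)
Your approach coincides with the paper's: the paper offers no explicit proof of this theorem, but the text preceding it sets up exactly the algorithm you describe (rank $\lfloor j/w^*\rfloor$, frontier $T_i=[(i-1)w^*,\, i w^*)$ processed round by round), and your frontier-independence inequality $j-w\le j-w^*<(i-1)w^*$ together with the $O(nW)$ work count are correct and fill in what the paper leaves implicit.

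There is, however, one concrete gap, in the span accounting. You assert that the parallel-for over the frontier states ``contributes another $O(\log n)$ span, which does not change the per-round bound of $O(\log n)$.'' But the frontier $T_i$ contains up to $w^*$ integer states, not $n$ of them, so in the binary-forking model forking over it costs $O(\log w^*)$ span per round, and $w^*$ is not in general polynomial in $n$ (it can be as large as $W$). The correct per-round span is therefore $O(\log w^*+\log n)=O(\log (w^*n))$, giving total span $O(\rank(W)\log (w^*n))$ for this algorithm. This is precisely the bound the paper itself reports for unlimited knapsack in \cref{tab:overall}, namely $O(\rank(S)\log(w^*n))$ --- which is inconsistent with the $O(\rank(W)\log n)$ claimed in the theorem statement you are proving. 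Your proof silently assumes $\log w^*=O(\log n)$; under that assumption the argument is complete, but without it the per-round forking cost is genuinely $\Theta(\log w^*)$ and the stated span bound does not follow.
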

}\fi

\subsection{Algorithms with Relaxed Rank}
\label{sec:sssp}
In some problems, it is hard to find (or use) the exact rank of the objects, in which case we use a \relaxedrank{}, defined as follows.

\begin{definition}
Given a \phaseparallel{} algorithm $\mathcal{A}$ on the independence system $(S,\setfam)$, a function $\rankbar(x)$ on $x\in S$ is a \defn{\relaxedrank{}} on object $x$ if
\begin{itemize}
  \item $\forall x\in S, \rank(x) \le \rankbar(x)$.
  \item For $x,y\in S$ where $x$ relies on $y$ in the \dg{}, $\rank(x)>\rank(y)$.
\end{itemize}
\end{definition}

Then Algorithm \ref{alg:framework} can process all objects with the same \relaxedrank{} in each round.
Note that the trivial \relaxedrank{} is the index of each object $\ind(x)$, which gives no parallelism in Algorithm \ref{alg:framework}.
Therefore, when we use the \relaxedrank{}, we need careful analysis to show non-trivial parallelism.

\myparagraph{Dijkstra's Algorithm.}
Dijkstra's algorithm~\cite{dijkstra1959} solves the single-source shortest paths (SSSP) problem on a weighted graph.
As a sequential iterative algorithm, Dijkstra processes the vertices in the order of their distances to the source and relaxes their neighbors.
\ifconference{
SSSP is very challenging in the parallel setting.
Dijkstra is work-efficient and relaxes each edge exactly once. However, it is hard to parallelize because each round only processes one vertex.
Bellman-Ford has better parallelism but significantly more work.
Almost all state-of-the-art parallel SSSP algorithms (e.g., $\Delta$-stepping~\cite{meyer2003delta} and $\rho$-stepping~\cite{rhostepping}) achieve high parallelism by using more work. 
}
\else{
SSSP is a big open problem in the parallel setting.
Dijkstra is work-efficient in that each edge is relaxed exactly once. However it is hard to parallelize because in each round, only the closest unsettled vertex can be processed.
One could easily get better parallelism using Bellman-Ford, where in each round, all vertices relax their neighbors in parallel, but it has significantly more work.
Even so, parallel Bellman-Ford and its variants are widely used in practice because of better parallelism. There are other algorithms, such as $\Delta$-stepping~\cite{meyer2003delta} and $\rho$-stepping~\cite{rhostepping} that use heuristics to achieve tradeoff between extra work and parallelism.
For example, the $\Delta$-stepping algorithm is a hybrid of Dijkstra and Bellman-Ford.
It determines the correct shortest distances in increments of $\Delta$.
In step $i$, the algorithm will find and settle down all the vertices with distances in $[i\Delta, (i+1)\Delta]$.
Within each step, the algorithm runs Bellman-Ford as substeps, until no more vertices in the given distance range get updated.
}
\fi

The \DG{} of the SSSP problem is conceptually the shortest path tree.
The rank of a vertex $v$ is the hop distance from $v$ to $s$ in the shortest path tree.
However, the algorithm itself is unaware of the explicit structure of \DG{} before the shortest paths are computed.
Hence, the exact rank of each object is hard to acquire.
Let $w^*$ be the smallest edge weight in the graph, and $d(v)$ the actual distance of $v$.
We define a relaxed rank for a vertex $\rankbar(v)=\lceil d(v)/w^*\rceil$. 
This is because 
distances within a window of $w^*$ cannot rely on each other (relaxation increases the distance by at least $w^*$).
Therefore, each frontier can be extracted using a range query.
Interestingly, we observe that this is (conceptually) similar to using $\Delta=w^*$ in $\Delta$-stepping~\cite{meyer2003delta}.
Using \pabst{} to maintain the distances of all vertices, we have the following result.

\begin{theorem}
  There exists a parallel algorithm that solves SSSP problem on a graph $G=(V,E)$ using $O(|E|\log |V|)$ work and \newline $O(\rankbar(V)\log |V|)$ span, where $\rankbar(V)$ is the ratio of the maximum shortest path in the graph and the smallest edge weight.
\end{theorem}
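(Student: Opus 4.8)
The plan is to instantiate Algorithm~\ref{alg:framework} with the relaxed rank $\rankbar(v)=\lceil d(v)/w^*\rceil$ and a single \pabst{} $T$ that stores every unsettled vertex keyed on its current tentative distance. I initialize $T$ with tentative distance $0$ for the source $s$ and $\infty$ for every other vertex. Each round reads the smallest tentative distance $d_{\min}$ (the leftmost key of $T$), sets $i=\lceil d_{\min}/w^*\rceil$, and uses two \splitn{} operations to extract the \frontier{} $T_i$ of all vertices whose tentative distance lies in the window $((i-1)w^*,\,iw^*]$. Jumping straight to the window of $d_{\min}$ skips empty windows, which is what makes the work bound possible. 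I then settle the vertices of $T_i$ in parallel and relax all their outgoing edges; because several frontier vertices may relax edges into the same target, I resolve the competing proposals with \writemin{} and then batch-update $T$, deleting the settled vertices and re-inserting each affected neighbor under its new key.

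Correctness is an induction over rounds resting on one fact: every edge raises the distance by at least $w^*$. First I would prove the key lemma: if $v$ has true distance $d(v)\in((i-1)w^*,iw^*]$, then the predecessor $u$ of $v$ on a shortest path satisfies $d(u)=d(v)-w(u,v)\le d(v)-w^*\le(i-1)w^*$, so $\rankbar(u)<\rankbar(v)$ and $u$ is settled in a strictly earlier round. By the inductive hypothesis every such $u$ has already relaxed the edge into $v$ before round $i$ begins, so $v$'s tentative distance equals $d(v)$ at the instant the frontier is extracted. Conversely, since a tentative distance never drops below the true distance, no vertex with $d(\cdot)>iw^*$ can fall in the window, and every vertex with $d(\cdot)\le(i-1)w^*$ was removed in an earlier round; hence the range query returns exactly the vertices whose true distance lies in window $i$, all already correct. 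The same gap shows two vertices in one window cannot lie on each other's shortest path, so settling the entire frontier at once is safe---an intra-round relaxation between frontier vertices can only propose a distance exceeding $iw^*$ and is therefore harmless. This is precisely the situation of \cref{thm:phase-parallel} and verifies that $\rankbar$ is a valid relaxed rank.

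For the cost I would charge relaxations and frontier extraction separately. Each vertex is settled exactly once, so each outgoing edge is relaxed exactly once, giving $O(|E|)$ relaxations; summed over rounds these form batch updates whose total size is $O(|E|)$, costing $O(|E|\log|V|)$ work by the \pabst{} batch-update bound. Because empty windows are skipped, at most $|V|$ rounds do real work, each spending one find-min and two \splitn{} operations in $O(\log|V|)$, so extraction adds only $O(|V|\log|V|)$, which together with the $O(|V|\log|V|)$ initial build stays within $O(|E|\log|V|)$ for a connected graph. For the span, the rounds form a sequential chain and there are at most $\rankbar(V)$ of them (at most one per window index), while each round's critical path---find-min, two splits, a parallel-for of relaxations with \writemin{}, and one batch update---has depth $O(\log|V|)$, giving the claimed $O(\rankbar(V)\log|V|)$ span.

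The main obstacle is the correctness induction: one must show simultaneously that the $w^*$-gap forces every shortest-path predecessor of a window-$i$ vertex into an earlier window and that the invariant ``tentative $\ge$ true distance'' excludes all later-window vertices, so that the single range query in round $i$ returns window $i$ exactly and with already-correct values, and that relaxations among frontier vertices never corrupt an extracted distance. A secondary point to check is that skipping empty windows preserves the span bound, which it does because the number of non-empty windows is at most $\rankbar(V)$.
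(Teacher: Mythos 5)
Your proposal is correct and follows essentially the same route as the paper: the relaxed rank $\rankbar(v)=\lceil d(v)/w^*\rceil$, a \pabst{} keyed on tentative distances with one range extraction per window (conceptually $\Delta$-stepping with $\Delta=w^*$), and the observation that a $w^*$-window cannot contain both a vertex and its shortest-path predecessor. The paper only sketches this argument; your write-up supplies the induction (frontier $=$ exactly the window's vertices, with already-correct tentative distances) and the work/span accounting that the paper leaves implicit, and both are sound.
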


We note that there can be other ways to define the relaxed rank of the Dijkstra's algorithm~\cite{crauser1998parallelization,kainer2019more}, which enable different bounds to the \phaseparallel{} algorithms. In the paper we simply discuss the version based on the smallest edge weight, since it can be easily tested using a $\Delta$-stepping-based implementation.

In our experiments in \cref{sec:exp}, we use the $\Delta$-stepping implementation in~\cite{rhostepping} with $\Delta=w^*$ to test the performance of this idea. On low-diameter graphs with reasonably large $w^*$, setting $\Delta=w^*$ gives the best performance among all choices of parameter $\Delta$ because of the work-efficiency.

\begin{figure*}
  \centering
  \vspace{-.1in}
  \begin{minipage}[h]{0.4\textwidth}
  \vspace{-.25in}
  \includegraphics[width=\columnwidth]{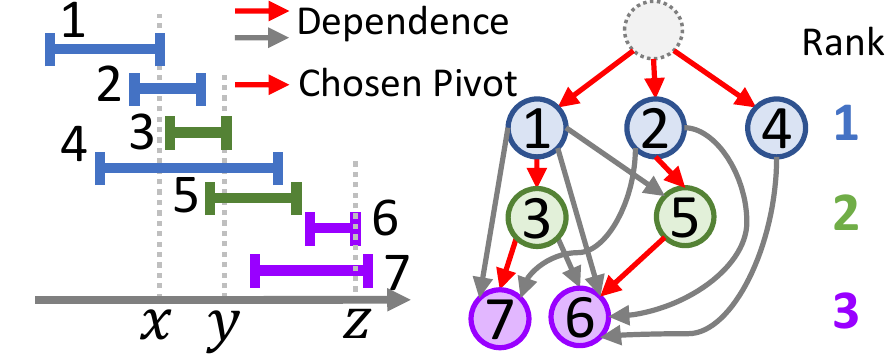}
  \caption{\small \textbf{Illustration of the activity selection problem.}
  Left: The start and end time of 7 activities (ordered by end time).
  Right: The dependences between activities.
  Rank 1 activities start before $x$ (shown in blue).
  Rank 2 activities start before $y$ (shown in green).
  Rank 3 activities start before $z$ (shown in purple).
  Red dependences are the pivots chosen in the Type 2 algorithm.
  The pivot of an object is the compatible activity before it with the latest start time.
  A rank-$r$ object has a pivot with rank $r-1$.
  \label{fig:activity}}
  \end{minipage}\hfill
  \begin{minipage}[t]{0.58\textwidth}
      \includegraphics[width=\columnwidth]{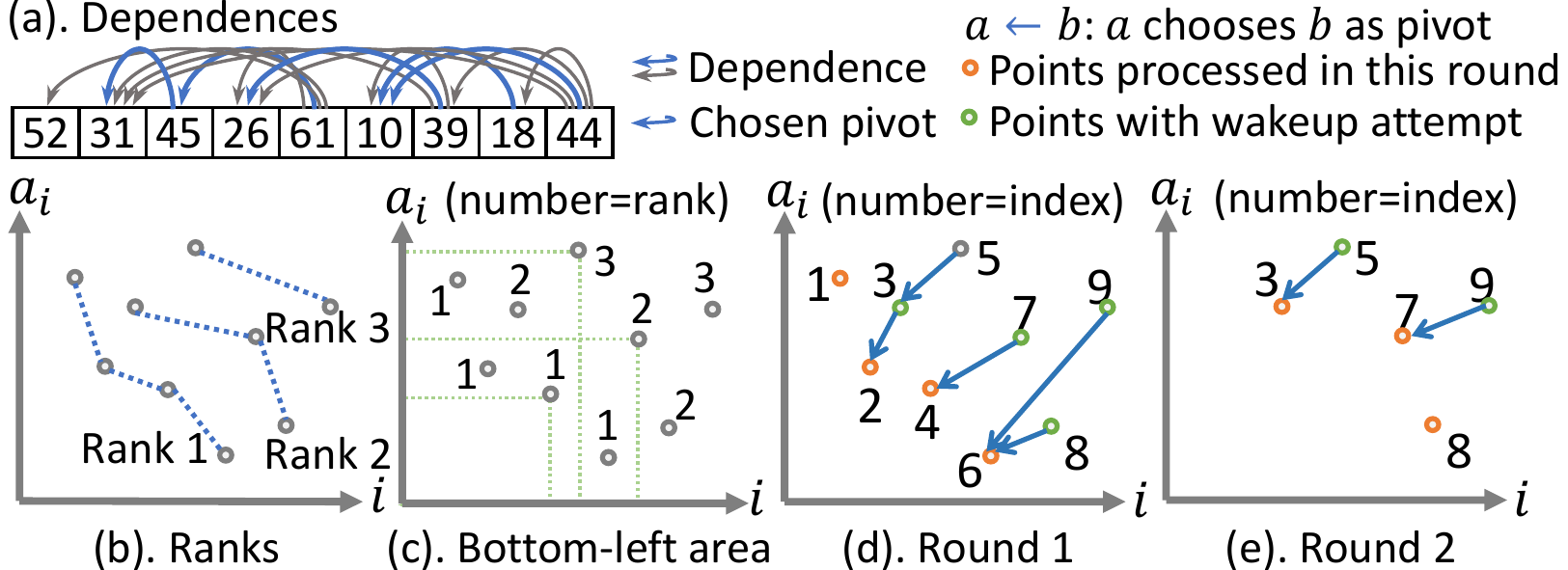}
  \caption{\small \textbf{Illustration of the LIS algorithm.}
  (a). Objects, their dependences, and the random pivot chosen.
  (b). Ranks of the objects represented as 2D points $(i,a_i)$.
  (c). Objects (points) and their bottom-left area. A point with rank $i$ only has points with rank $<i$ in its bottom-left area.
  (d). Points processed (\nodecircle{1}, \nodecircle{2}, \nodecircle{4}, \nodecircle{6}) and waking-up attempts in round 1. \nodecircle{2} wakes up \nodecircle{3}. \nodecircle{4} wakes up \nodecircle{7}. \nodecircle{6} attempts to wake up \nodecircle{8} and \nodecircle{9}, but \nodecircle{9} is not ready.
  (e). Points processed (\nodecircle{3}, \nodecircle{7}, \nodecircle{8}) and waking-up attempts in round 2. 
  \label{fig:lis}
  }
  \end{minipage}
  \vspace{-.2in}
\end{figure*} 
\ifconference{
\myparagraph{Huffman Tree.} Due to page limitation, we discuss the problem in the full version of this paper~\cite{iterativefull}, and present the result here.
}
\fi
%

\def\huffmanthm{
\begin{theorem}
	\label{thm:huffman}
	There exist a parallel algorithm that constructs a Huffman tree for $n$ input objects in $O(n\log n)$ work and $O(H\log n)$ span, where $H$ is the Huffman tree height.
\end{theorem}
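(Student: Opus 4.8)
The plan is to realize Huffman tree construction as a Type 1 \phaseparallel{} algorithm in which a node's rank is its subtree height. Feasible sets here are chains of nodes related by the code-prefix relation (i.e., ancestor--descendant chains), so $\mis(x)$ is a longest downward chain from $x$ to a leaf and $\rank(x)$ is the subtree height of $x$; by \cref{thm:rankdepth} this is exactly the depth in the \DG{}, so the number of rounds will be the tree height $H$. Concretely, I would first sort the $n$ symbols by frequency ($O(n\log n)$ work, $O(\log n)$ span) and keep the current set of \emph{active} nodes (those whose parent is not yet created) in a \pabst{} keyed and augmented by frequency, supporting min-extraction, predecessor search by a value threshold, \splitn{}, and batched sorted insertion, each in $O(\log n)$ span by the data-structure theorems. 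Each round realizes one frontier of \cref{alg:framework}: with $a_1\le a_2\le\cdots$ the sorted active nodes, set the threshold $\tau=a_1+a_2$, use a predecessor query to find the longest prefix $a_1,\dots,a_j$ with $a_j\le\tau$, let $2k$ be the largest even number at most $j$, \splitn{} off this prefix, combine the adjacent pairs $(a_1,a_2),(a_3,a_4),\dots,(a_{2k-1},a_{2k})$ into $k$ new internal nodes in parallel, and batch-insert them back. The frontier is thus the contiguous prefix $a_1,\dots,a_{2k}$, matching the Type 1 requirement.

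Second, I would show this batched rule agrees with the sequential two-smallest greedy, hence is optimal. The threshold condition $a_{2k}\le\tau=a_1+a_2$ is the key: the smallest newly created node has frequency $a_1+a_2\ge a_{2k}\ge a_{2i}$ for all $i\le k$, so after forming $a_1+a_2$ the greedy always prefers the next two original actives $a_3,a_4$ over any new node, and inductively pairs exactly $(a_1,a_2),\dots,(a_{2k-1},a_{2k})$ before touching a new node; the surviving actives (all $>\tau$ except possibly the single leftover $a_{2k+1}$) are precisely those entering the next round.

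Third---and this is the main obstacle---I would bound the number of rounds by $H$ by proving by induction on $r$ that the first (smallest) node created in round $r$ has subtree height exactly $r$. Every active node at the start of round $r$ was created in an earlier round, so it has height $\le r-1$, giving height $\le r$ for any node built in round $r$. For the matching lower bound, the inductive hypothesis supplies a round-$(r-1)$ node $s$ of height $r-1$ whose frequency equals that round's threshold $\tau^{(r-1)}=a_1^{(r-1)}+a_2^{(r-1)}$; since round $r-1$ combined the \emph{maximal} prefix of value $\le\tau^{(r-1)}$, at most one surviving node ($a_{2k+1}^{(r-1)}$) can be smaller than $s$, so $s$ is the first or second smallest active node in round $r$. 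Hence the two smallest actives in round $r$ have maximum height exactly $r-1$, and their combination---the first node of round $r$---has height $r$. Therefore the final round creates the root at height equal to its round index, while every node's height is at most its round index, forcing the number of rounds to equal $H=\rank(S)$.

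Finally I would total the costs. Each internal node is created exactly once, so $\sum_r k_r=n-1$; the sorted batch insertions and the parallel prefix combines contribute $O(n\log n)$ work and $O(\log n)$ span per round, and the per-round min, threshold search, and \splitn{} add $O(\log n)$ each. Summing over the $H$ rounds yields $O(n\log n)$ total work and $O(H\log n)$ span, as claimed.
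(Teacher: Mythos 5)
Your proposal is correct and matches the paper's own construction essentially step for step: the same per-round rule (threshold $\tau = a_1+a_2$, split off the prefix of active nodes below it, combine adjacent pairs, postpone an odd leftover to the next round) with the same cost accounting ($\sum_r k_r = O(n)$, giving $O(n\log n)$ work and $O(\log n)$ span per round over $H$ rounds). Your induction that the node formed from the two smallest actives in round $r$ has height exactly $r$ is a bottom-up restatement of the paper's relaxed-rank argument, which tracks the successive ancestors of the minimum-frequency leaf---these are the same chain of nodes---so the two proofs of the $H$-round bound coincide in substance (your version is in fact spelled out more carefully, e.g., the agreement with the sequential greedy, which the paper treats informally; the only nitpick is that with tied frequencies the claim that $s$ is among the two smallest actives needs a consistent tie-breaking rule, though the conclusion that the lone node created in the final round has height equal to the round count survives regardless).
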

}

\ifconference{
\huffmanthm
}
\else{
\myparagraph{Huffman Tree}\label{sec:huffman}
The Huffman tree organizes a set of objects $S_i$ with their \emph{frequency} $f_i$.
Constructing a Huffman tree is to iteratively extract two objects $S_i$ and $S_j$ with the smallest frequencies,
removes them, and re-inserts a new object $S_k$ with frequency $f_k=f_i+f_j$ to the object set.
This can be represented using the \emph{Huffman tree} where $S_k$ is the parent of $S_i$ and $S_j$.
Each iteration in the algorithm processes (generates) internal nodes in order. 
The iteration of creating node $v$ depends on the creation of all its descendants.

To parallelize this, note that the Huffman tree itself is the \dg{} of the problem. Two objects are \compatible{} if there is a path between them on the Huffman tree (i.e., one is the prefix of the other). The rank of each object is its subtree height. Clearly, we can achieve the best parallelism if we first process all objects with height 1, then height 2, 3, etc. However, it is again hard to follow the \emph{exact} set of objects with a certain height. We here present a parallel algorithm below and show it corresponds to a valid \relaxedrank{}.
The high-level idea is that, in every round, we can first combine the two smallest frequencies and get their sum of $f_m$, and then all objects with frequency smaller than $f_m$ can be processed in parallel.
We note that this idea is known\footnote{We are aware of a similar algorithm used in some parallel algorithm courses, but did not find a formal reference.}. Here we just use it as an additional example of Type 1 algorithms.

Because of the greedy essence of the algorithm, if the sum of the current smallest two frequencies is $f_m$, all objects with frequency smaller than $f_m$ are clearly ready because no further objects will have frequency smaller than $f_m$.
Therefore, we can split all characters using frequency $f_m$, and those with smaller frequencies will be the frontier $T$ (if there are an odd number of such objects, we leave out the last one).
Then we pair these objects in order, and combine each pair's frequency to generate $|T|/2$ new objects (internal nodes) and merge them with the objects whose frequency is larger than $f_m$.
We repeat this process until the last object is created.
The work for extracting minimum and splitting is $O(\log n)$.
Combining the new objects with the old unused ones takes time $O(|T|\log n)$.
Considering the total size of all set $T$ in the algorithm is $O(n)$, the algorithm has work $O(n\log n)$, which is asymptotically the same as the preprocessing sorting time. The span of the algorithm is $O(\rank(S)\log n)=O(H\log n)$ where $H$ is the height of the Huffman tree. Note that although here the objects are not processed strictly based on the order of rank (the special case is usually caused by having $|T|$ as an odd value and having to postpone one to the next round),
the final number of rounds is still $O(\rank(S))$, which is round-efficient.

\begin{definition}
\label{def:relaxedrank}
The \relaxedrank{} of the parallel Huffman tree algorithm can be defined as follows. Assume the object with the least frequency is $a_0$ with frequency $f_0$, the path from $a_0$ to the root contains
nodes with frequencies $f^*_0=f_0,f^*_1, f^*_2,\cdots, f^*_H=1$, where $H$ is the Huffman tree height.
An object with frequency $f'$ has \relaxedrank{} $i$ if $f^*_i\le f' < f^*_{i+1}$.
\end{definition}

We can easily verify that this is a valid \relaxedrank{} as defined in Definition \ref{def:relaxedrank}.
We then show that our algorithm will process the objects based on their \relaxedrank{}.
Note that the objects on the path from $a_0$ to the root will be processed in this order. i.e., $a_0$'s parent will be processed in the first round,
$a_0$'s grandparent will be processed in the second round, so on so forth. This is because in each round, the ancestors of $a_0$ must be one of the two objects
with the smallest frequency.
Therefore, if $f^*_i\le f' < f^*_{i+1}$, the object with frequency $f'$ must be processed in the same round as $f^*_i$, which is consistent with its \relaxedrank{} $i$.

It is easy to see that the algorithm finishes in $H$ rounds, since the largest \relaxedrank{} is $H$. This means that, even though we are not processing the objects based on the exact ranks, the algorithm is still round-efficient and finishes in $O(H)=O(\rank(S))$ rounds. This proves \cref{thm:huffman}.
\iffullversion{
\huffmanthm
}
\fi
We first note that the work of the algorithm is dominated by sorting all input frequencies. In fact, one could reduce the work of the later process (other than sorting) to linear, which is also a well-known trick in the sequential Huffman tree construction algorithm. We did not especially show the linear-time version because this version fits directly in our framework.


Note that the span can be linear in the worst case, but it is less likely in realistic settings. This is because a tree of height $H$ indicates that $\max f_i/\min f_i$ is $O(2^H)$. Considering the precision supported by the machine, $H=O(\log W)$, where $W$ is machine word size.

\hide{all internal nodes $v$ with height $h(v)=1$ can be first generated in parallel.
This is because all such nodes depend on the creation of two leaves, and their information is available in the input.
None of the other nodes can be processed, because they all depend on some other internal nodes that have not been created yet.
Similarly, then all internal nodes with height 2 are ready, and so on.
We say two nodes are compatible if one node is the ancestor of the other (the corresponding Huffman code is a prefix of the other), and the rank of a node is its height in the tree.
One could easily verify that this algorithm is \phaseparallel{}.
} 
}
\fi 
\section{Type 2 Algorithms with Wake-up Strategies}\label{sec:type2}
\label{sec:lis}

In \phaseparallel{} algorithms, an object is ready when all its predecessors $\mathcal{P}(x)$ finish.
Previous approaches require explicitly generating $\mathcal{P}(x)$ for each $x\in S$~\cite{blelloch2020optimal,blelloch2020randomized,blelloch2018geometry,blelloch2016parallelism,shun2015sequential} (achieving work-efficiency only when $|\pred(x)|=O(1)$),
checking the readiness of all objects every round~\cite{blelloch2012internally} (not necessarily work-efficient),
or based on dual-binary search~\cite{BFS12,hasenplaugh2014ordering} (incurs overhead in span).
To avoid exhaustedly checking the readiness of every object,
Type 2 algorithms aim to \emph{wake up} an object $x$ when the \emph{last} object in $\pred(x)$ finishes.

We propose two wake-up strategies. 
Our first approach, which we believe is very interesting, is to avoid explicitly generating $\mathcal{P}(x)$,
and check the readiness of an object $x$ when $x$ is likely to be ready.
To do so, we attach each object $x$ to an unfinished object $p_x\in\pred(x)$, called the \emph{pivot}, which blocks $x$.
We redo the check only when $p_x$ is ready, which bounds the number of total checks to be $O(\log |\mathcal{P}(x)|)$ \whp{}.
\ifconference{
We show activity selection and longest increasing sequence (LIS) as examples, and more applications in the full version.
}
\else{
We show activity selection and longest increasing sequence (LIS) as examples, and more applications in the appendix.
}
\fi

The second approach applies to algorithms that can afford to generate $\mathcal{P}(x)$ for each $x\in S$. 
Our idea is to build an asynchronous structure using \tas{} to precisely identify when the last object in $\pred(x)$ is ready, and achieve better span.
\ifconference{
We present the greedy MIS algorithm as an example in \cref{sec:mis}, and more discussions in the full version of the paper.
}
\else{
We present the greedy MIS algorithm as an example and other similar problems in \cref{sec:mis}.
\fi

\subsection{Activity Selection}


We now revisit the activity selection problem and present an algorithm using Type 2 framework. Recall the DP recurrence $\mathdp[i]=\max_{\timeend_j\le \timestart_i}\mathdp[j]+w_i$.
Therefore, $A_i$ is ready when all other activities with end time before $\timestart{}_i$ have been processed.
Our idea is to let each activity $x$ find a \emph{pivot} $p_x$, where finishing processing $p_x$ indicates the readiness of $x$.
We prove the following lemma.

\begin{lemma}
\label{lem:lateststart}
  Given activity $\ax$, let activity $\ay=\arg \max_{A_i:\timeend_i \le \timestart_x}\timestart_i$ be the latest-start activity among all activities ending before $\ax$ starts (i.e., those earlier than $\ax$ and compatible with $\ax$), and call $\ay$ the \defn{\leading{} activity} of $\ax$. Then $\rank(\ax)=\rank(\ay)+1$.
\end{lemma}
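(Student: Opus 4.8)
The plan is to prove the two inequalities $\rank(\ax) \ge \rank(\ay)+1$ and $\rank(\ax) \le \rank(\ay)+1$ separately, where $\ay$ is the \compatible{} predecessor of $\ax$ with the latest start time. First I would record that $\ay$ is earlier than $\ax$ and \compatible{} with it: since $e_{p_x} \le s_x$, the two activities do not overlap, and $e_{p_x} \le s_x < e_x$ gives $\ind(\ay) < \ind(\ax)$. For the lower bound, I take an MFS $\mis(\ay)$ of $\ay$. Every activity in $\mis(\ay)$ lies in $(\ay)^{\downarrow}$ and hence ends no later than $e_{p_x} \le s_x$, so each is \compatible{} with $\ax$; therefore $\mis(\ay)\cup\{\ax\}$ is a \feasible{} set ending at $\ax$ of size $\rank(\ay)+1$, which yields $\rank(\ax)\ge\rank(\ay)+1$.

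The crux is the upper bound, which I would reduce to a monotonicity (exchange) claim: \emph{among all activities ending at or before $s_x$, the rank is maximized by $\ay$} (the one with the latest start time). Granting this claim, I take an MFS $\mis(\ax)$ and remove $\ax$ itself. The remainder is \feasible{}, and its latest-ending member $z$ satisfies $e_z \le s_x$ (as $z$ is \compatible{} with and earlier than $\ax$), so $\mis(\ax)\setminus\{\ax\}$ is a feasible set ending at $z$ and has size at most $\rank(z)$. Combining with the claim, $\rank(\ax) = |\mis(\ax)| \le \rank(z)+1 \le \rank(\ay)+1$, as desired.

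Proving the monotonicity claim is the step I expect to be the main obstacle, and I would handle it by a case split. Fix any activity $z$ with $e_z \le s_x$; by maximality of the start time of $\ay$, we have $s_z \le s_{p_x}$. If $e_z \le s_{p_x}$, then $z$ is \compatible{} with and earlier than $\ay$, so $\mis(z)\cup\{\ay\}$ is \feasible{} and $\rank(\ay)\ge\rank(z)+1$. Otherwise $s_{p_x} < e_z$, so $z$ and $\ay$ overlap and cannot both be kept; here I instead drop $z$ from $\mis(z)$. Every remaining activity ends at or before $s_z \le s_{p_x}$ and so is \compatible{} with $\ay$, so $(\mis(z)\setminus\{z\})\cup\{\ay\}$ is \feasible{}, ends at $\ay$, and has size $\rank(z)$, giving $\rank(\ay)\ge\rank(z)$. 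In either case $\rank(z)\le\rank(\ay)$, establishing the claim.

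Finally I would dispose of the degenerate case: if no activity ends before $s_x$ then $\ay$ does not exist, $\mis(\ax)\setminus\{\ax\}=\emptyset$, and $\rank(\ax)=1$, so the statement holds under the convention $\rank(\emptyset)=0$; this is also consistent with the rank-$1$ characterization in \cref{lem:activity}.
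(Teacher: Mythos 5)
Your proof is correct, and while it rests on the same exchange idea as the paper's---swapping $\ay$ into a feasible set in place of some other activity, justified by $\ay$ having the latest start time among activities ending by $\timestart_x$---the decomposition is genuinely different. The paper argues by contradiction directly on $T=\mis(\ax)$: it removes $\ax$ together with the latest-\emph{starting} activity $A_k$ of $T\setminus\{\ax\}$, observes that $\ay\notin T\setminus\{\ax,A_k\}$ and that every remaining activity ends before $\timestart_k\le \timestart_{p_x}$, and inserts $\ay$ to obtain a feasible set of size $|T|-1$ ending at $\ay$, contradicting $|T|>\rank(\ay)+1$. You instead factor the upper bound through the latest-\emph{ending} member $z$ of $\mis(\ax)\setminus\{\ax\}$ (giving $\rank(\ax)\le\rank(z)+1$) plus a standalone monotonicity claim $\rank(z)\le\rank(\ay)$, proved by a two-case swap on $\mis(z)$: insert $\ay$ when $z$ and $\ay$ are \compatible{}, otherwise trade $z$ for $\ay$. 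Your route is slightly longer but buys a reusable fact---among the \compatible{} predecessors of $\ax$, rank is monotone in start time---and it explicitly treats the degenerate case where $\ay$ does not exist, which the paper leaves implicit. One small point you should make explicit: when $\ay$ does exist, your already-proved lower bound gives $\rank(\ax)\ge\rank(\ay)+1\ge 2$, so $\mis(\ax)\setminus\{\ax\}$ is nonempty and $z$ is well defined.
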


\begin{proof}
  By definition, $\ax$ depends on $\ay$ because $\ax$ is \compatible{} with all \MFS{} ending with $\ay$.
  This indicates $|\mis(\ay)|=\rank(\ay)< \rank(\ax)=|\mis(\ax)|$.
  We then show $\mfs(\ay)\cup \{\ax\}$ is an MFS of $\ax$.
  Assume to the contrary that $t=|\mis(\ax)|>|\mis(\ay)|+1$.
  Consider such an \MFS{} $T=\mis(\ax)$ where $|T|=t$.
  Let activity $A_k$ be the activity with the latest starting time in $T-\{\ax\}$.
  We first prove $\ay\notin T-\{\ax{},A_k\}$. This is because if $\ay\in T-\{\ax{},A_k\}$,
  $\ay$ should have an earlier starting time than $A_k$ (by definition of $A_k$), which contradicts the definition of $\ay$.
  Similarly, all other activities in $T-\{\ax\}$ have earlier end time than $\timestart_k$.
  By the definition of $\ay$, $A_k$ starts no later than $\ay$ ($\ay$ is the latest starting activity before $\ax$ and compatible with $\ax$).
  This means that $\ay$ is also compatible with (and later than) $T-\{\ax,A_k\}$.
  $\mis(\ay)\ge |(T-\{\ax{},A_k\})\cup \{\ay\}|=t-1$, contradicting the assumption of $t=|\mis(\ax)|>|\mis(\ay)|+1$.
\end{proof}

We show an example of \leading{} activities in \cref{fig:activity}. \cref{lem:lateststart} implies that the \leading{} activity $p_x$ of any activity $x$ must be processed in the previous round of when $x$ is processed.
In other words, once $p_x$ is finished, we can wake up $x$ and process it in the next round.
In this case, we can first let all activities find their pivot via binary searches,
which is $O(\log n)$ work per activity.
We use a tree $\tpivot$ as a multi-map to store all pairs $(p_x,x)$.
We start with processing all activities with rank 1.
For each activity $y$ in the current \frontier{}, after processing them,
we find all pairs $(y,z)\in \tpivot$ and put all such $z$ in the next frontier (they will be wakened up).
An activity can be processed (computing its DP value) similarly as in Type 1 by using a \pabst{} $T_{\mathit{DP}}$.
We have the following theorem.

\begin{theorem}\label{them:activity_2}
  Type 2 activity selection algorithm takes $O(n\log n)$ work and $O(\rank(S)\log n)$ span, where $S$ is the input set and $n=|S|$.
\end{theorem}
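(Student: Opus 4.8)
The plan is to prove correctness first---that each activity is processed in exactly the round equal to its rank, so the loop runs for $\rank(S)$ rounds---and then charge the work and span to these rounds.

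\emph{Correctness and round count.} I would argue by induction on the round index $i$ that the \frontier{} processed in round $i$ is exactly the set of activities of rank $i$, and that every such activity is \ready{} when processed. The base case $i=1$ is the set of activities with no earlier compatible activity (equivalently, those whose pivot search returns empty), which are precisely the rank-1 activities and form the initial \frontier{}. For the inductive step, fix $\ax$ with $\rank(\ax)=i+1$ and let $\ay$ be its \leading{} (pivot) activity. By \cref{lem:lateststart}, $\rank(\ay)=\rank(\ax)-1=i$, so by the inductive hypothesis $\ay$ is processed in round $i$ and therefore wakes $\ax$ into the \frontier{} of round $i+1$; conversely, since each activity has a unique pivot whose rank is exactly one smaller, nothing of rank $\ne i+1$ is woken into round $i+1$. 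For readiness, each predecessor $y\in\pred(\ax)$ satisfies $\timeend_y\le\timestart_x$, hence $\ax$ relies on $y$, and by the corollary to \cref{thm:phase-parallel} we get $\rank(y)<\rank(\ax)=i+1$; thus $y$ was processed in some round $\le i$ and is finished before round $i+1$, so the range-max query on $T_{\mathit{DP}}$ returns the correct $\max_{\timeend_j\le\timestart_i}\mathdp[j]$. Since the largest rank is $\rank(S)$, the loop runs for exactly $\rank(S)$ rounds.

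\emph{Work.} The one-time cost is dominated by sorting the $n$ activities by end time and building the two \pabst{s}, all $O(n\log n)$. Each activity then locates its pivot with a single augmented range query (the compatible earlier activity with the latest start time), costing $O(\log n)$, and the $n$ pairs $(p_x,x)$ are batch-inserted into $\tpivot$ in $O(n\log n)$. Over all rounds, processing an activity is one range-max query on $T_{\mathit{DP}}$ plus one value update; since each activity is processed once, this is $O(n\log n)$ total. Finally, because every activity has exactly one pivot, the total number of pairs ever retrieved from $\tpivot$ across all wake-ups is exactly $n$, so charging the per-query logarithmic overhead gives $O(n\log n)$ for all wake-ups. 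Summing, the work is $O(n\log n)$.

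\emph{Span.} There are $\rank(S)$ rounds; within a round the range-max queries over the current \frontier{} run in parallel in $O(\log n)$ span and the batch DP update into $T_{\mathit{DP}}$ is $O(\log n)$ span. The delicate point---which I expect to be the main obstacle---is the wake-up step: a naive batch lookup of all \frontier{} keys in the multi-map $\tpivot$ costs $O(\log^2 n)$ span per round (an extra $\log$ from the batch size), which would only yield $O(\rank(S)\log^2 n)$. To reduce this to $O(\log n)$ per round, I would precompute, immediately after the pivots are found, the reverse adjacency---for each activity $y$ the list of activities having $y$ as pivot---by a single bucketing/semisort of the $n$ pairs by key in $O(n\log n)$ work. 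Waking a \frontier{} then amounts to reading these lists in parallel, which is $O(\log n)$ span. With each round at $O(\log n)$ span, the total span is $O(\rank(S)\log n)$, the one-time preprocessing contributing only an additive lower-order term.
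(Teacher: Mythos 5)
Your proof is correct and, at its core, it is the same argument the paper relies on: the paper states this theorem with no dedicated proof, letting it follow from the algorithm description plus \cref{lem:lateststart}, and your induction (the round-$i$ \frontier{} equals the set of rank-$i$ activities; every predecessor has strictly smaller rank and is therefore finished, so the range-max query on $T_{\mathit{DP}}$ is correct) together with the per-round cost charging is exactly the argument the paper leaves implicit. The one place you genuinely go beyond the paper is the wake-up span, and your concern there is well founded: the paper implements wake-ups as a batch multi-find on the multi-map $\tpivot$, and by the paper's own preliminary theorem such a batch operation costs $O(\log m\log n)$ span, so each round would cost $O(\log^2 n)$ and the algorithm as literally described yields only $O(\rank(S)\log^2 n)$ span. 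Your fix---exploiting the fact that in activity selection pivots are computed once and never reassigned (in contrast to the LIS algorithm, where failed wake-up attempts force new pivots, so a dynamic $\tpivot$ is unavoidable there), which lets you materialize the pivot relation up front as static reverse-adjacency lists via a single semisort---removes the extra logarithmic factor per round and is what actually justifies the claimed $O(\rank(S)\log n)$ bound. One small caveat: the one-time preprocessing (sorting, tree construction, building the lists) itself has $O(\log^2 n)$ span (or $O(\log n)$ span \whp{} with randomized sorting), so for very small ranks the bound is more precisely $O(\log^2 n + \rank(S)\log n)$; this caveat applies equally to the paper's statement of the theorem.
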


\myparagraph{An $O(\log n)$ span algorithm for unweighted activity selection.}
Based on \cref{lem:lateststart}, we can further design a parallel algorithm with better span for the \emph{unweighted} activity selection problem, where each activity has a unit weight ($w_i=1$).
Note that this is equivalent to computing the $\rank$ of each activity.
Based on \cref{lem:lateststart}, we can rewrite the DP recurrence for the unweighted version as:

\vspace{-.1in}
$$\mathdp[i]=\mathdp[j]+1:A_j \text{ is the \leading{} activity of }A_i$$
\vspace{-.1in}

This simplifies the \dg{} to a tree structure, where each activity only relies on its \leading{}. The rank of each activity is also its depth in this tree, which can be computed using a standard tree contraction~\cite{blelloch2020randomized} in $O(n)$ work and $O(\log n)$ span \whp.

\begin{theorem}
  The unweighted activity selection problem can be solved in $O(n\log n)$ work and $O(\log n)$ span \whp.
\end{theorem}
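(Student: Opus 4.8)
The plan is to exploit the simplified recurrence $\mathdp[i]=\mathdp[j]+1$, where $A_j$ is the \leading{} activity of $A_i$, so that the dependence structure collapses from a dense DAG into a \emph{forest}: each activity carries a single parent pointer to its \leading{} activity. Since \cref{lem:lateststart} guarantees $\rank(A_i)=\rank(A_j)+1$, computing the rank of every activity is exactly computing its depth in this forest, and the answer to the unweighted problem is $\max_i \rank(A_i)=\rank(S)$, recovered by a final reduction. The whole point is to replace the round-by-round frontier processing (whose span scales with $\rank(S)$) by a single pass whose span is independent of the forest depth.

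First I would compute the \leading{} activity, i.e.\ the parent pointer, for every activity in parallel. Recall the \leading{} activity of $\ax$ is $\arg\max_{A_i:\timeend_i\le \timestart_x}\timestart_i$, the compatible earlier activity with the latest start time. Assuming the activities are pre-sorted by end time (otherwise sort in $O(n\log n)$ work and $O(\log n)$ span \whp), I would build one \pabst{} keyed on end time and augmented to report, for any end-time prefix, the activity achieving the maximum start time. By the range-query data-structure result (the first theorem in \cref{sec:prelim} with $k=1$), this tree is built in $O(n)$ work and $O(\log n)$ span. Issuing one range query over $(-\infty,\timestart_x]$ per activity then recovers all parent pointers; a batch of $n$ such queries costs $O(n\log n)$ work and $O(\log n)$ span.

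With the parent pointers in hand, the activities form a rooted forest whose roots are exactly the rank-$1$ activities (those with no compatible earlier activity, so the query returns nothing); every other node's rank equals one plus its parent's rank, i.e.\ its depth in the forest. I would compute all these depths using the randomized tree contraction of~\cite{blelloch2020randomized} (equivalently, an Euler-tour construction followed by list ranking), which evaluates such root-to-node aggregates in $O(n)$ work and $O(\log n)$ span \whp. A reduction over the computed ranks then returns $\rank(S)$.

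Combining the phases, the work is dominated by the $n$ range queries at $O(n\log n)$, while every phase has $O(\log n)$ span, giving the claimed $O(n\log n)$ work and $O(\log n)$ span \whp. The main obstacle is the depth computation, since we must avoid processing the forest level by level (which would reintroduce a $\rank(S)$ factor into the span) and instead collapse all levels simultaneously. This is precisely where the forest structure afforded by \cref{lem:lateststart} is essential: because each object depends on a \emph{single} \leading{} activity, we may discard the \phaseparallel{} schedule entirely and substitute a single tree-contraction pass whose span is logarithmic regardless of how deep the rank hierarchy is.
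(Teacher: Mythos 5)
Your proposal is correct and follows essentially the same route as the paper: both use \cref{lem:lateststart} to collapse the \dg{} into a forest of \leading{}-activity parent pointers (found in $O(n\log n)$ work and $O(\log n)$ span), and then compute each activity's rank as its depth in that forest via the randomized tree contraction of~\cite{blelloch2020randomized} in $O(n)$ work and $O(\log n)$ span \whp{}. The only difference is cosmetic: you find the parent pointers with augmented-tree prefix queries where the paper uses per-activity binary searches, and you spell out the final reduction explicitly; the bounds are identical.
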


\subsection{Longest Increasing Subsequence (LIS)}

We propose a parallel algorithm for the longest increasing subsequence (LIS) problem using our \phaseparallel{} framework.
Given a sequence~$a$, LIS asks for the longest subsequence in $a$ that is strictly increasing.
LIS is widely studied, and its parallel solutions have been studied in~\cite{galil1994parallel,im2017efficient,krusche2009parallel,krusche2010new,alam2013divide,seme2006cgm,thierry2001work,nakashima2002parallel,nakashima2006cost}.
Most of these algorithms~\cite{galil1994parallel,krusche2009parallel,seme2006cgm,thierry2001work,nakashima2002parallel,nakashima2006cost} introduced polynomial overhead in work, and Alem and Rahman's algorithm~\cite{alam2013divide} has $\tilde{\Theta}(n)$ span.
Krusche and Tiskin's BSP algorithm~\cite{krusche2010new} translates to $\tilde{O}(n)$ work and $\tilde{O}(n^{2/3})$ span, which is the only nearly-work-efficient algorithm with sublinear span.
This algorithm relies on complicated techniques from~\cite{tiskin2015fast}, and has no implementation.
In fact, we are unaware of any previous parallel LIS implementation with
competitive performance to the standard sequential $O(n\log n)$ LIS algorithm.

Sequentially, LIS can be computed using the dynamic programming (DP) recurrence as follows. Let $\mathdp[i]$, called the DP value, be the LIS length of $a_{1..i}$ ending with $a_i$.
Then

\vspace{-.05in}
\begin{equation}\label{eqn:lisdp}
  \mathdp[i]=\max(1,\max_{j<i,a_j<a_i}\mathdp[j]+1)
\end{equation}
\vspace{-.05in}

We can iteratively compute $\mathdp[i]$ and maintain any search structure to find $\max_{j<i,a_j<a_i}\mathdp[j]$ in $O(\log n)$ work.
Our new \phaseparallel{} LIS algorithm parallelizes this sequential algorithm and achieves near work-efficiency ($\tilde{O}(n)$ expected work) and round-efficiency.
Moreover, our algorithm is implementable, and we show experimental study in \cref{sec:exp-lis}.
Here we maximize LIS length, but our algorithm can be generalized to the weighted case where objects have different weights.

An object $a_i$ is ready once all objects $a_j$ with $j<i,a_j<a_i$ are ready. 
In our \phaseparallel{} framework,
the rank of an object $a_i$ is the size of the LIS ending with $a_i$ (its DP value).
An object $a_i$ only depends on objects with a smaller rank.
After all objects with rank~$r$ have finished, all objects with rank~$r+1$ must be ready.
The main challenge is to avoid processing all dependencies since there can be $\Theta(n^2)$ of them.
Interestingly, the problem exhibits a nice geometric property.
If we draw each object as a 2D coordinate $(i,a_i)$,
all the objects that $a_i$ relies on are the points to its lower-left area (see \cref{fig:lis}).
Therefore we can determine if an object $x$ is ready using a 2D range query on the number of unfinished objects to its lower-left area.
Similarly, the DP value of $x$ can be obtained by querying the maximum DP value among all objects in its lower-left area (and plus one).
Both queries can be done by a augmented 2D range tree (see \cref{sec:prelim}).
This gives a simple algorithm, where in each round, we can run a 2D range query to every unfinished object to identify the ready ones,
and then compute their DP values. However, this can still incur $\Omega(n^2)$ work in the worst case.

To achieve work-efficiency, we would like to wake up an object only when it is (almost) ready.
Unlike activity selection, we found it hard to find an \emph{exact} pivot for each object $x$ that has rank $\rank(x)-1$.
Instead, our strategy is to randomly pick an unfinished object in $\pred(x)$ as $x$'s pivot, which can be efficiently supported by an augmented 2D range tree.
When the pivot of $x$ is processed, we \emph{attempt} to wake up $x$ by checking whether all objects in $\pred(x)$ (those in its lower-left corner) are finished.
If so, $x$ is ready, and we query the maximum DP value in $\pred(x)$ to compute $x$'s DP value.
Otherwise, $x$ is not waked up successfully, and selects another random unfinished object in its lower-left corner as the new pivot.
In each round, all ready objects will attempt to wake up all objects using $x$ as the pivot.
This inductively guarantees that objects with rank $i$ (LIS length $i$) are waked up and processed in round $i$.

Our algorithm is in \cref{algo:lis}.
Each object corresponds to a \emph{point}, defined on its $x$-coordinate (its index $i$), and $y$-coordinate ($a_i$).
We also maintain its DP value $\mathdp$ (initialized to $+\infty$).
We create a virtual point $p[0]$ as a starting point with index $0$ and value $-\infty$.

\begin{algorithm}[t]
\fontsize{8pt}{10pt}\selectfont
\caption{The parallel LIS algorithm\label{algo:lis}}
\SetKwFor{parForEach}{parallel\_for\_each}{do}{endfor}
\KwIn{A sequence $a[1..n]$ with comparison function $<$}
\KwOut{The LIS length of $a[\cdot]$}
    \vspace{0.5em}
\SetKwProg{MyStruct}{Struct}{ contains}{end}
\SetKwProg{RangeTree}{RangeTree<Point>}{ with}{end}
\SetKwProg{NestedTree}{NestedTree<pair<int>{}>}{ with}{end}
\SetKwProg{AugFunc}{}{}{end}
\SetKwProg{MyFunc}{Function}{}{end}
\newcommand{\Int}{\KwSty{int}}
\newcommand{\A}{\KwSty{A}}
\DontPrintSemicolon
\MyStruct{Point}{
  \Int{} $x,y$ \tcp*{x = index, y = a[x]}
  \Int{} $\mathdp$ \tcp*{The DP value}
}
Point $p[1..n]$\\
$p[0] \gets \langle 0,-\infty,0\rangle$\tcp*{insert virtual point 0}
\lparForEach{$a[i]\in a$}{
  $p[i] = \langle i, a[i],+\infty\rangle$
}
\RangeTree{$\trange$}{
  $<_x(p_1,p_2)$: \Return{$p_1.x<p_2.x$}\\
  $<_y(p_1,p_2)$: \Return{$p_1.y<p_2.y$}\\
  \tcp{$n_{\infty}$: \# of unfinished points (dp value $\infty$)}
  \tcp{$\mathit{dp}^{*}$: max dp value in subtree}
  \tcp{$x^{*}$: index (x) of max dp value}
  \KwSty{augmented value}: $\langle n_{\infty},\mathit{dp}^{*},x^{*}\rangle$\\
  \AugFunc{\upshape\KwSty{base}($p$):}{
    \lIf{$p.\mathdp{}=+\infty$}{\Return{$\langle 1,p.x,p.\mathdp{}\rangle$}}
    \lElse{\Return{$\langle 0,p.x,p.\mathdp{}\rangle$}}
  }
  \AugFunc{\label{line:combine1}\KwSty{combine}($a_1,a_2$):\tcp*[f]{combine two aug values}}{
    $m\gets \arg\max_{i\in\{1,2\}} a_i.\mathdp^{*}$\label{line:getmax}\\
    \tcc{When max dp value is $+\infty$, choose a uniformly random one as the potential pivot}
    \If{$a_m.\mathdp^{*}=+\infty$}{
      $t\gets$random$(1,2)$ with probability $a_1.n_{\infty}:a_2.n_{\infty}$\label{line:selectrandom}\\
      \Return{$\langle a_1.n_{\infty}+a_2.n_{\infty},+\infty,a_t.x^{*} \rangle $}\label{line:combinereturn1}
    }
    \Return{$\langle 0,a_m.\mathdp^{*}, a_m.x\rangle $}\label{line:maxdpx}
  }
}
Construct $\trange$ from $p[\cdot]$\\
\textbf{Multi-map} $\tpivot=\{(0,i):i=1..n\}$ \tcp*{pivot pairs $(p_x,x)$}
frontier = \{0\}\\
\While {frontier $\ne \emptyset$} {
  frontier = \mf{WakeUp}(\mb{frontier})
}
\Return{$\max_i(p[i].\mathdp)$}

\medskip
\MyFunc{\upshape\mf{WakeUp}(frontier)}{
  todo $\gets \tpivot.$multi\_find(\mb{frontier})\label{line:findtodo}\\
  \parForEach{\upshape$q\in$ todo\label{line:forq}}{
    $\langle \_,k,i\rangle\gets\trange.$range$(q.x,q.y)$\label{line:rangesearch}\\
    \If{$k\ne +\infty$\label{line:setdpif}}{
      $q.\mathdp\gets k+1$\label{line:getdp}\\
      mark $q$ as next\_frontier
    } \lElse(\hfill\texttt{// not ready yet}){
      mark $\langle i, q\rangle$ as new\_pivot\_pair
    }
  }
  Pack points marked as next\_frontier into frontier$^*$\label{line:getnextfrontier}\\
  Pack points marked as new\_pivot\_pair into pivot$^*$\label{line:getnewpivots}\\
  $\tpivot.$multi\_insert(pivot$^*$)\label{line:insertpivots}\\
  Update $\mathdp$ values for $q\in$ frontier$^*$ in $\trange$\label{line:updatedp}\\
  \Return{frontier$^*$}
}
\end{algorithm} 

We use a range tree $\trange$ to maintain all the points in the 2D planar, augmenting on a triple $\langle n_{\infty},\mathdp^*,x^* \rangle$, which records for the current subtree,
the number of unfinished points $n_{\infty}$, the maximum DP value $\mathdp^*$, and an $x$-coordinate $x^{*}$ (an index).
If the maximum DP value $\mathdp^{*}$ is $\infty$,
which means that there exist unfinished elements in this subtree, then the index $x^{*}$ is selected uniformly at random from the unfinished objects.
Otherwise, $x^{*}$ is used to record the index to achieve the maximum DP value, which can be used to reconstruct the LIS if needed.
To maintain such augmented values, the combine function simply adds up $n_{\infty}$ (Lines \ref{line:combinereturn1} and \ref{line:maxdpx}),
and takes a maximum on $\mathdp^*$ (Line \ref{line:getmax}) on the two augmented values $a_1$ and $a_2$.
If $\mathdp^*$ is not $+\infty$, $x^{*}$ can be simply set to be the argument to achieve the highest DP value (Line \ref{line:maxdpx}).
Otherwise, $x^*$ is selected from the $x^{*}$ value of either $a_1$ or $a_2$, and the probability is decided by $t_1:t_2$, where
$t_1$ and $t_2$ are the number of unfinished objects (the $n_{\infty}$ values) in $a_1$ and $a_2$, respectively (Line \ref{line:selectrandom}).
By doing this, $x^{*}$ is selected uniformly at random from both $a_1$ and $a_2$.
We also use a multi-map $\tpivot$ to maintain the pivot-object pairs.

The algorithm starts from a frontier of the virtual point $p[0]$.
Initially, $\tpivot$ stores pairs $(0,i)$ for all $i$,
since $p[0]$ is the initial pivot of all objects.
In each round, the algorithm processes each object $x$ in the frontier in parallel.
We first find all objects $q$ such that $\langle x, q\rangle \in \tpivot$,
which are all objects with pivots in the frontier (Line \ref{line:findtodo}).
We attempt to wake up each such object $q$ by searching in $\trange$ the half-open rectangle with top-right point as $q$ (Line \ref{line:rangesearch}),
getting triple $\langle \_, k, i\rangle$, where $k$ is maximum $\mathdp^{*}$ in the query range and $i$ is the $x^{*}$ value.
If $k\ne +\infty$ (Line \ref{line:setdpif}), meaning
that there is no unfinished object in $q$'s lower-left area,
then $q$ is ready and we set the DP value of $q$ as $1+k$ (Line \ref{line:getdp}).
Otherwise ($n_{\infty}\ne 0$), there are still unfinished objects in $q$'s lower-left area, and we reset $q$'s pivot as $i$,
which is selected uniformly at random from all unfinished objects in the queried range.

At the end of a round, all newly-generated pivot-object pairs are inserted into $\tpivot$ in parallel (Line \ref{line:getnewpivots}--\ref{line:insertpivots}).
All newly finished objects are packed into next frontier (Line \ref{line:getnextfrontier}).
The DP values of the newly finished objects are updated in $\trange$ (Line \ref{line:updatedp}).
\ifconference{
We use the following lemma to prove the work of the algorithm and prove it in the full version of this paper~\cite{iterativefull}.
}
\else{
We use the following lemma to prove the work of the algorithm.
}
\fi

\begin{lemma}
\label{lem:backwardlogn}
Construct a sequence $x_i$ as follows.
Let $x_0=1$, and $x_i$ be a uniformly random number selected from $x_{i-1}$ to $n$.
Let $k$ be the first element s.t. $x_k=n$. Then $k=O(\log n)$ \whp{}.
\end{lemma}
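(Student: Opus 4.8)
The plan is to track the \emph{remaining gap} rather than the value itself. Define $d_i = n - x_i$, so that $d_0 = n-1$ and reaching the top (i.e.\ $x_i = n$) is exactly the event $d_i = 0$. Since the rule selects $x_i$ uniformly from $\{x_{i-1},\ldots,n\}$, the gap $d_i$ is uniformly distributed on $\{0,1,\ldots,d_{i-1}\}$. Two structural facts drive the argument: (i) $x_i \ge x_{i-1}$, so $d_i \le d_{i-1}$, meaning the gap is monotone non-increasing and, once it hits zero, stays zero; and (ii) for any $T$, the event $\{k > T\}$ (the top is not reached within the first $T$ steps) is identical to $\{d_T \ge 1\}$. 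Thus bounding $k$ reduces to bounding the probability that $d_T$ is still positive.

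First I would compute the expected gap. Conditioned on $d_{i-1}$, the mean of a uniform variable on $\{0,\ldots,d_{i-1}\}$ is exactly $d_{i-1}/2$, so $\mathbb{E}[d_i \mid d_{i-1}] = d_{i-1}/2$. Applying the tower rule gives $\mathbb{E}[d_i] = \mathbb{E}[d_{i-1}]/2$, and by induction $\mathbb{E}[d_T] = d_0/2^T = (n-1)/2^T$. The key quantitative point is that the expectation of the gap halves \emph{exactly} at every step, regardless of where the chain currently sits.

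Because $d_T$ is a non-negative integer, $\Pr[d_T \ge 1] = \Pr[d_T \ne 0]$, and Markov's inequality gives $\Pr[d_T \ge 1] \le \mathbb{E}[d_T] = (n-1)/2^T$. Choosing $T = (c+1)\lceil \log_2 n\rceil$ for any constant $c \ge 1$ yields $\Pr[k > T] = \Pr[d_T \ge 1] \le (n-1)/n^{c+1} \le n^{-c}$. Hence $k \le T = O(\log n)$ \whp, as claimed.

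There is no serious technical obstacle here: the whole argument hinges on the single observation that the gap halves in expectation at each step, which collapses the problem into a one-line Markov bound. The only points needing care are the reformulation that equates $\{k > T\}$ with $\{d_T \ge 1\}$ (which uses monotonicity of $d_i$) and the integrality of $d_T$ (so that Markov's inequality bounds the failure probability directly). If one instead reads ``from $x_{i-1}$ to $n$'' as \emph{excluding} $x_{i-1}$, the conditional mean only drops to $(d_{i-1}-1)/2$, so the same bound holds verbatim. A Chernoff-based alternative---counting the steps in which the gap at least halves, each occurring with probability at least $1/2$ conditioned on the history---would give the same $O(\log n)$ bound, but it is strictly more work than the Markov argument above and I would only fall back on it if a sharper constant were needed.
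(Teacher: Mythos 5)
Your proof is correct. You track the gap $d_i = n - x_i$, observe that it is uniform on $\{0,\ldots,d_{i-1}\}$ conditioned on $d_{i-1}$ (so its expectation halves exactly each step), and close with Markov's inequality applied to the integer-valued $d_T$; the identification of $\{k > T\}$ with $\{d_T \ge 1\}$ via monotonicity is the right reformulation, and your choice $T = (c+1)\lceil \log_2 n \rceil$ delivers failure probability at most $n^{-c}$, which matches the paper's definition of \whp{}. This is, however, a genuinely different route from the paper's proof. The paper argues by reduction: it identifies the sequence $x_0, x_1, \ldots$ with the chain of pivots of the quicksort subproblems containing the largest element $n$, and then invokes the known fact that randomized quicksort has recursion depth $O(\log n)$ \whp{}. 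The paper's argument is shorter on the page and connects the lemma to a familiar process, but it leans on an external black-box result and only sketches the correspondence (e.g., whether the pivot range is inclusive or exclusive of the previous pivot, and how ``$n$ is reached'' maps to quicksort termination are left implicit). Your argument is self-contained, yields explicit constants, and handles both the inclusive and exclusive readings of ``from $x_{i-1}$ to $n$'' directly; it is essentially the standard per-element analysis that underlies the quicksort depth bound itself, so nothing is lost in generality by avoiding the reduction.
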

\iffullversion{
\begin{proof}
	We will show that $k$ is equivalent to the recursion depth in a quicksort on $n$ input objects of $[1..n]$. Fix the largest object $n$ and consider all the subproblems it belongs to during quicksort.
	In the first round, the pivot $p_1$ is selected uniformly at random, and the subproblem $n$ belongs to will be the subsequence from $p_1$ to $n$.
	In round $i$ of recursion, the subproblem $e$ belongs to is from $p_i$ to $n$, where $p_i$ is selected uniformly at random from $p_{i-1}$ to $n$. This means that the sequence $x_i$ constructed in Lemma \ref{lem:backwardlogn} is equivalent to all the pivots in $n$'s subproblems. From the fact that the recursion depth of a quicksort is $O(\log n)$ \whp{}, we can prove Lemma \ref{lem:backwardlogn}.
\end{proof}
}
\fi

\begin{lemma}
\label{lem:lislogn}
For each object $x$ in the input sequence of size $n$, Algorithm \ref{algo:lis} will attempt to wake up $x$ for $O(\log n)$ times \whp{}.
\end{lemma}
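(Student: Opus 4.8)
The plan is to fix an object $x$ with $r=\rank(x)$ and bound the number of times \cref{algo:lis} runs the wake-up check on $x$, by tracking the ranks of the successive pivots assigned to $x$ and reducing the resulting random process to \cref{lem:backwardlogn}. First I would establish the round invariant that drives everything: the frontier of the $t$-th iteration consists exactly of the rank-$(t-1)$ objects (the virtual point $p[0]$ has rank $0$, and by \cref{thm:rankdepth} an object of rank $\rho$ is successfully woken only once all its predecessors finish, hence in iteration $\rho+1$). Consequently, at the moment \mf{WakeUp} runs on the $t$-th frontier, precisely the objects of rank $<t$ are finished. So $x$ is attempted in iteration $t$ iff its current pivot lies in that frontier, i.e.\ iff the pivot has rank $t-1$; and on a failed attempt the range query over the lower-left rectangle of $x$ returns an index drawn uniformly at random among the still-\emph{unfinished} predecessors of $x$, all of which have rank $\ge t$. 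This is exactly what the augmented triple $\langle n_{\infty},\mathit{dp}^{*},x^{*}\rangle$ and the randomized combine at \cref{line:selectrandom} compute.

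The key structural step is monotonicity of pivot ranks. Writing $\rho_0=0$ for the initial pivot $p[0]$ and $\rho_1<\rho_2<\cdots$ for the ranks of the pivots chosen on successive failed attempts, the invariant shows each newly chosen pivot has rank strictly larger than the one it replaces, since the previous pivot's entire rank-class has just finished. Hence the pivot ranks form a strictly increasing sequence $0=\rho_0<\rho_1<\cdots<\rho_k=r-1$ that terminates exactly when a pivot of rank $r-1$ is selected: at that point every predecessor of $x$ (all of rank $\le r-1$) is finished and the attempt succeeds. The total number of wake-up attempts on $x$ is therefore $k+1$, and it remains to show $k=O(\log n)$ \whp.

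Finally I would reduce to \cref{lem:backwardlogn}. Let $m\le n$ be the number of input-point predecessors of $x$, listed in nondecreasing rank order; at every attempt the unfinished predecessors form a suffix of this list, and choosing a uniformly random unfinished predecessor is choosing a uniform position in that suffix. Selecting position $j$ advances the ``finished frontier'' at least to $j$ (indeed past the whole rank-class of $j$, which only helps). This is precisely the uniform-jump process of \cref{lem:backwardlogn} on $m$ items, and the rank ties plus the extra advance make our process stochastically no slower, so $k=O(\log m)=O(\log n)$ \whp, giving $O(\log n)$ attempts.

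I expect the main obstacle to be making this reduction rigorous rather than heuristic. One must argue the stochastic domination cleanly: uniform choice over \emph{objects} (not over ranks), together with rank ties and the round-by-round finishing of predecessors, yields a process dominated by the clean uniform-jump process of \cref{lem:backwardlogn}. One must also justify that the selections across distinct attempts may be treated as independent uniform draws from the current unfinished set, each supplied afresh by a range query on the current $\trange$. A minor technical point is converting ``$O(\log m)$ \whp in $m$'' into ``\whp in $n$'': this is immediate when $m=n^{\Omega(1)}$, and otherwise the attempt count is deterministically at most $m$, so $O(\log n)$ holds in either regime.
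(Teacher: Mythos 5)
Your proposal is correct and follows essentially the same route as the paper's own proof: order the predecessors of $x$ by rank, observe that each failed attempt re-selects the pivot uniformly at random from a strictly shrinking suffix of unfinished predecessors, and bound the number of attempts by stochastic domination with the uniform-jump process of \cref{lem:backwardlogn} (your round invariant and pivot-rank monotonicity are just a more careful write-up of what the paper states tersely). One small repair: your closing $m$-versus-$n$ dichotomy does not cover intermediate sizes (e.g.\ $m = 2^{\sqrt{\log n}}$ is neither $n^{\Omega(1)}$ nor $O(\log n)$); the clean fix---implicitly what the paper does---is to dominate the process on the $m\le n$ predecessors directly by the lemma's process on $n$ items, since enlarging the ground set only slows the process down, so the $O(\log n)$ bound \whp{} in $n$ applies verbatim.
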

\begin{proof}
  For an object $x$, let $S$ be the sequence of objects before $x$ and smaller than $x$ sorted by rank.
  Let the first pivot of $x$ be the $p$-th object in $S$, which is selected uniformly at random from $S$.
  When $S_p$ finishes, all objects in $S$ with rank smaller than $S_p$ must have finished,
  and the next pivot is selected uniformly at random from $S_{p'..n}$, where $p'>p$. Similar process applies to later pivot selections.
  Therefore, the number of pivots selected for $x$ is no more than the length of sequence in Lemma \ref{lem:backwardlogn}, which is $O(\log n)$ \whp{}.
\end{proof}

\begin{theorem}\label{them:lis}
  Algorithm \ref{algo:lis} computes the LIS of a sequence of size $n$ in $O(n\log^3 n)$ work and $O(r\log^2 n)$ span \whp{}, where $r$ is the rank (LIS length) of the input sequence.
\end{theorem}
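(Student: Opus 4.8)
The plan is to separate the argument into a structural claim---that the algorithm terminates after exactly $r=\rank(S)$ rounds with every rank-$i$ object finalized in round $i$---and a cost accounting for work and span built on top of this structure together with \cref{lem:lislogn}. For the structural claim I would first invoke the corollary following \cref{thm:phase-parallel}: every object depends only on objects of strictly smaller rank, so a rank-$i$ object has all predecessors of rank at most $i-1$. I would then argue by induction on $i$ that each rank-$i$ object $x$ is finalized in round $i$. The base case is that a rank-$1$ object has only the virtual point $p[0]$ (of finite DP value $0$) in its lower-left rectangle, so it is woken from the initial frontier $\{0\}$ and declared ready in round $1$. For the inductive step, when $x$ is attempted (its current pivot has just entered the frontier) but is not ready, its new pivot is drawn from its still-unfinished predecessors, whose ranks strictly exceed that of the pivot that just finished; hence the pivot's rank strictly increases across attempts and is capped at $i-1$ (its maximum predecessor rank, by \cref{thm:rankdepth}). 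The first time the pivot has rank $i-1$, that pivot is finalized in round $i-1$ by the induction hypothesis, so $x$ is attempted in round $i$; at that point all predecessors of $x$ are finished, the range query at Line \ref{line:rangesearch} returns $k\neq+\infty$, and $x$ is finalized with DP value $k+1$, matching \cref{eqn:lisdp}. Correctness of $k$ uses that within-round DP updates are deferred to the end of the round and that same-rank objects are mutually independent. This fixes the number of rounds at exactly $r$ \emph{deterministically}, independent of the random pivot choices.

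Given this round structure, the span bound is immediate. There are $r$ rounds, and each round performs a constant number of batch operations on $\trange$ and $\tpivot$ plus one parallel loop of independent, read-only range queries. Using the data-structure theorems in \cref{sec:prelim}, the \texttt{multi\_find} and \texttt{multi\_insert} calls each cost $O(\log^2 n)$ span, each 2D range query costs $O(\log^2 n)$ and they run in parallel (the fork tree adds only $O(\log n)$), and the batch DP-value update on $\trange$ costs $O(\log^2 n)$ span. Hence each round has $O(\log^2 n)$ span and the total is $O(r\log^2 n)$.

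For work, the key quantity is the total number of wake-up attempts $A=\sum_{x}(\text{number of pivots of }x)$. By \cref{lem:lislogn} each object contributes $O(\log n)$ attempts \whp{}, and a union bound over the $n$ objects gives $A=O(n\log n)$ \whp{}. Each attempt performs one 2D range query of $O(\log^2 n)$ work, so the query work totals $O(A\log^2 n)=O(n\log^3 n)$ \whp{}, which I expect to dominate. The remaining costs are lower order: the \texttt{multi\_find} and \texttt{multi\_insert} calls touch at most $A$ pairs in total (each not-ready attempt inserts one new pivot pair, and stale pairs never re-match since their pivots are finished) at $O(\log n)$ per element, giving $O(A\log n)=O(n\log^2 n)$; each object's DP value is written into $\trange$ exactly once, so the batch updates sum to $O(n\log^2 n)$; and the initial range-tree construction and multi-map setup cost $O(n\log n)$. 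Summing yields $O(n\log^3 n)$ work \whp{}.

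The main obstacle I anticipate is the structural/correctness step, specifically the inductive monotonicity argument that a non-ready object's pivot rank strictly increases and is capped at $i-1$, which is what guarantees that rank-$i$ objects are woken exactly in round $i$ and that no object is ever prematurely declared ready. The timing subtlety that current-round DP updates are deferred to the end of the round---so that within-round attempts observe the correct ``finished'' set and so that same-rank objects, which do not depend on each other, are not read during one another's queries---must be handled carefully. Once this invariant is established, the work and span accounting is routine given \cref{lem:lislogn} and the data-structure bounds.
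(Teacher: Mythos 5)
Your proposal is correct and follows essentially the same route as the paper's proof: the work bound comes from charging $O(\log^2 n)$ per wake-up attempt and using \cref{lem:lislogn} to bound the total number of attempts by $O(n\log n)$ \whp{}, and the span bound comes from $r$ rounds at $O(\log^2 n)$ span each. The only difference is that you prove in detail the structural claim that rank-$i$ objects are finalized exactly in round $i$ (via the strictly-increasing pivot-rank induction), which the paper asserts informally in the prose preceding the algorithm and invokes in the proof simply as ``by definition of rank, the number of rounds is $r$''; this is a welcome tightening rather than a different argument.
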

\begin{proof}
  First, all initialization including constructing the range tree has work $O(n\log n)$ and span $O(\log^2 n)$.
  Assume in round $i$ there are $n_i$ objects in the $\todo$ list, which are the objects that we attempt to wake up.
  Line \ref{line:findtodo} finds at most $n_i$ objects from $\tpivot$ with $O(n_i\log^2 n)$ work.
  Each range query costs $O(\log^2 n)$, and thus the total cost in the parallel for-loop in Line \ref{line:forq} is $O(n_i\log^2 n)$.
  Line \ref{line:getnextfrontier} packs at most $n_i$ objects with work $O(n_i)$.
  Line \ref{line:getnewpivots} and \ref{line:insertpivots} pack and insert at most $n_i$ elements to $\tpivot$, and thus costs $O(n_i\log^2 n)$.
  Line \ref{line:updatedp} updates at most $n_i$ objects to $\trange$, and thus costs $O(n_i\log^2 n)$.
  In summary, each round of function $\mathit{wakeup}$ has work $O(n_i\log^2 n)$.
  From Lemma \ref{lem:lislogn}, we know that $\sum n_i=O(n\log n)$. This proves that the total work of Algorithm \ref{algo:lis} is $O(n\log^3 n)$ \whp{}.

  By definition of rank, the number of rounds in Algorithm \ref{algo:lis}
  is $r$. In each round, the span bounds of range queries (Line \ref{line:rangesearch}), pack (Lines \ref{line:getnextfrontier}--\ref{line:getnewpivots}), and update $\tpivot$ and $\trange$ (Lines \ref{line:insertpivots} and \ref{line:updatedp}) are all $O(\log^2 n)$. Therefore, the total span is $O(r\log^2 n)$.
\end{proof}

\hide{
\begin{theorem}
  The span of the parallel LIS algorithm in Algorithm \ref{algo:lis} is $O(r\log^2 n)$, where $n$ is the input sequence size and $r$ is the rank (LIS length) of the input sequence.
\end{theorem}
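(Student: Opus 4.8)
The plan is to bound the span as the product of two factors: the number of iterations of the while-loop, and the worst-case span of a single call to \mf{WakeUp}. I will argue that the first factor is exactly $r$ and the second is $O(\log^2 n)$, so that multiplying gives the stated $O(r\log^2 n)$ bound.

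For the round count, I would first establish by induction on $j$ the invariant that the frontier returned by the $j$-th call to \mf{WakeUp} is \emph{precisely} the set of objects of rank $j$. By \cref{thm:rankdepth} the rank of each object equals its depth in the \dg{}, so $\rank(q)=1+\max_{y\in\pred(q)}\rank(y)$; consequently, once every object of rank $\le j-1$ has been processed, every object $q$ of rank $j$ has all of $\pred(q)$ finished and is therefore ready. The delicate point is to show that such a $q$ is not merely ready but is actually \emph{woken} in round $j$ and no later. Here I would use the fact that the pivot $p_q$ is always an \emph{unfinished} predecessor: after round $j-2$ the only unfinished members of $\pred(q)$ have rank $j-1$, so any re-pivot performed at or after that point is forced onto a rank-$(j-1)$ predecessor. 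That pivot finishes in round $j-1$, triggering the \multifind{} on Line \ref{line:findtodo} that places $q$ in the todo list of round $j$, where the range query on Line \ref{line:rangesearch} returns $k\ne+\infty$ and $q$ is marked for the next frontier. Since ranks occupy the contiguous range $1,\dots,r$, every round $1,\dots,r$ is nonempty and round $r+1$ returns an empty frontier, so the loop performs $r+1=O(r)$ iterations.

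For the per-round span, I would walk through \mf{WakeUp} and charge each step to the data-structure theorems of \cref{sec:prelim}. The \multifind{} (Line \ref{line:findtodo}) and \multiinsert{} (Line \ref{line:insertpivots}) on the multi-map $\tpivot$ each cost $O(\log m\log n)=O(\log^2 n)$ span for a batch of size $m\le n$; the batch update of DP values in the range tree $\trange$ (Line \ref{line:updatedp}) costs $O(\log^2 n)$ span; and each of the two \fname{pack} steps (Lines \ref{line:getnextfrontier}--\ref{line:getnewpivots}) costs $O(\log n)$. The parallel \texttt{for\_each} loop (Line \ref{line:forq}) has a forking tree of depth $O(\log n)$, and each iteration performs a single 2D range query (Line \ref{line:rangesearch}) of span $O(\log^2 n)$, so the loop contributes $O(\log^2 n)$ span. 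Taking the maximum over all steps gives $O(\log^2 n)$ span per round, and hence $O(r\log^2 n)$ total span.

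I expect the main obstacle to be the second half of the round-count argument—rigorously ruling out that an object is woken \emph{later} than its rank dictates. The readiness/compatibility reasoning shows an object cannot be woken too \emph{early}, but the matching upper bound on the round in which it is woken relies on the structural fact that pivots are always unfinished predecessors, combined with the inductive invariant; making this simultaneous induction airtight is the crux. I would also note that, unlike the work bound, this span bound does not require the \whp{} qualifier: the randomness in pivot selection affects only how many times an object is attempted (\cref{lem:lislogn}), not the number of rounds nor the deterministic per-operation span bounds, so the round count is exactly $r$ and the span bound holds unconditionally.
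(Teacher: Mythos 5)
Your proposal is correct and follows essentially the same route as the paper's proof: bound the computation as (number of rounds) $\times$ (per-round span), where the round count is $r$ because round $i$ wakes and processes exactly the rank-$i$ objects, and each call to \mf{WakeUp} costs $O(\log^2 n)$ span by the \pabst{}/range-tree bounds. Your inductive argument for the round count is a more careful write-up of the invariant the paper asserts inline (``this inductively guarantees that objects with rank $i$ are waked up and processed in round $i$''), and your observation that the span bound needs no \whp{} qualifier—since pivot randomness affects only the number of wake-up attempts, not the round structure—is also consistent with how the paper states this standalone span claim.
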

}

\iffullversion{
\myparagraph{Other similar algorithms.} We discuss another interesting algorithm similar to LIS, the \emph{Whac-a-mole} problem, in Appendix \ref{app:mole}.
}\fi

\begin{figure*}
  \centering
  \vspace{-1em}
  \includegraphics[width=\textwidth]{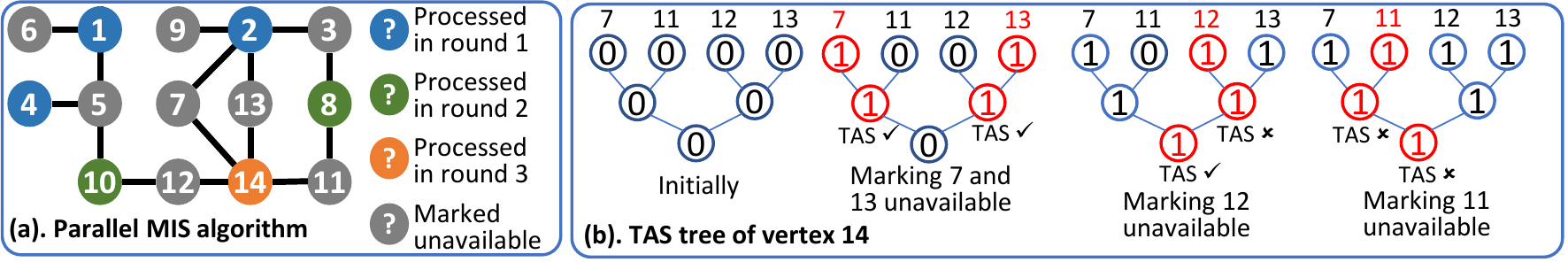}
  \caption{\small\textbf{Illustration of the greedy maximal independent set (MIS) algorithm.}
  (a). The input graph (numbers are priorities) and the round in which each vertex is processed in the algorithm.
  (b). The \tastree{} of \nodecircle{14} in the graph in (a) from the initial status to when marking some leaves unavailable.
  \cmark $=$ successful \TAS{}.
  \ding{55} $=$ unsuccessful \TAS{}. When there is an unsuccessful \TAS{} at the root,
  the entire \tastree{} is finished.
  \label{fig:mis}}
  \vspace{-.2in}
\end{figure*} 
\subsection{Greedy MIS and Related Applications}\label{sec:mis}

In this section, we propose a new parallel MIS (maximal independent set) algorithm.
Parallel MIS is widely-studied~\cite{chatterjee2020sleeping,afek2013beeping,daum2013maximal,fineman2014fair,dahlum2016accelerating,Luby86,BFS12,Cook85,calkin1990probabilistic,coppersmith1987parallel,fischer2018tight}.
Given a graph $G=(V,E)$, an independent set $A\subseteq V$ is a subset of vertices where $\forall u,v\in A$, $(u,v)\notin E$.
An MIS is an independent set $A$ where $\forall v\in V,v\notin A$, $A\cup\{v\}$ is not an independent set.
The widely-adopted greedy MIS algorithm~\cite{Luby86,BFS12,Cook85,calkin1990probabilistic,coppersmith1987parallel,fischer2018tight} starts with assigning each vertex a random priority and greedily selecting the vertices based on the priority (highest to lowest).
The MIS is initialized as an empty set.
When processing vertex $v$, we add $v$ to the MIS if none of $v$'s neighbor is selected in the MIS, and skip $v$ otherwise.
We say a vertex is \defn{available} if none of its neighbors are selected in the current MIS, and \defn{unavailable} otherwise.
Blelloch et al. parallelized the algorithm~\cite{BFS12}.
Note that a vertex is ready when it has a higher priority than all its available neighbors.
We say a neighbor of $v$ is a \defn{\block{ing} neighbor} if it has a higher priority than $v$.
When all $v$'s \block{ing} neighbors become unavailable, $v$ is ready.
In each round, the parallel algorithm processes all ready vertices in parallel by adding them to the MIS, and removing their neighbors (marking as unavailable).
An illustration is shown in \cref{fig:mis}(a).

The main challenge in the algorithm is to find ready vertices efficiently.
A previous approach \cite{BFS12} hooks each unfinished vertex $v$ to an unfinished neighbor with the highest priority as the pivot.
Only when $v$'s pivot is finished, it applies a dual binary search, which either finds the next pivot or decides that $v$ is ready.
Combining the results in~\cite{BFS12} and~\cite{fischer2018tight}, the work is $O(m)$ and the span is $O(\log^3 n)$ \whp{}.
The high span comes from the $O(\log^2n)$ dual binary search that can apply in each round ($O(\log n)$ steps each requiring an $O(\log n)$-span min-reduce).
There exist other algorithms such as Prism~\cite{kaler2016executing} which is work-efficient with $O(\log^2 n)$ span. However, it assumes to know the number of processors in the algorithm, and a strong constant-time atomic operation \texttt{fetch-and-decrease}.

In this paper, we show a new, asynchronous approach that improves span, while maintaining work-efficiency for parallel MIS.
The key idea is to wake up a vertex only when the last \block{ing} neighbor is finished.
We use the atomic operation \tas{} (\TAS{}), and build a complete binary tree, called the \tastree{}, for each vertex, and use it to check if all the \block{ing} neighbors are finished.
Let the \tastree{} of $v$ be $T_v$.
Each leaf in $T_v$ corresponds to a \block{ing} neighbor $u$ of $v$ (i.e., with a higher priority than $v$).
We set a flag (initialized to \zero{}) in each leaf to show if $u$ has been unavailable.
Each internal node in the \tastree{} also maintains a flag: 
it is \one{} when at least one subtree is fully unavailable, and \zero{} otherwise.
For example, in \cref{fig:mis}(b),
\nodecircle{14} maintains four \block{ing} neighbors in its \tastree{}.
Note that when all leaves in the $T_v$ are \one{} (unavailable), $v$ is ready to be added to the MIS.
We want to use the flag to reflect the unavailability of the subtree.
More precisely, for each subtree~$t$, when the last flag in~$t$ becomes \one{}, the information should be carried to $t$'s parent.

\begin{algorithm}[t]
\small
\caption{The parallel MIS algorithm.\label{alg:mis}}
\SetKwFor{parForEach}{parallel\_for\_each}{do}{endfor}
\KwIn{A graph $G=(V,E)$ with priority $p:V\mapsto \Z$.}
\KwOut{Maximal Independent Set of $G$}
\SetKwProg{MyStruct}{Struct}{ contains}{end}
\SetKwProg{MyFunc}{Func}{}{end}
\SetKw{Break}{Break}
\newcommand{\Int}{\KwSty{int}}
\newcommand{\A}{\KwSty{A}}
\DontPrintSemicolon
\smallskip
\parForEach{$v\in V$}{
  $T_v\gets $ the \tastree{} for $v$\\
  Maintain the list of \tastree{s} that contains each vertex\\
  status[$v$] $\gets$ \texttt{undecided}
}
\parForEach{$v\in V$}{
  \lIf {$T_v$ is empty}{\mf{WakeUp}($v$)\tcp*[f]{no blocking neighbor}}
}
\Return {all $v\in V$ marked as \texttt{selected}}

\smallskip
\MyFunc{\upshape\mf{WakeUp}($v$)}{
  status[$v$] $\gets$ \texttt{selected}\\
  \parForEach{$u\in \neighbor(v)$\label{line:mis-for-neighbor}}{
      status[$u$] $\gets$ \texttt{removed}\\
      \parForEach{\tastree{} $T_{v}$ contains $u$\label{line:mis-for-tastree}}{
        \If {status[$v$] $\ne$ \texttt{removed}} {
            $x\gets $ the leaf of $u$ in $T_{v}$\\
            $x.\flag\gets$ \true\\
            $p\gets \parent(x)$\\
            \While {\texttt{test\_and\_set}$(p.\flag)$ successful\label{line:mis-tas-while}} {
              \If{$p$ is the root of $T_u$\label{line:mis-to-root}} {
                \mf{WakeUp}($v$)\label{line:mis-wakeupv}\\
                \Break{}
              }
              $p\gets \parent(p)$ \\
            }
        }
      }
  }
}
\end{algorithm}

The pseudocode of our MIS algorithm is given in \cref{alg:mis}.
The algorithm starts with constructing the \tastree{s} (initialized to \zero{} for all tree nodes), and finding all vertices with an empty \tastree{} (the ready ones).
We start with processing these vertices in parallel.
When processing a vertex, we will mark each of its neighbors $u$ as unavailable.
We further need to notify all the \tastree{s} containing $u$ that $u$ is now unavailable.
For each of the \tastree{s}, we set $u$'s flag in the leaf to be \one{} (Line \ref{line:mis-for-tastree}).
This information is prorogated up along the path to set the flag of its parent, call it $p$, to be \one{} by \TAS{} (Line \ref{line:mis-tas-while}).
If the \TAS{} succeeds, it means that the other branch of $p$'s subtree is not fully finished yet, and we can just quit.
For example, in \cref{fig:mis}, after we process \nodecircle{2} in round 1, we set \nodecircle{7} and \nodecircle{13} as unavailable in \nodecircle{14}'s \tastree{}.
Both of them will \TAS{} their parent and succeed, so they quit.
When the \TAS{} fails, it means that $p$'s flag is already \one{}, so the other branch has been fully unavailable.
Since the current subtree at $p$ is also fully unavailable, we will continue to attempt mark $p$'s parent using \TAS{} recursively.
For example, in \cref{fig:mis}, when we mark \nodecircle{12} as unavailable in \nodecircle{14}'s \tastree{}, we first \TAS{} its parent. As it was set by \nodecircle{13} previously, the \TAS{} fails.
We then continue to its parent (the root). This \TAS{} succeeds.
When a \TAS{} at the root of any \tastree{} $T_v$ fails, the entire subtree is now unavailable, so $v$ is ready to be waked up (\cref{line:mis-wakeupv}),
and we will repeat this process for $v$.
For example, when we mark \nodecircle{11} as unavailable, we \TAS{} its parent and fail, and continue to the parent, which is the root. This \TAS{} also fails, which means the entire tree is unavailable. Therefore, \nodecircle{14} can be waked up.

We note that our new MIS algorithm is fully asynchronous: no round-based synchronization is used.
Although this does not directly follow our \phaseparallel{} algorithm,
it also uses the same idea of our Type 2 framework, where we wish to identify the last finished object in $\pred(x)$ and wake up $x$ at that time.
The rank of each vertex $v$ can be viewed as the longest chain with decreasing priority starting from $v$.
We now analyze the cost of this algorithm.

\begin{theorem}
\label{thm:miscost}
  \cref{alg:mis} generates the greedy maximal independent set of $G=(V,E)$ in $O(m)$ work and $O(\log n \log \dmax)$ span \whp{}, where $n=|V|$, $m=|E|$, and $\dmax$ is the maximum degree in $G$.
\end{theorem}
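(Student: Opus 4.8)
The plan is to prove the three claims---correctness, $O(m)$ work, and $O(\log n\log \dmax)$ span---separately, with the span being the only genuinely delicate part. \textbf{Correctness.} I would show by induction on the priority order (highest first) that \cref{alg:mis} marks a vertex $v$ as \texttt{selected} exactly when $v$ is in the greedy MIS, i.e.\ exactly when none of its blocking (higher-priority) neighbors is selected. The key structural invariant of each \tastree{} $T_v$ is that an internal node's flag turns \one{} precisely when the \emph{second} of its two child subtrees becomes fully unavailable: the first child to arrive sets the flag with a successful \TAS{} and stops, whereas the second sees an unsuccessful \TAS{} and continues upward. By induction on the tree height this yields the global invariant that the root of $T_v$ is completed---and $v$ woken up---if and only if every leaf of $T_v$, i.e.\ every blocking neighbor of $v$, has been marked removed; the atomicity of \TAS{} guarantees the completing event, and hence the wake-up, fires exactly once. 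The induction then closes cleanly: if $v$ is in the greedy MIS, all its blocking neighbors are removed, so $T_v$ completes and $v$ is selected, and $v$ is not removed earlier, since removing $v$ needs a neighbor to be selected first---impossible for a higher-priority neighbor (it would exclude $v$ from the MIS) and, for a lower-priority one, circular (its own wake-up needs $v$ already decided); if $v$ is not in the greedy MIS, some blocking neighbor is selected, its wake-up marks $v$ removed, and $T_v$ never completes because that neighbor's leaf is never set.

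\textbf{Work.} This bound is deterministic and independent of the random priorities. Building every \tastree{} and the per-vertex list of trees containing each vertex costs $\sum_v O(\deg(v))=O(m)$, since $T_v$ has $O(\deg(v))$ nodes. Each vertex is woken up or removed at most once, and the removal of a vertex $u$ sets exactly one leaf in each tree that lists $u$ as a blocking neighbor, for a total of one leaf-setting per directed edge, i.e.\ $O(m)$. Finally, each internal tree node receives at most two \TAS{} operations---one per child subtree that completes---so the total number of \TAS{} operations, and thus all signal-propagation work, is at most twice the number of internal nodes, again $O(m)$. Summing gives $O(m)$ work.

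\textbf{Span.} This is the main obstacle, and I would decouple the per-wake-up span from the number of wake-ups on any dependence chain. A single invocation of \mf{WakeUp}$(v)$ contributes only $O(\log \dmax)$ to the critical path: the parallel loops over $\neighbor(v)$ and over the trees containing each neighbor cost $O(\log \dmax)$ span, and propagating a completion signal up a \tastree{} traverses its $O(\log \dmax)$ levels. To bound the chain length, I would prove by induction on $\rank(v)$ that the span until $v$ is decided is $O(\rank(v)\cdot \log \dmax)$: $T_v$ cannot complete before its last blocking neighbor is decided, every blocking neighbor has strictly smaller rank, and after that last decision only one wake-up's worth of $O(\log \dmax)$ span remains. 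The last ingredient is the known fact that with random priorities the maximum rank---equivalently the number of rounds of randomized greedy MIS---is $O(\log n)$ \whp{}~\cite{fischer2018tight}; combining these gives a critical path of length $O(\log n\cdot \log \dmax)$ \whp{}. I expect the subtle points to be (i) making this round-free induction fully rigorous in the asynchronous setting---in particular confirming that the \TAS{}-tree propagation for one decision overlaps the following wake-up instead of serializing badly---and (ii) invoking the rank/round bound in exactly the form needed. Since $\log \dmax\le \log n$, this already delivers the claimed improvement over the previous $O(\log^3 n)$ bound~\cite{BFS12}.
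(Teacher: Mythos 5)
Your proposal is correct and follows essentially the same route as the paper's proof: the work bound comes from charging at most two \tas{} attempts per \tastree{} node (one per child subtree), summing to $O(\sum_{v}\degree(v))=O(m)$, and the span bound combines Fischer--Noever's $O(\log n)$ \whp{} bound on the longest monotone-priority chain with the $O(\log \dmax)$ cost per wake-up (parallel loops over neighbors plus \tastree{} propagation). The differences are only presentational: you spell out the correctness invariant and the per-chain rank induction that the paper leaves implicit, while the paper additionally remarks that locating a neighbor's leaf in a \tastree{} presumes stored edge correspondence, and that using semisort or hashing instead weakens the work bound to $O(m)$ in expectation.
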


\ifconference{
We show the proof in Appendix \ref{app:mis:proof}.
}
\else{
\begin{proof}
 Fischer and Noever~\cite{fischer2018tight} showed that if the priorities are assigned uniformly at random, the longest increasing chain of priority in the graph is $O(\log n)$ \whp{}.
 Generating the priorities takes $O(n)$ work and $O(\log n)$ span \whp~\cite{blelloch2020optimal}.
 When processing each vertex, each parallel for-loop (\cref{line:mis-for-neighbor}) forks off $O(\dmax)$ subtasks and incurs $O(\log \dmax)$ span.
 When removing $u$ from a \tastree{} (the loop body from Line \ref{line:mis-for-neighbor}),
 both the parallel-for-loop and the while-loop incur $O(\log \dmax)$ span.
 Combining the costs together gives the $O(\log n \log \dmax)$ span bound \whp{}.

 For work, we focus on the part to update the flags in the \tastree{s}, since other parts trivially uses $O(m)$ work.
 Each \tastree{} $T_v$ has $\degree(v)$ nodes.
 At most two \tas attempts apply to each node, one from each branch.
 Hence, the total cost on the \tastree{} $T(v)$ is $O(\degree(v))$, which add up to $O(m)$ for all $v\in V$.

 Note that our algorithm assumes we store an undirected edge for each end point and know the correspondence
 (for $v\in \neighbor(u)$, we need to locate $u$ in $v$'s \tastree{}).
 If not, we can use semisort or hash table~\cite{GSSB15,blelloch2020optimal}, but that makes work bound $O(m)$ in expectation.
\end{proof}
}
\fi
In the worst case when $\dmax=n$, the span of \cref{alg:mis} is $O(\log^2n)$, which improves previous result by a factor of $O(\log n)$.
Comparing to Prism~\cite{kaler2016executing}, our approach requires no knowledge on the number of processors and is completely in the fork-join model.
The efficiency of our algorithm comes from the simple idea \tastree{} data structure. We note that using tree-like structure to do counting is used in previous work~\cite{dwork1997contention,goodman1989efficient}, but our algorithm is different in that we make the observation that in the MIS algorithm, the information needed is \emph{whether} all blocking neighbors all finish, instead of the number of unfinished blocking neighbors. This simplifies the problem and enables better span bound.


\ifconference{
\myparagraph{Other Algorithms}. We discuss other similar algorithms such as graph coloring and matching~\cite{ahmadi2020distributed,hasenplaugh2014ordering,azad2015distributed} in Appendix \ref{app:graphcoloring}.
}
\else{
\myparagraph{Graph Coloring and Matching}.
Several iterative graph algorithms that share the similar approach can be improved using the same technique.
For graph coloring, Jones and Plassmann~\cite{jones1993parallel} showed the greedy algorithm that can be parallelized using the similar approach in~\cite{BFS12}.
Hasenplaugh et al.~\cite{hasenplaugh2014ordering} analyzed a list of heuristics that vary the greedy order of the vertices and can lead to different span bounds and output quality.
For graph matching, Blelloch et al.~\cite{BFS12} showed a parallel greedy algorithm that is very similar to the MIS algorithm in the same paper.
By replacing the original wake-up strategy in~\cite{BFS12} with our new approach, we can improve the span by $O(\log n)$.
We note that the parallel graph-matching algorithm cannot be fully asynchronous since each edge's readiness relies on two vertices, which needs to be checked after synchronization.
Hence, a synchronization is required between the rounds, but that does not change the span bound.
The analysis by Hasenplaugh et al.~\cite{hasenplaugh2014ordering} assumes atomic decrement-and-fetch operations which is usually not included on the family of the binary-forking models~\cite{blelloch2020optimal}.
Using the technique in this paper can achieve the same bounds without this assumption.

\myparagraph{Other Algorithms}.
Many algorithms in~\cite{blelloch2016parallelism,blelloch2018geometry,blelloch2020optimal,blelloch2020randomized} have constant size $\mathcal{P}(x)$ for all $x\in S$.
Blelloch et al.~\cite{blelloch2020optimal} showed that \tas can be used to check the readiness in this case.
Here we note that this can be considered as a special case for our \tastree{s}, just with constant sizes.
These applications include random permutation, list ranking, and tree contraction~\cite{shun2015sequential,blelloch2020optimal}; convex hull and Delaunay triangulation~\cite{blelloch2016parallelism,blelloch2018geometry,blelloch2020randomized}.
Although we do not improve the bounds of these algorithms, we show the interconnections among all these algorithms, and an additional angle to review these problems and algorithms.
}
\fi
\section{Experiments}
\label{sec:exp}
\begin{figure*}
	\centering
\small
\vspace{-1.5em}
	\begin{minipage}[h]{0.48\textwidth}
    \begin{tabular}{c@{}c}
      \includegraphics[width=.5\columnwidth]{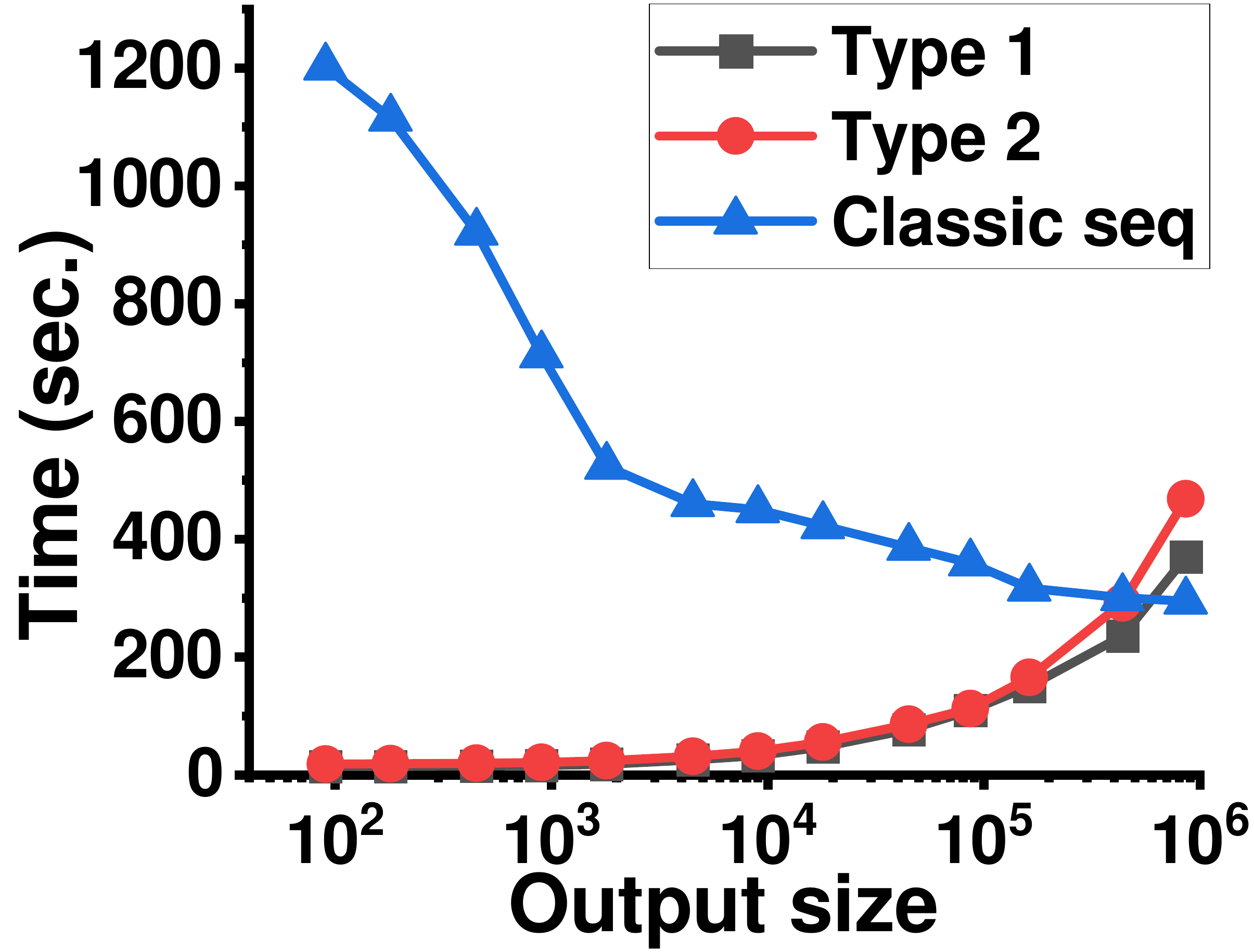} &
       \includegraphics[width=.5\columnwidth]{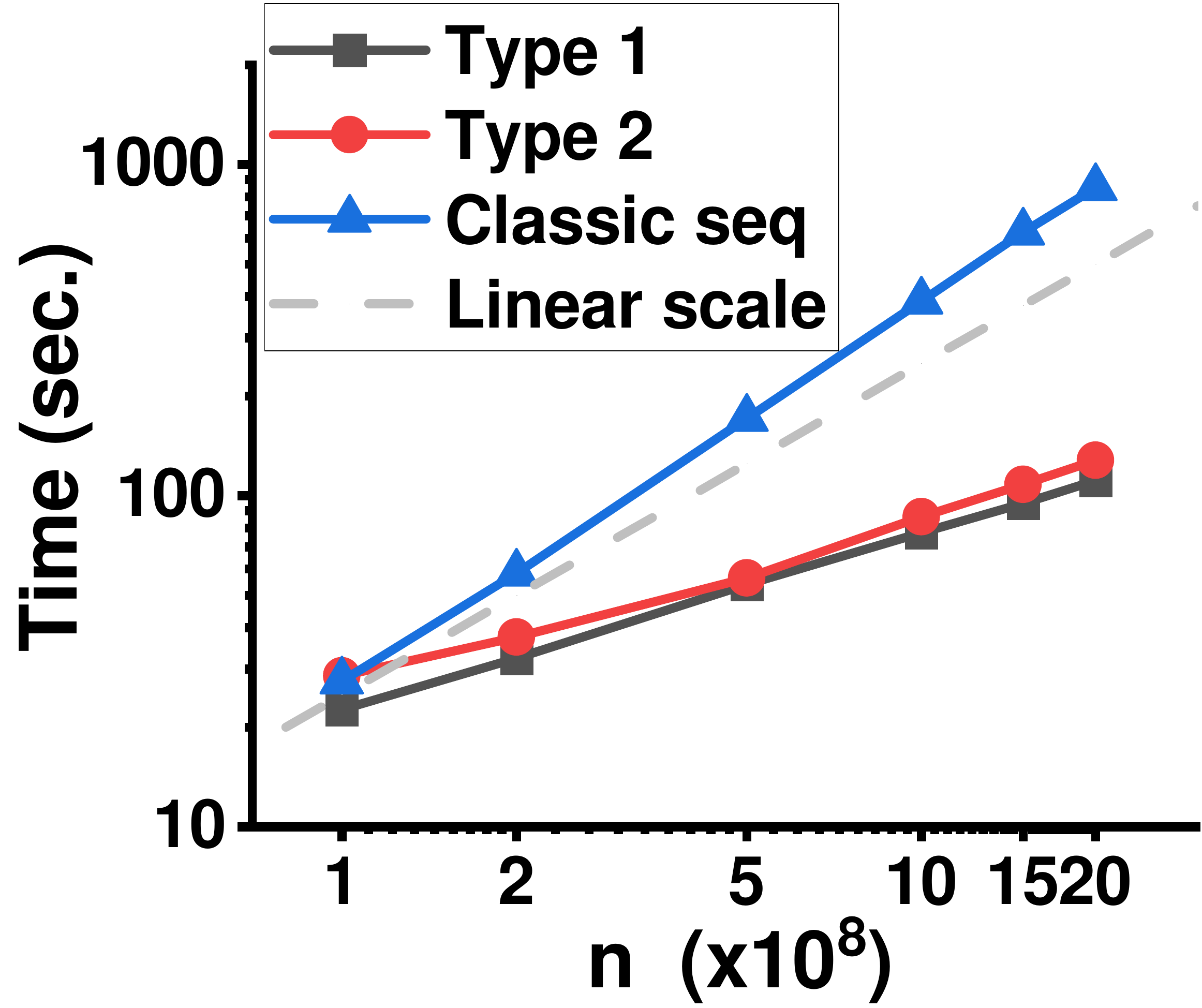}\\
      (a). $n=10^9$, varying rank & (b). $r=45000$, varying input size \\
    \end{tabular}
    \caption{\small\textbf{Experiments on activity selection.}
    (a). fix input size $n=10^9$ and vary the rank of the input.
    (b). fix rank $r=45000$ and vary the input size.
    ``Classic seq'' is the classic sequential DP algorithm.
    ``Linear scale'' shows the slope of linearly growing line.
    \label{fig:exp:activity}
    }
	\end{minipage}\hfill
	\begin{minipage}[h]{0.48\textwidth}
    \includegraphics[width=\columnwidth]{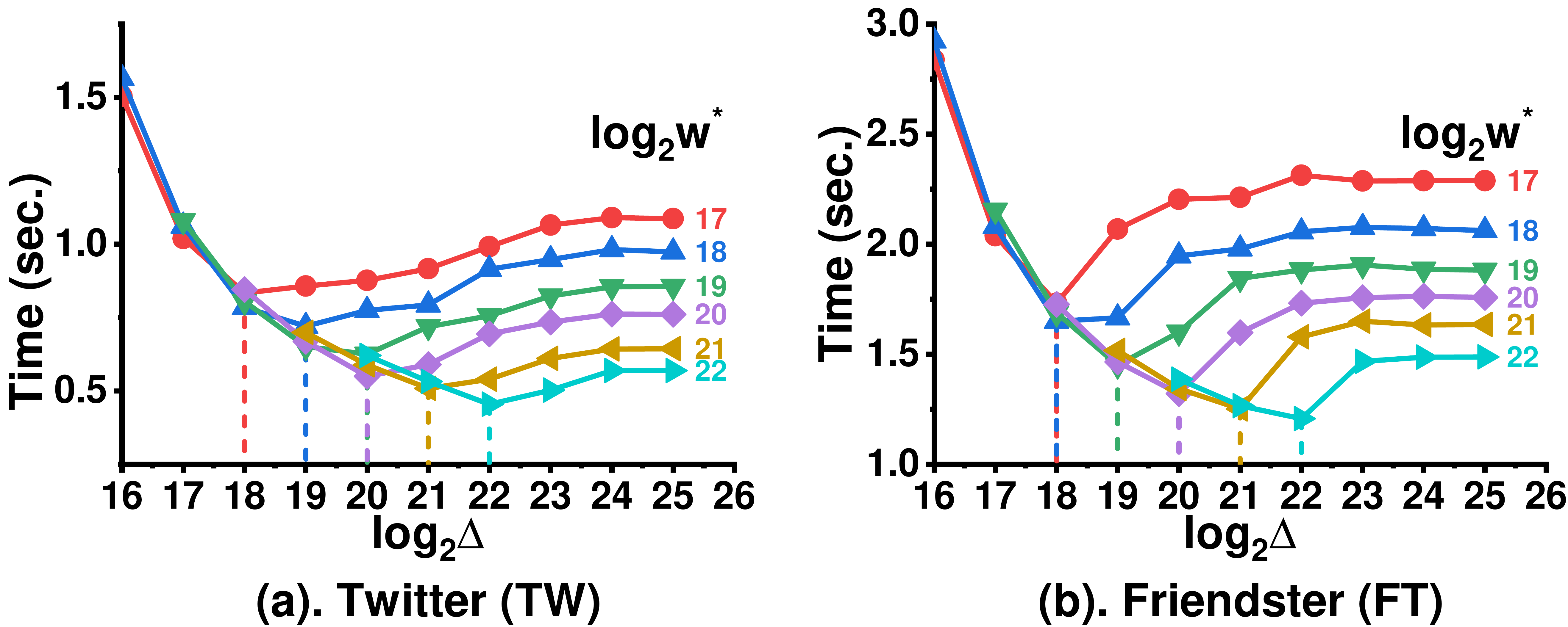}
    \caption{\small\textbf{Experiments on parallel SSSP.}
    (a). Running time on graph Twitter (41.7 millions of vertices and 1.47 billions of directed edges).
    (b). Running time on graph Friendster (65.6 millions of vertices and 3.61 billions of undirected edges).
    We use $\Delta$-stepping implementation from~\cite{rhostepping} with different values of $\Delta$ and $w^*$ (the smallest edge weight in the graph).
    Values on the right are the values of $\log_2 w^*$. The maximum edge weight is $2^{23}$.
    \label{fig:exp:sssp}
    }
	\end{minipage}
\vspace{-.2in}
\end{figure*}

In addition to the new theoretical results,
we also implemented many proposed algorithms based on our \phaseparallel{} frameworks, including activity selection (both Type 1 and Type 2), Huffman tree, SSSP, and LIS.
We use our experiments as proofs-of-concept to show how work-efficiency and round-efficiency affect performance in practice.
We run our experiments on a 96-core (192 hyperthreads) machine with four Intel Xeon Gold 6252 CPUs, and 1.5 TiB of main memory.
Our implementation is in C++ with Cilk Plus~\cite{Cilk11}.
For the parallel results, we use all cores and fully interleave the memory among NUMA areas using \textsf{numactl -i all}.
We use $r$ as the input rank.
All reported numbers are the averages of the last five runs among six repeated experiments.
\ifconference{
Due to page limitation, we put experiments and discussions about LIS and Huffman tree in Appendix \ref{app:exp}.
}
\fi


\hide{\begin{table}[hbtp]
\caption{System configuration}
\centering
\footnotesize
\begin{tabular}{| c | c |}
\hline
CPU & Four-way Intel$^\circledR$ Xeon$^\circledR$ 6252 24c48t@2.80GHz boost on all cores\\
\hline
 & 32KiB L1-d/i, 1MiB L2, 35.75MiB L3 cache in each socket\\
\hline
Memory & 48*32GiB, DDR4 2933MT/s\\
\hline
OS & CentOS 8 (Linux 4.18.0-240.22.1.el8\_3.x86\_64)\\
\hline
CompilerToolKit & GCC 7.5.0 (with builtin Cilk Plus)\\
\hline
\end{tabular}
\label{eval:config}
\end{table}
}


\subsection{Activity Selection}
We implement Type 1 and Type 2 algorithms for activity selection.
We also implement a sequential version based on the DP recurrence in \cref{eqn:activitydp} for comparison.
For each activity, we set a random start time and a length based on a truncated normal distribution.
We control the mean and standard deviation of this distribution to control the rank of the input data. The weights are generated uniformly at random in $[1,2^{32})$.

\cref{fig:exp:activity}(a) shows the running time of all tested algorithms on $10^9$ input activities with various ranks.
Both our Type 1 and Type 2 algorithms have very similar performance and outperform the classic sequential algorithm up to rank of $4\times 10^6$.
Note that the running time of our algorithm increases as the rank increases because our algorithms have span proportional to the rank.
Also, when the rank is large, each round in the algorithm only deals with a small number of objects, which also harms parallelism.
Even so, the running time of our algorithm seems to grow sublinearly with the rank.
This is because with about 200 threads,
the work should still dominate the cost, and both our algorithms are work-efficient.
Interestingly, the performance of the classic sequential algorithm improves with increasing input rank.
This is because in the sequential algorithm, when the rank is large,
the range query of an activity $x$ (see \cref{eqn:activitydp}) will likely find an activity close to $x$, which exhibits better cache locality.
For small ranks, our algorithms can be up to 80x faster than the sequential algorithm.
For rank of $O(\sqrt{n})$, our algorithm is still about 14x than the classic sequential algorithm.

\cref{fig:exp:activity}(b) shows the running time of all tested algorithms on different input sizes with a fixed rank of $r=45000$.
For other values of the rank, we see similar trends, so we just show $r=45000$ as an example.
Our algorithm scales well to large input sizes.
The sequential algorithm grows superlinearly with the input size $n$, which matches the theoretical cost $O(n\log n)$.
As shown in \cref{fig:exp:activity}(b), our algorithm grows much slower with $n$ (almost linearly).
This is because when $n$ increases, the number of activities to be processed within each round also increases, which overall enables better parallelism.
We believe that this indicates the potential of our algorithm to scale to even larger data and more threads.

On all tests, Type 1 algorithm outperforms the Type 2 algorithm by up to 35\%. This is because in the Type 2 algorithm, we need to first find
pivots for all activities, while in the Type 1 algorithm, we directly start with processing all ready activities, and use a range query to find the frontier.
Nevertheless, the two algorithms still have very similar performance and both outperform the classic sequential algorithm for reasonably large input rank.

\hide{
\zheqi{
\subsubsection{Activity selection}

The input of activity selection consist of $n$ tuples where each tuple $(s_i,e_i,w_i)$ denotes the start time, end time, and weight of $i$-th activity, respectively. First, we generate $s_i\forall i$ randomly in the uniform distribution. Next, we get $e_i$ by deciding the activity duration $e_i-s_i$. In real life, activities such as taking classes usually have a minimum duration but may be extended in some cases. Thus, we set a minimum duration $b$ for all activities and generate an extended length in certain distribution for each one. In our experiments, such length are in truncated normal distribution and the end time is $e_i=s_i+b+|\mathcal{N}(0,\sigma^2)|$. The activity weights are randomly generated since the order is based on comparison of start and end time and not affect by the weights.

\hide{Activity selection.
In our experiments, the input data is generated in this way. For input of weighted activity selection problem, we first specify the range of task time and partition the whole timeline into even $N$ segments. In each segment, supposing starting at $t_s$ and ending at $t_e$, we will generates a randomized integral
}

}
}
\iffullversion{
\subsection{Huffman Tree}
We present our results of Huffman tree construction in \cref{fig:huffman}. We test different input distributions (uniform, Zipfian, and exponential) and input sizes ($10^5$ to $10^9$). We also implement a standard sequential algorithm for comparison. For the sequential algorithm, 
we implement the version which costs $O(n)$ work after sorting.
To guarantee that the root (the total) frequency is within 64 bits, the frequency is set in $[1,2^{32})$. We change the maximal frequency to control the rank (Huffman tree height) of the input.

We show how running time changes with the number of rounds in \cref{fig:huffman}(a), and how running time changes with the input size in \cref{fig:huffman}(b).
In \cref{fig:huffman}(a) we set $n=10^9$, and change parameters in uniform distribution and exponential distribution to get different number of rounds. Note that we used Zipfian distribution as a fixed distribution, and thus we cannot change its parameters to control the number of rounds. In \cref{fig:huffman}(b) we fix the maximum frequency to be $1000$ for uniform and exponential distributions, and also show the running time for the sequential algorithm for comparison.
As mentioned, the maximum total frequency is within 64-bit integer. This means the height of the tree (and thus the number of rounds) cannot be more than 64. Just as claimed in \ifconference{\cref{app:huffman}}\else{\cref{sec:huffman}}\fi, although the worst-case span can be $\Theta(n)$, in practice, the precision of machines will limit the actual tree height in a small value. This actually enables very good parallelism.
Indeed, we noticed that the performance on different distributions, and even with different number of rounds, are very similar. This is because for all of the tests, the number of rounds is a small value between 30 to 60. Therefore, in every round parallelism is fully exploited. On sufficiently large data, our implementation is 10-20x faster than the sequential algorithm.
\begin{figure}[t]
	\centering
   \includegraphics[width=\columnwidth]{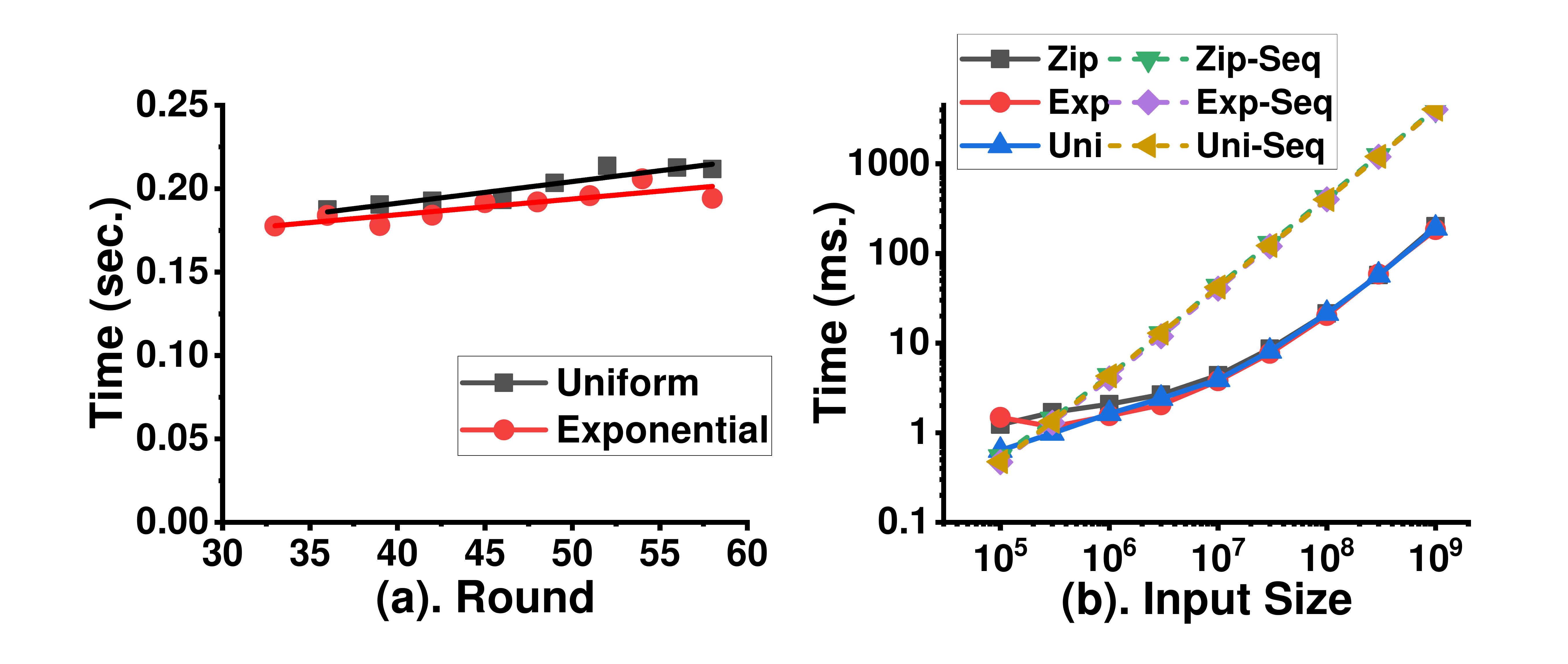}
	\caption{\small
   (a). Fixed input size as $n=10^9$, show the running time with the round varying from $33$ to $58$.
   (b). Fixed maximum frequency as $1000$, show the running time with the input size varying from $10^5$ to $10^9$.
   \label{fig:huffman}
   }
\end{figure} 
}
\fi
\subsection{Parallel SSSP}
\myparagraph{Background.} SSSP is a challenging problem in the parallel setting.
As mentioned, Dijkstra is work-efficient but hard to parallelize.
Bellman-Ford has better parallelism but significantly more work.
Even so, parallel Bellman-Ford has good performance on low-diameter graphs such as social networks~\cite{rhostepping,shun2013ligra} due to better parallelism.
In practice, many heuristics were proposed aiming to achieve a tradeoff between work and parallelism.
For example, the $\Delta$-stepping algorithm~\cite{meyer2003delta} 
determines the correct shortest distances of vertices in increments of $\Delta$ of the tentative distances.
In step $i$, the algorithm will find and settle down all the vertices with distances in $[i\Delta, (i+1)\Delta]$.
$\Delta$-stepping is highly practical and widely used. However, its performance is very sensitive to the parameter $\Delta$~\cite{rhostepping}.

\myparagraph{Our experiments.} As mentioned in \cref{sec:sssp}, our \phaseparallel{} algorithm settles down all vertices within
tentative distance $[iw^*,(i+1)w^*]$, where $w^*$ is the smallest edge weight. This is conceptually the same as using $\Delta=w^*$
in $\Delta$-stepping~\cite{meyer2003delta}. Therefore, we run experiments using the $\Delta$-stepping implementation by Dong et~al.~\cite{rhostepping} to test our idea.
We note that this is not exactly the same as our algorithm as the implementation does not use a tree-based data structure to extract the frontier\footnote{In fact,
almost none of the parallel SSSP implementation uses tree-based structures to maintain distances due to their worse cache locality than flat arrays.},
but is still ``work-efficient'' w.r.t. the number of total relaxations. We note that empirically,
the I/O cost in processing and relaxing edges is usually the main cost in practical parallel SSSP algorithms.
Our results highly match our theory. We tested two graph benchmarks, Twitter~\cite{kwak2010twitter} and Friendster~\cite{yang2015defining}.
Both of them are real-world large-scale social networks with small diameters.

In our experiments, we fix the largest edge weight as $w_{\mathit{max}}=2^{23}$, vary the $w^*$ from $2^{17}$ to $2^{22}$, and set the weight uniformly at random in this range for each edge\footnote{This setting is similar to \emph{weighted BFS}~\cite{dhulipala2017} (which is trivially work-efficient). The difference is that after normalizing, the edge weight in our problem are not necessarily integers as in weighted BFS.}.
For each edge weight range, we run $\Delta$-stepping with $\Delta$ varying from $2^{16}$ to $2^{26}$. We show the running time in \cref{fig:exp:sssp}.
For both graphs, the best choice of $\Delta$ almost exactly matches $w^*$ (differ by at most 2x), when $w^*$ is close to $w_{\mathit{max}}$.
This verifies the importance of work-efficiency in practical SSSP algorithms. When $w^{*}$ gets even smaller, using $\Delta=w^*$ does not perform well, which
is also as expected---despite work-efficiency, using a small $\Delta$ limits the frontier size, hence we cannot fully exploit parallelism. This also reveals the work-parallelism (or work-round) tradeoff. In all, when $w_{\mathit{max}}/w^*$ is within 32, using our algorithm (i.e., $\Delta=w^*$) gives reasonably good performance.

It is worth noting that we also tried the same algorithm on large-diameter graphs, such as some road graphs~\cite{roadgraph}.
Probably not surprisingly, even on $w^*=w_{\mathit{max}}/2$, $\Delta=w^*$ did not give the best performance. This is because on such graphs,
the frontier size is usually small, and the performance is usually limited by the lack of parallelism.
In this case, avoiding extra work does not help much in improving performance (and even harms the performance since the parallelism gets worse).
Many state-of-the-art implementations~\cite{rhostepping,zhang2018graphit} optimize performance on such (large-diameter) graphs by sacrificing more work to get better parallelism.

\subsection{Parallel LIS}\label{sec:exp-lis}
\ifconference{
We also implement our LIS algorithm. Due to page limitation, we present a short discussion here and show our results and more details in \cref{app:exp:lis}.
We test input data of $10^8$ with different ranks (LIS sizes). We also implement a standard sequential algorithm based on \cref{eqn:lisdp}.
Our algorithm is faster than the sequential algorithm up to rank $r=100$ and performs competitive but slower for large ranks.
Note that our algorithm has an $O(\log^2 n)$ overhead in work.
When the parallelism is not sufficient to compensate for the overhead in work, the performance may drop.
It is worth noting that our algorithm still gets good parallelism---for $r=O(\sqrt{n})$, the self-speedup is over~40.
We believe that our algorithm could scale (and show advantage) to more cores in the future.
We are also interested in improving the work bound.
We believe that if we can shave off an $O(\log n)$ term for work, then our algorithm can likely be faster on most of the input cases.
}
\else{
We also test our LIS algorithm. To improve the performance, we use nested arrays to represent augmented range trees to improve locality. We also use a heuristic when choosing pivots---instead of choosing a uniformly random pivot, we choose the right-most unfinished point as the pivot. This is based on an intuition such that points to the right are more likely to be processed in later rounds. We also implemented a standard $O(n\log n)$ sequential version.

\begin{figure}[t]
	\centering
	\small
	\begin{minipage}{0.45\textwidth}
		\begin{tabular}{c@{}c}
			\includegraphics[width=.45\columnwidth]{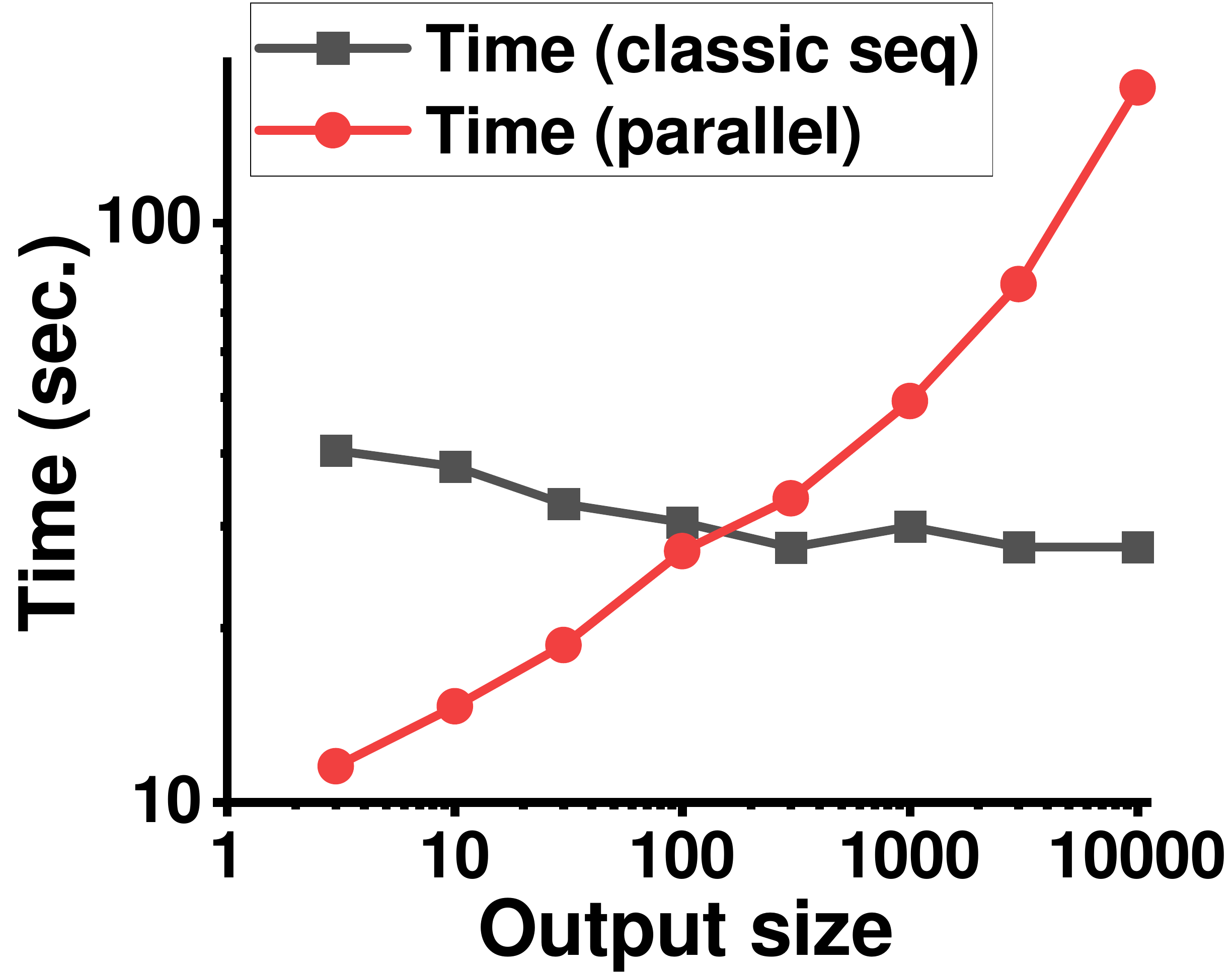} &
			\includegraphics[width=.5\columnwidth]{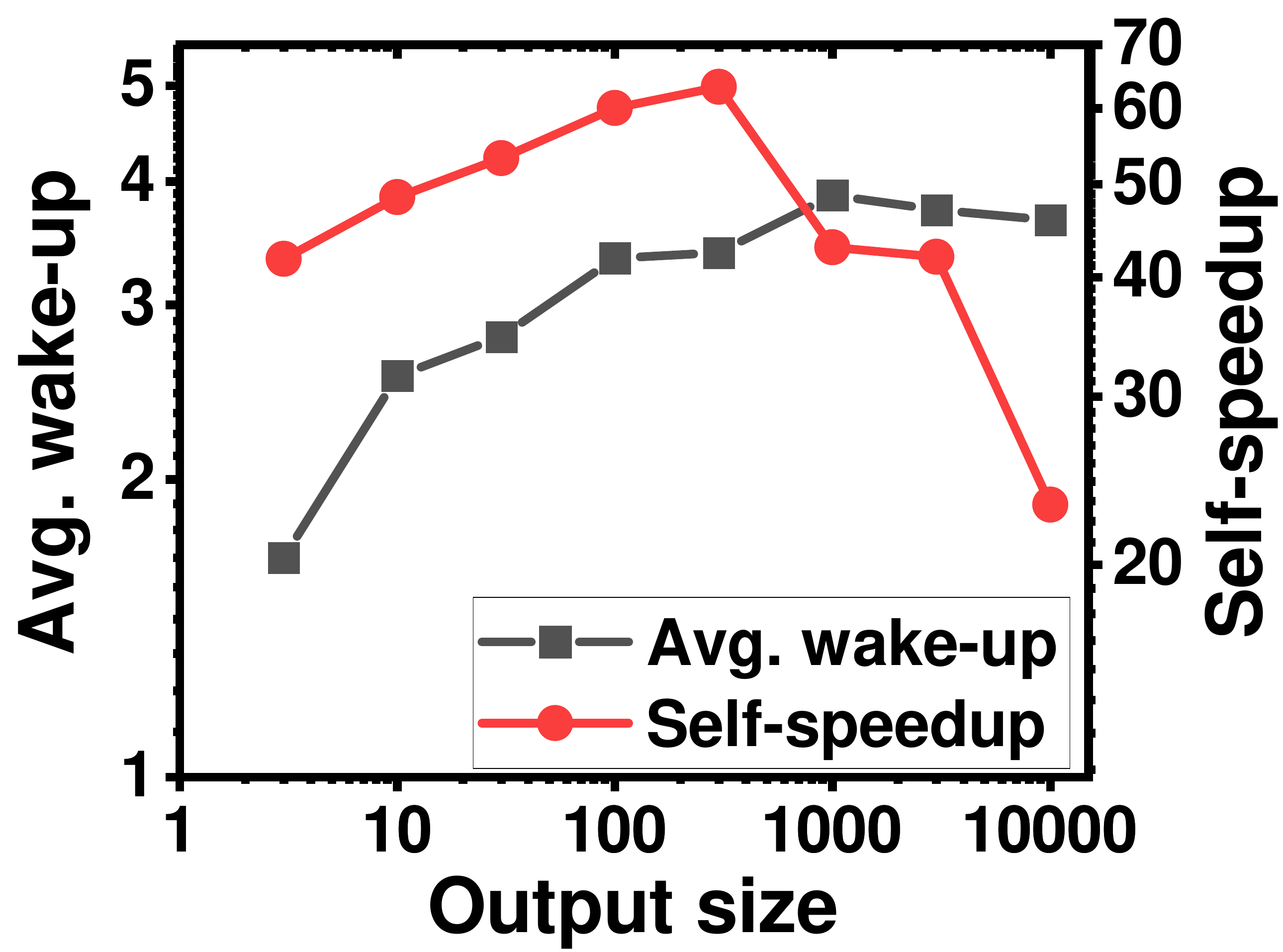}\\
			(a). Time & (b). Average wake-up and self-speedup  \\
		\end{tabular}
		\caption{\small\textbf{Experiments on LIS (segment).}
			We fix input size $n=10^8$ and use the \textit{segment pattern}, with varying the output size.
			``Classic seq'' is the classic sequential DP algorithm.
			\label{fig:exp:lis_segoffset}
		}
	\end{minipage}\hfill
	\\
	\begin{minipage}{0.45\textwidth}
		\begin{tabular}{c@{}c}
			\includegraphics[width=.45\columnwidth]{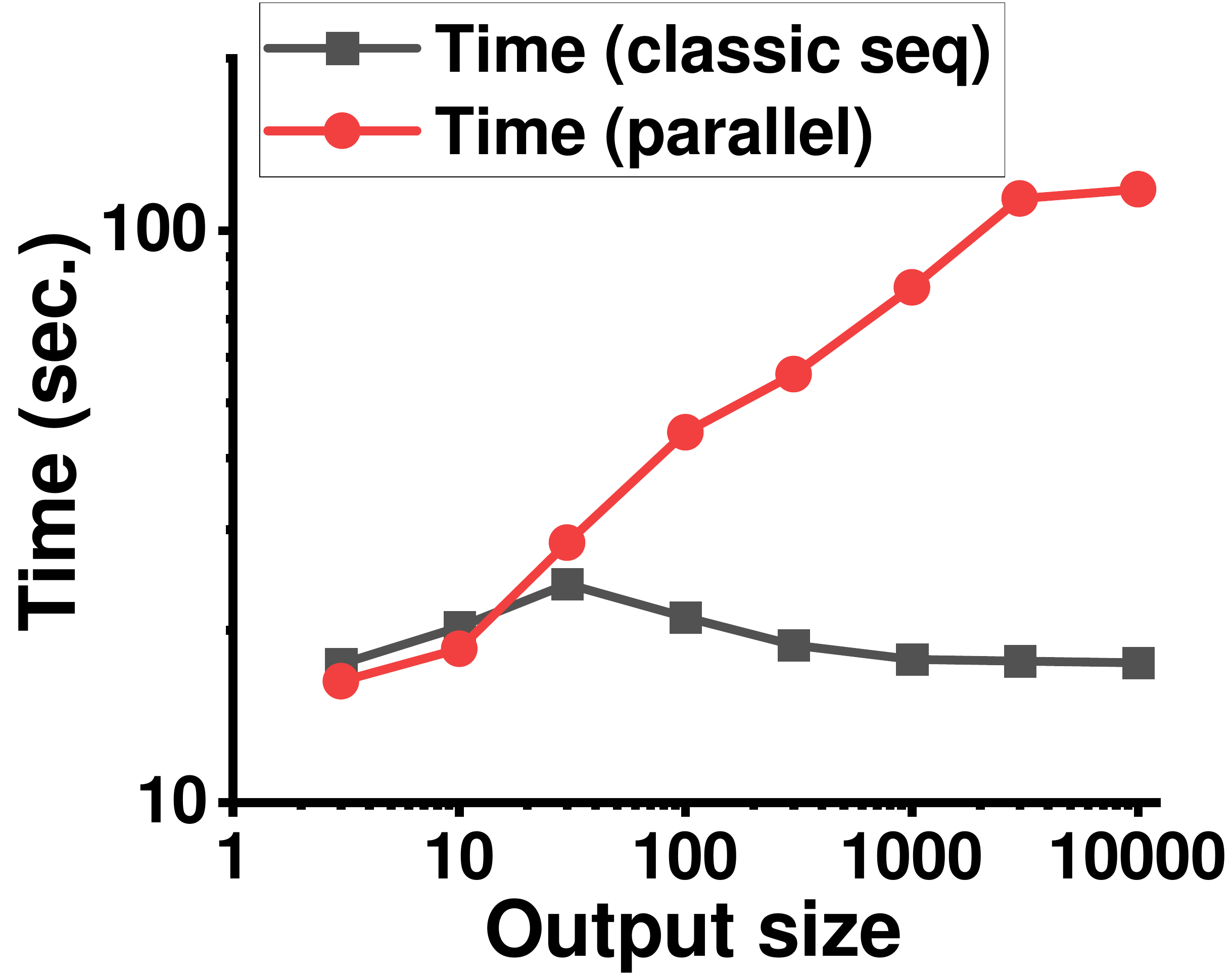} &
			\includegraphics[width=.5\columnwidth]{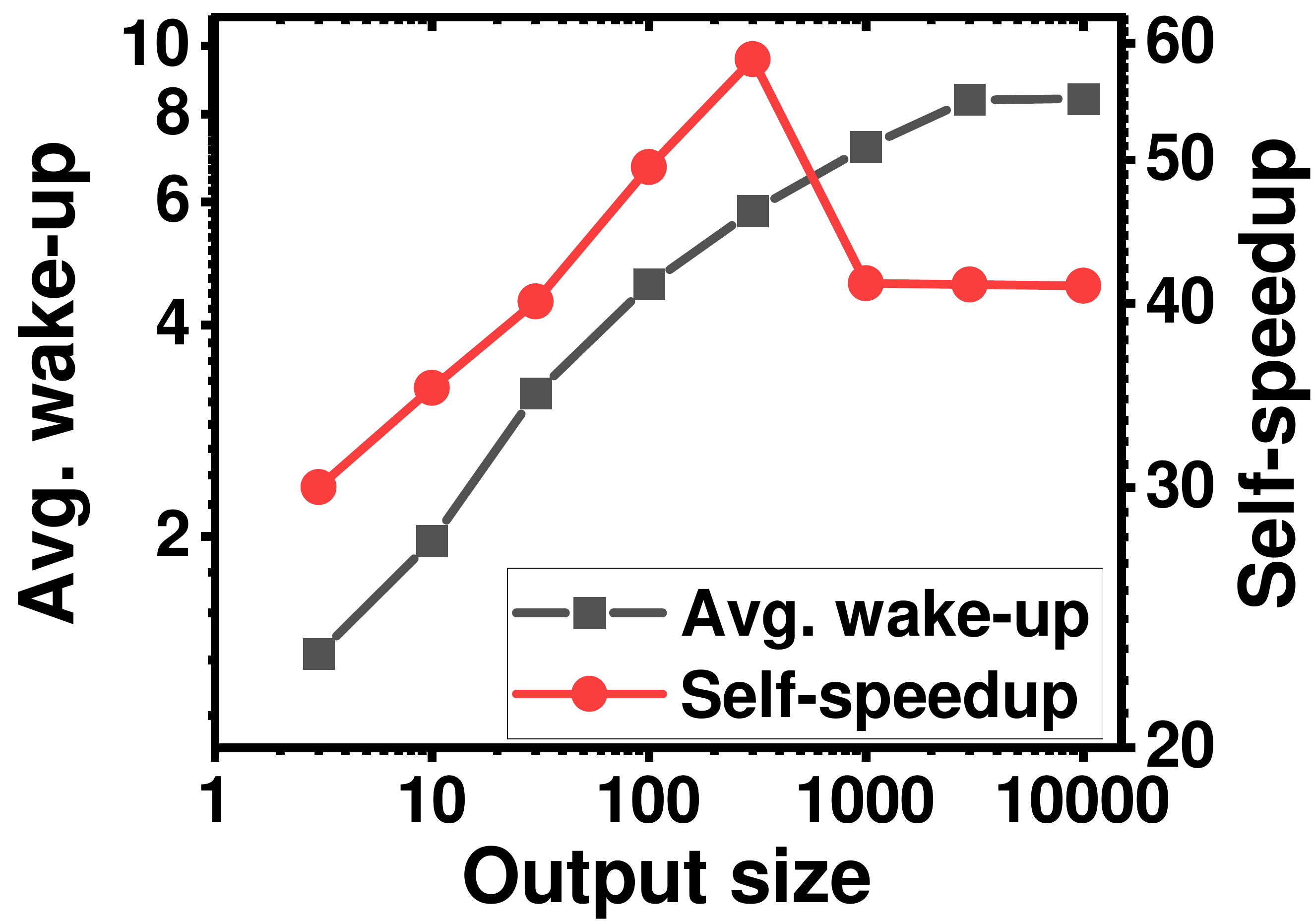}\\
			(a). Time & (b). Average wake-up and self-speedup  \\
		\end{tabular}
		\caption{\small\textbf{Experiments on LIS (line).}
			We fix input size $n=10^8$ and use the \textit{line pattern}, with varying the output size.
			``Classic seq'' is the classic sequential DP algorithm.
			\label{fig:exp:lis_line}
		}
	\end{minipage}\hfill
\end{figure}

We present results for LIS implementations in \cref{fig:exp:lis_segoffset} and \cref{fig:exp:lis_line}. We also vary the input rank, which is the LIS size.
We use two different data patterns (see \cref{fig:lis_pattern}). The first one is roughly $k$ segments of data, and we call it the \emph{segment pattern}. Within each segment the data values are roughly decreasing (we also added some random noise), and across the segments, the values are roughly increasing. The LIS size is about $k$. The other pattern is generated by using an increasing line $a_i=t\times i+b_i$, where $b_i$ is a random variable choosing from a uniform distribution. We call it the \emph{line pattern}. By changing the slope $t$ and the distribution of $b$, we can also control the rank of the input data.

We show our running time as well as the average number of wake-up attempts for all objects. On the two different data patterns, our algorithm is faster up to when the rank $r=300$, and perform worse than the sequential algorithm afterwards. Interestingly, similar to the activity selection, our algorithm is getting slower as the rank increases (which matches theory), but the standard sequential algorithm is getting faster. We believe the improvement in the sequential algorithm is also caused by better locality because the range query (in \cref{eqn:lisdp}) for an object $x$ will find an object close to $x$. The detailed experiment data is shown in \cref{tab:exp:lis}.

We note that our algorithm still show very good scalability---in most tests, our self-speedup is more than 40x. The poor performance when the rank is large comes from the work-inefficiency. Although the overhead is polylogarithmic, it can still be large ($\log^2 n$ is much more than the number of available processors on our machine). Indeed, the sequential running time (on one core) of our algorithm is much more than the standard sequential algorithm. Therefore, when $k$ is large, the parallelism cannot compensate the overhead due to work-inefficiency.

It is also worth noting that the average number of wake-ups is very small. In all our tests, the maximum value is 8 times, which is less than $\log n$ shown in \cref{lem:lislogn}. This is partially enabled by our heuristic. Especially for
the segment pattern, when the pivot is chosen as the right-most unfinished object, it is almost always the last blocking object to wait.

We believe our algorithm are scalable to more cores, but we are also interested in improving the work bound to closer to work-efficient. Reducing work-bound should be promising to improve practical performance.

\begin{table}[t]
\centering
\small
\begin{tabular}{rrrrrc}
\multicolumn{6}{c}{\textbf{The segment pattern}}\\
\hline
\multicolumn{1}{c}{Output}  &  \multicolumn{1}{c}{Classic} & \multicolumn{1}{c}{Ours} & \multicolumn{1}{c}{Ours} & \multicolumn{1}{c}{Spd.} & \multicolumn{1}{c}{Average \# of}\\
\multicolumn{1}{c}{Size}  &  \multicolumn{1}{c}{seq.}  & \multicolumn{1}{c}{seq.} & \multicolumn{1}{c}{par.} & \multicolumn{1}{c}{(Our $\frac{\text{seq.}}{\text{par.}}$)} & \multicolumn{1}{c}{Wake-ups}\\
\hline
3 &  40.49  & 482  &  11.54  & 41.78 & 1.67\\
10  &   37.96 & 709 & 14.64 & 48.47 & 2.55\\
30  &   32.78 & 994 & 18.67 & 53.26 & 2.79\\
100 &  30.42  & 1630 & 27.13 & 60.08 & 3.35\\
300 &  27.52  & 2116 & 33.49 & 63.20 & 3.39\\
1000  &   29.96 & 2850 & 49.24 & 42.98 & 3.87\\
3000  &   27.58 & 3292 & 78.44 & 41.97 & 3.74\\
10000 &  27.59  & 3960 & 171.25  &  23.12  & 3.66\\
\hline
\end{tabular}
\\
\begin{tabular}{rrrrrc}
\multicolumn{6}{c}{\textbf{The line pattern}}\\
\hline
\multicolumn{1}{c}{Output}  &  \multicolumn{1}{c}{Classic} & \multicolumn{1}{c}{Ours} & \multicolumn{1}{c}{Ours} & \multicolumn{1}{c}{Spd.} & \multicolumn{1}{c}{Average \# of}\\
\multicolumn{1}{c}{Size}  &  \multicolumn{1}{c}{seq.}  & \multicolumn{1}{c}{seq.} & \multicolumn{1}{c}{par.} & \multicolumn{1}{c}{(Our $\frac{\text{seq.}}{\text{par.}}$)} & \multicolumn{1}{c}{Wake-ups}\\
\hline
3 & 17.46 & 489  &  16.30  & 30.03 & 1.36\\
10  &  20.27  & 650  &  18.57  & 35.04 & 1.97\\
30  &  24.11  & 1141 & 28.44  &  40.11  & 3.20\\
100 & 21.10  &  2194  &  44.37 &  49.44  & 4.58\\
300 & 18.84 & 3281 & 56.07  &  58.51  & 5.82\\
1000  &  17.81  & 4053 &  79.50 &  41.27 &  7.18\\
3000  &  17.68  & 4683 & 113.72 & 41.18 & 8.39\\
10000  &  17.56  & 4852 & 118.06 & 41.10 & 8.41\\
\hline
\end{tabular}
\caption{\label{tab:exp:lis} \small
\textbf{The detailed experiment data on LIS.}
All times are in seconds.
}
\end{table}

\begin{figure*}[ht]
 \centering
\small
 \begin{minipage}{0.92\textwidth}
   \begin{tabular}{cccc}
     \includegraphics[width=.23\columnwidth]{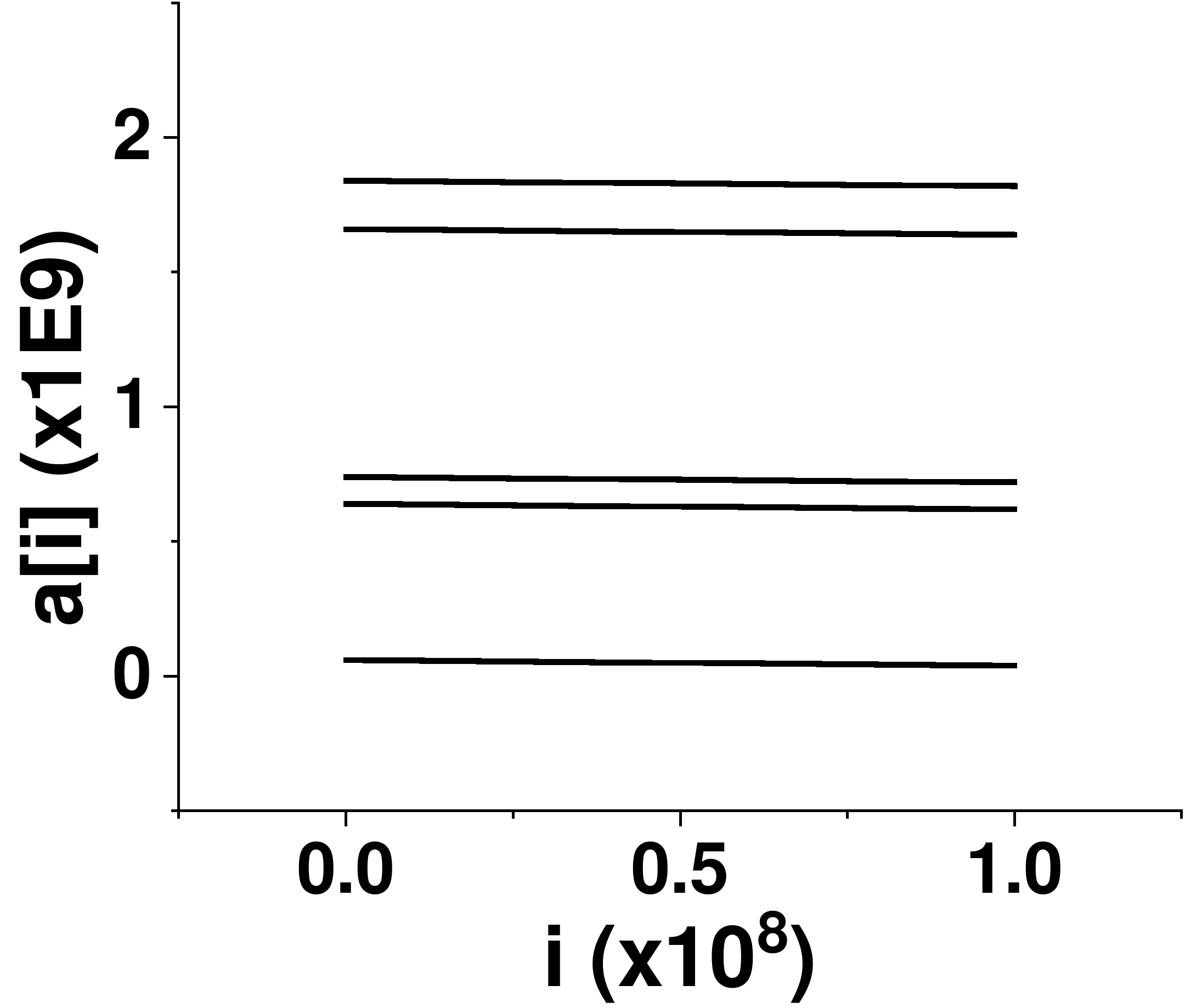} &
      \includegraphics[width=.223\columnwidth]{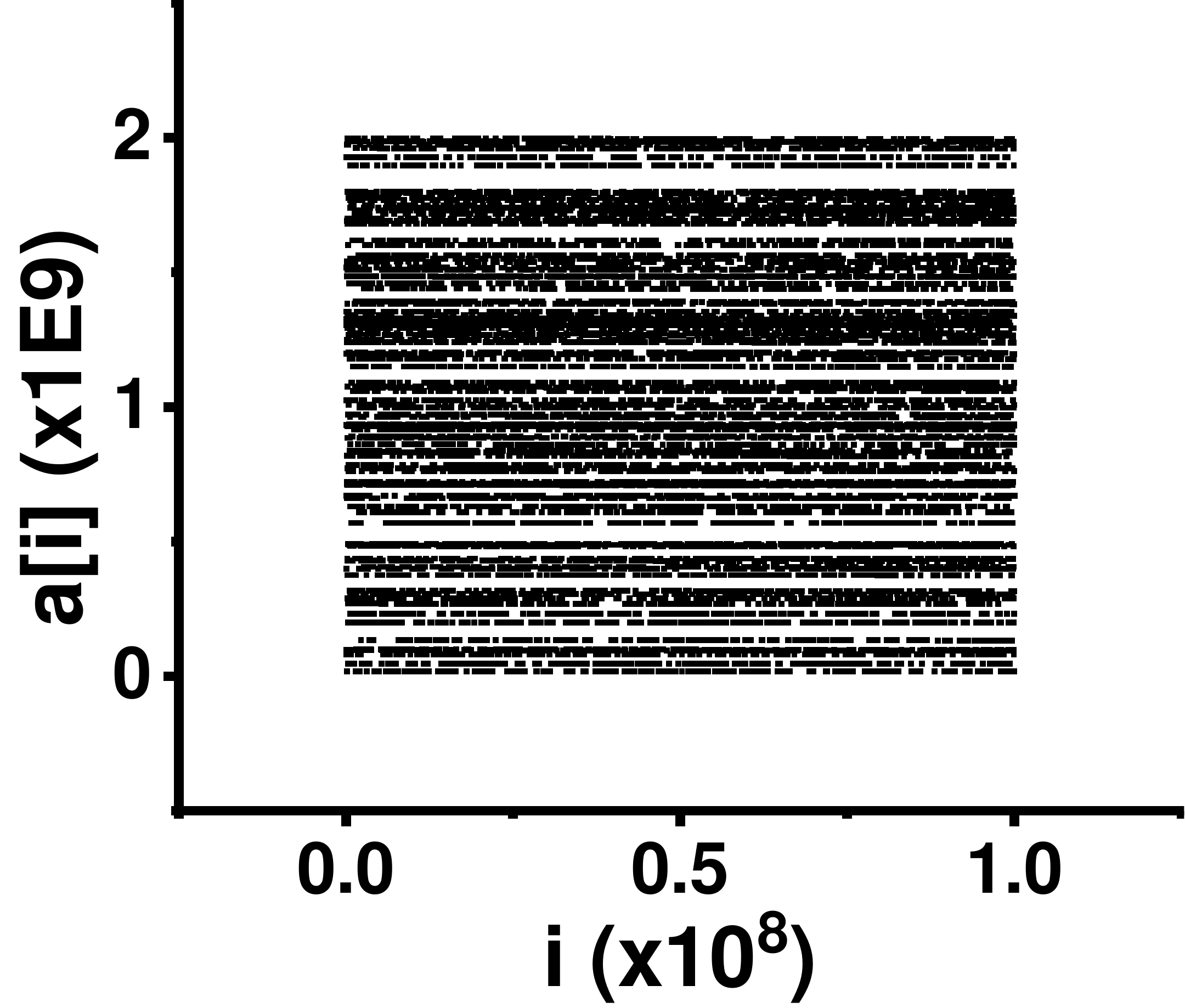}&
     \includegraphics[width=.27\columnwidth]{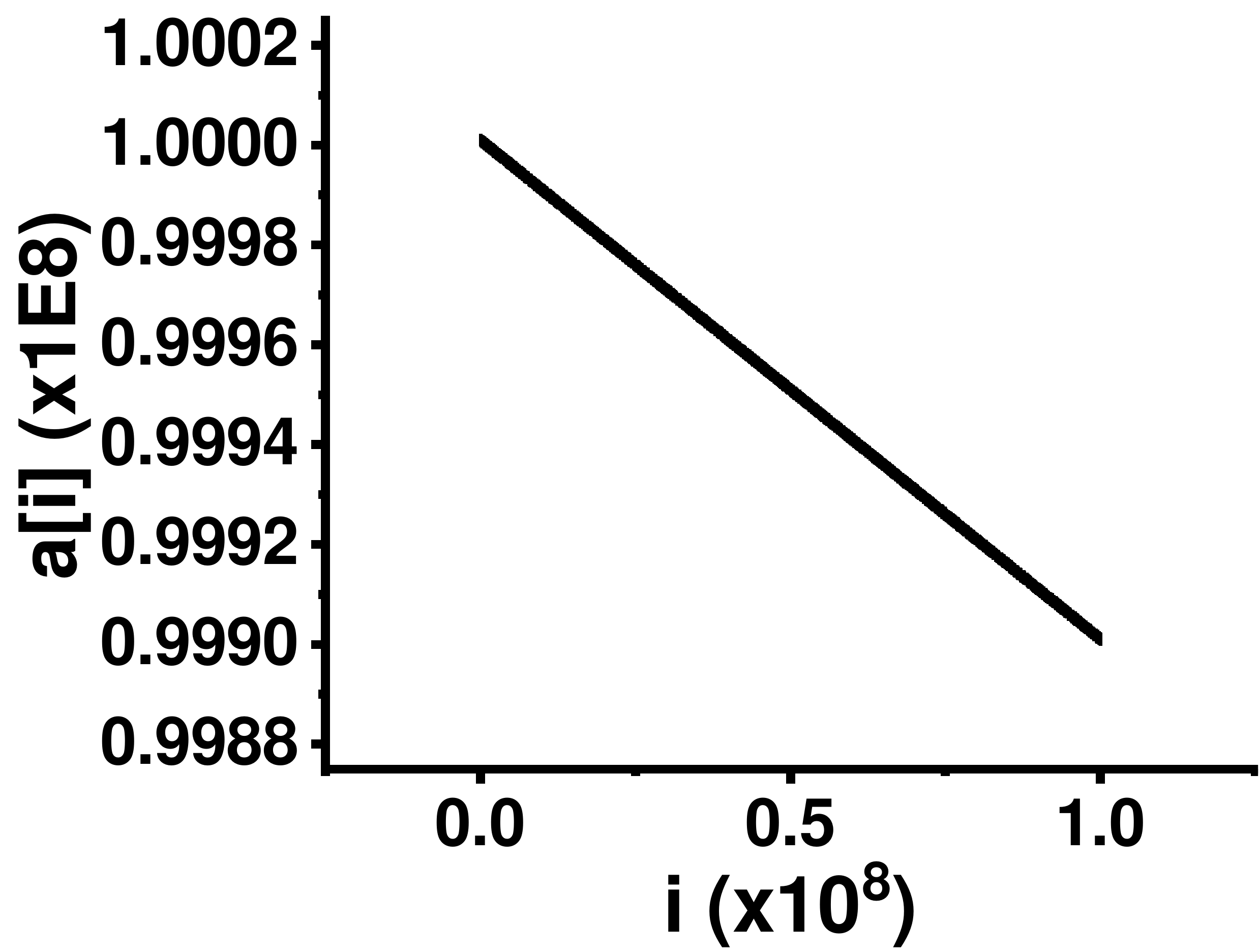} &
     \includegraphics[width=.27\columnwidth]{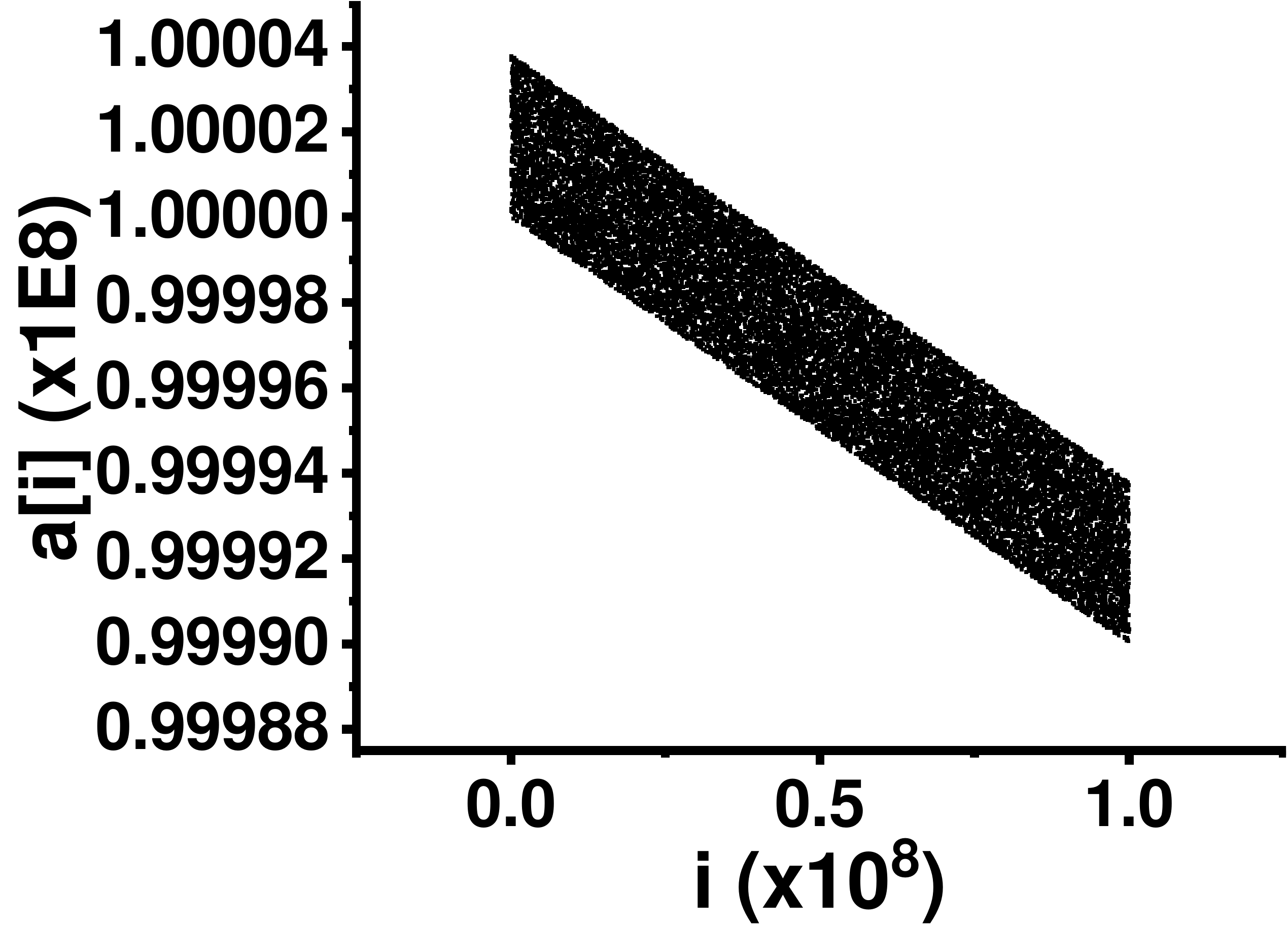} \\
     (a) & (b) & (c) & (d)  \\
   \end{tabular}
   \caption{\small\textbf{Examples to show input data patterns of LIS.} We fix input size $n=10^8$.
   (a). \textit{segment pattern}, with output size of $10$.
   (b). \textit{segment pattern}, with output size of $300$.
   (c). \textit{line pattern} with output size of $1000$.
   (d). \textit{line pattern} with output size of $3000$.
   \label{fig:lis_pattern}
   }
 \end{minipage}\hfill
\end{figure*} 
\fi

\section{Conclusion and Future Work}
In this paper, we used the \phaseparallel{} framework with general techniques to parallelize sequential iterative algorithms with certain dependences, and designed work-efficient and round-efficient algorithms for a variety of classic problems. Our results improved the previous theoretical bounds for many of them (e.g., the LIS and MIS algorithms). We also implemented these algorithms. Our results illustrated how work-efficiency and round-efficiency affect performance in practice (which matches our theory). For reasonable (not too large) input ranks,
our work-efficient algorithms achieved good parallelism and outperformed the sequential algorithms.
One interesting future direction is to reduce the $O(\log^2 n)$ overhead in the work of the LIS algorithm, which is also likely to improve its practical performance.


\section*{Acknowledgement}
This work is supported by NSF grant CCF-2103483.
\bibliographystyle{abbrv}
\bibliography{bib/strings,bib/main}

\appendix
\section{More Details for the Data Structures}
\label{app:datastructure}
\myparagraph{Parallel Augmented BST (\pabst{}) with Batch Operations.}
The base data structure we use for range queries is parallel augmented balanced binary search trees (\pabst{}). In particular, a PA-BST stores an ordered map of \element{s}, with \emph{keys} and \emph{values}. The \element{s} are sorted in the tree by the keys.
In addition, each tree node also maintains an \emph{augmented value}, which is some aggregate information about all \element{s} in its subtree.
The augmented value can be viewed as an abstract sum of the \element{s} in the subtree based on any associative operations.
This is maintained to support fast range sum queries. A simple example of the augmented value is to maintain the sum (or min/max) of values in the subtree.
In this case, reporting the sum of values (or min/max) in a certain key range can be performed in $O(\log n)$ time by observing at most $O(\log n)$ relevant
tree nodes and subtree augmented values.

To formally define the augmentation, we use the idea proposed in Sun et al.~\cite{sun2018pam}.
Suppose the map maintains key-value pair of type $K\times V$, we define the augmented value of type $A$ by two functions and an identity augmented value $I_A$.

\begin{itemize}
  \item A base function $g:K\times V\mapsto A$, which computes the augmented value of a single \element{} by its key and value.
  \item A combine function $f:A\times A\mapsto A$, which combines (adds) two augmented values into a new augmented value. It has to be associative.
  \item The identity $I_A\in A$ of $f$ on $A$.
\end{itemize}

In other words, $(A,f,I_A)$ is a monoid. In the previous value-sum example, the based function $g=(k,v)\mapsto v$, the combine function $f=(a_1,a_2)\mapsto a_1+a_2$, and $I_A=0$.

We can also perform parallel batch operations on \pabst{}~\cite{blelloch2016just,sun2018pam} (or general BST without augmentation). It has been shown that one can construct a \pabst{} of size $n$ in $O(n \log n)$ work and $O(\log^2 n)$ span\footnote{This is dominated by the parallel sorting on $n$ elements. Using the sorting algorithm in \cite{blelloch2020optimal}, we can also achieve $O(n \log n)$ work in expectation and $O(\log n)$ span \whp{}.}, and flatten a \pabst{} into a sorted array in $O(n)$ work and $O(\log n)$ span.
One can also perform batch operations including multi-find, multi-insert, multi-delete, and multi-update in $O(m\log (n/m))$ work and $O(\log n\log m)$ span on a tree of size $n$ and sorted batch of size $m\le n$~\cite{blelloch2016just,sun2018pam}.
One can also combine (compute the union of) two \pabst{} of size $n$ and $m\le n$ in $O(m\log (n/m))$ work and $O(\log n)$ span.
All these algorithms use a simple divide-and-conquer scheme to parallelize the operations, and the details can be found in~\cite{blelloch2020optimal}.
The algorithms also properly update the augmented values if needed.

Using a \pabst{}, we can report the augmented value of any key range in $O(\log n)$ work.

\myparagraph{Parallel Nested BSTs.} We can also nest the parallel BSTs to build a two-level index as a multi-map. 

The elements will be ordered by keys and multiple entries with the same key will all be kept in the data structure.
In this case, we will first build a parallel BST, referred to as the \emph{primary tree}, indexed on the key. All \element{s} with the same key will be organized as another BST (indexed on the values), referred to as the \emph{secondary tree} associating with the corresponding key in the outer tree. Both the inner tree and the outer tree can be further augmented. The space needed to store $n$ \element{s} in a nested BST is $O(n)$. Note that this requires the value type to support a comparison function. This is true for the applications used in this paper.

We can extend the parallel algorithms on BSTs to the nested BSTs using a similar divide-and-conquer algorithm. A batch of multi-find, multi-insert, multi-delete, and multi-update can be performed in $O(m\log (n/m))$ work and $O(\log n\log m)$ span on a tree of size $n$ and sorted batch of size $m$~\cite{blelloch2016just,sun2018pam}.

\myparagraph{Parallel Augmented 2D Range Trees.} Some algorithms in this paper needs a \defn{2D (augmented) range query}. In particular, for a set of points on a 2D planar, the 2D augmented range query asks for the ``augmented value'' of all points in a given rectangle, where the ``augmented values'' are defined, similarly to \pabst{}, based on any associative operation.
This query can be answered by a \defn{2D Range Tree} with an appropriate augmentation.
A 2D range tree is a 2-level tree where the \emph{outer tree} is an index of the x-coordinates of the points.
Each tree node maintains an \emph{inner tree}, which stores the same set of points in its subtree, but is an index of the y-coordinates of these points.
This essentially can be understood as a \pabst{} where the inner trees are the augmented values, and thus the combine function is a parallel union on the two inner trees~\cite{sun2019parallel}. Each inner tree can be further augmented based on the application.
For example, if the inner tree is augmented with the maximum value of each subtree, then the range tree could answer any rectangle-max query (report the maximum value for all points in a given rectangle) in polylogarithmic time. This is exactly the query needed in our parallel LIS algorithm.

To formally define a range tree, one needs to specify the comparison function for the two dimensions ($<_x$ and $<_y$), respectively. If augmentation is needed, the corresponding \emph{base} and \emph{combine} functions also need to be defined to indicate how the augmented values of points should be combined.
The augmented value of any rectangle can be reported in $O(\log^2 n)$ time, for a dataset of $n$ points.

Note that the 2D range tree is \emph{not} a 2-level nested tree as defined above. The space usage of a range tree on $n$ points is $O(n\log n)$.

\ifconference{
\section{Additional Proofs}
\subsection{Proof of Lemma \ref{lem:activity} for the Activity Selection Algorithm}
\label{app:activityproof}

\subsection{Proof of Lemma \ref{lem:backwardlogn} for the LIS Algorithm}
\label{app:lis}
\par

\subsection{Proof of Theorem \ref{thm:miscost} for the MIS Algorithm}
\label{app:mis:proof}
\par

}
\fi

\ifconference{
\section{Background of Parallel SSSP}
\label{app:sssp}

}
\fi

\ifconference{
\section{Unlimited Knapsack}
\label{app:knapsack}

}
\fi

\ifconference{
\section{Huffman Tree}
\label{app:huffman}

}
\fi

\section{Whac-A-Mole Problem}
\label{app:mole}
Whac-A-Mole problem is a similar problem to the LIS problem. The setting is about a line of holes\footnote{A more commonly-seen setting used in real games is to have a 2D grid of holes. Here for simplicity we use 1D number line as an example. We note that our idea also apply to the 2D setting.}, from which moles can pop up out over time.
Assume there are $n$ moles sorted by their popping-up time, and the $i$-th mole will appear at time $t_i$ and coordinate $p_i$.
The moles only stay for a unit of time, and will disappear at time $t_i+1$.
The player will control a hammer and whack the mole when its above ground.
The hammer can struck a mole $i$ when it is also at $p_i$ at time $t_i$. Moving the hammer from position $p$ to $p'$ takes $|p-p'|$ time.
The goal is to hit the maximum number of moles.

A simple solution to this problem is to use a sequential iterative algorithm using dynamic programming. Let $dp[i]$, called the \dpvalue{} at mole $i$, be the number of moles one can whack up to the $i$-th mole, and must hit the $i$-th one. The algorithm will process all moles based the order of $t_i$. The DP recurrence of this problem is

\begin{equation}\label{eqn:moledp}
  D[i]=\max_{j<i,|p_j-p_i|\le |t_j-t_i|}D[j]+1
\end{equation}

We can define a later mole $y$ is compatible with mole $x$ when one can hit $y$ after hitting $x$. We can easily verify this problem is \phaseparallel{}.
In fact, the structure is similar to the LIS problem. The rank of a mole $x$ is the maximum number of moles one can hit ending with $x$.

Note that the condition in \cref{eqn:moledp} $|p_j-p_i|\le |t_j-t_i|$ can be transformed to

\begin{align}
  t_j+p_j &< t_i+p_i \\
  t_j-p_j &< t_i-p_i
\end{align}

If we view each mole $i$ as a point on 2D planar with coordinate $(t_i+p_i,t_i-p_i)$, then all the other moles that a mole $x$ relies on are in a 2D rectangle range.
This again is the same as LIS problem. Using the similar pivoting idea, we can achieve $O(n\log^3 n)$ work and $O(\rank(S)\log^2 n)$ span, where $\rank(S)$ is the rank of all input objects, which is also the maximum number of moles one can hit for the entire input.

When extending the setting to 2D grid instead on 1D number line, the problem requires a 3D range query, which adds up an extra $O(\log n)$ factor to both work and span.

\ifconference{
\section{Graph Coloring, Matching and Other Algorithms}
\label{app:graphcoloring}

}
\fi

\ifconference{
\section{More Experimental Results}
\label{app:exp}

\subsection{Huffman Tree}
\label{app:exp:huffman}

\subsection{Experiments on LIS}
\label{app:exp:lis}
\myparagraph{Implementation Details.}

}
\fi
\hide{
\begin{table}[t]
\centering
\small
\begin{tabular}{rrrrrc}
\multicolumn{6}{c}{\textbf{In the segment pattern}}\\
\hline
\multicolumn{1}{c}{Output}  &  \multicolumn{1}{c}{Classic} & \multicolumn{1}{c}{Ours} & \multicolumn{1}{c}{Ours} & \multicolumn{1}{c}{Spd.} & \multicolumn{1}{c}{Average \# of}\\
\multicolumn{1}{c}{Size}  &  \multicolumn{1}{c}{seq.}  & \multicolumn{1}{c}{seq.} & \multicolumn{1}{c}{par.} & \multicolumn{1}{c}{(Our $\frac{\text{seq.}}{\text{par.}}$)} & \multicolumn{1}{c}{Wake-ups}\\
\hline
3 &  40.49  & 482  &  11.54  & 41.78 & 1.67\\
\hline
10  &   37.96 & 709 & 14.64 & 48.47 & 2.55\\
\hline
30  &   32.78 & 994 & 18.67 & 53.26 & 2.79\\
\hline
100 &  30.42  & 1630 & 27.13 & 60.08 & 3.35\\
\hline
300 &  27.52  & 2116 & 33.49 & 63.20 & 3.39\\
\hline
1000  &   29.96 & 2850 & 49.24 & 42.98 & 3.87\\
\hline
3000  &   27.58 & 3292 & 78.44 & 41.97 & 3.74\\
\hline
10000 &  27.59  & 3960 & 171.25  &  23.12  & 3.66\\
\hline
\end{tabular}
\\
\begin{tabular}{rrrrrc}
\multicolumn{6}{c}{\textbf{In the line pattern}}\\
\hline
\multicolumn{1}{c}{Output}  &  \multicolumn{1}{c}{Classic} & \multicolumn{1}{c}{Ours} & \multicolumn{1}{c}{Ours} & \multicolumn{1}{c}{Spd.} & \multicolumn{1}{c}{Average \# of}\\
\multicolumn{1}{c}{Size}  &  \multicolumn{1}{c}{seq.}  & \multicolumn{1}{c}{seq.} & \multicolumn{1}{c}{par.} & \multicolumn{1}{c}{(Our $\frac{\text{seq.}}{\text{par.}}$)} & \multicolumn{1}{c}{Wake-ups}\\
\hline
3 & 17.46 & 489  &  16.30  & 30.03 & 1.36\\
\hline
10  &  20.27  & 650  &  18.57  & 35.04 & 1.97\\
\hline
30  &  24.11  & 1141 & 28.44  &  40.11  & 3.20\\
\hline
100 & 21.10  &  2194  &  44.37 &  49.44  & 4.58\\
\hline
300 & 18.84 & 3281 & 56.07  &  58.51  & 5.82\\
\hline
1000  &  17.81  & 4053 &  79.50 &  41.27 &  7.18\\
\hline
3000  &  17.68  & 4683 & 113.72 & 41.18 & 8.39\\
\hline
10000  &  17.56  & 4852 & 118.06 & 41.10 & 8.41\\
\hline
\end{tabular}
\vspace{-.1in}
\caption{\label{tab:exp:lis} \small
\textbf{The detailed experiment data on LIS.}
All times are in seconds.
}
\end{table}
}

\balance

\end{document}